\documentclass[11pt]{article}

\usepackage[utf8]{inputenc}
\usepackage[T1]{fontenc}
\usepackage[english]{babel}
\usepackage{lmodern}
\usepackage{xcolor}

\usepackage[colorlinks=False]{hyperref}  

\usepackage{comment}

\usepackage{amsmath}  
\usepackage{amsfonts,dsfont}
\usepackage{amssymb}
\usepackage{amsthm}
\usepackage{mathtools}
\usepackage{thmtools}
\usepackage{bm} 
\usepackage{ytableau} 

\numberwithin{equation}{section} 

\newlength{\bredde}
\def\slash#1{\settowidth{\bredde}{$#1$}\ifmmode\,\raisebox{.15ex}{/}
\hspace*{-\bredde} #1\else$\,\raisebox{.15ex}{/}\hspace*{-\bredde} #1$\fi}
\DeclarePairedDelimiter{\floor}{\lfloor}{\rfloor} 

\renewcommand{\epsilon}{\varepsilon} 

\newcommand*{\dif}{\mathop{}\!\mathrm{d}} 

\newcommand{\be}{\begin{equation}}
\newcommand{\ee}{\end{equation}}

\newcommand{\U}{{\rm U}\,}

\usepackage{extpfeil}
\newextarrow{\xrightrightarrows}{{5}{8}{0}{0}}
{\bigRelbar\bigRelbar{\bigtwoarrowsleft\rightarrow\rightarrow}}

\DeclareMathOperator{\sign}{sgn} 
\DeclareMathOperator{\Pf}{Pf} 
\newcommand*{\hconj}[1]{{#1}^{\dagger}} 

\declaretheorem[numberwithin=section]{proposition}
\declaretheorem[numberlike=proposition]{definition/proposition}
\declaretheorem[numberlike=proposition]{theorem}
\declaretheorem[numberlike=proposition]{lemma}
\declaretheorem[numberlike=proposition]{corollary}

\theoremstyle{definition}
\declaretheorem[numberlike=proposition]{example}
\declaretheorem[numberlike=proposition]{definition}

\makeatletter
\newcommand*{\doublerightarrow}[2]{\mathrel{
  \settowidth{\@tempdima}{$\scriptstyle#1$}
  \settowidth{\@tempdimb}{$\scriptstyle#2$}
  \ifdim\@tempdimb>\@tempdima \@tempdima=\@tempdimb\fi
  \mathop{\vcenter{
    \offinterlineskip\ialign{\hbox to\dimexpr\@tempdima+1em{##}\cr
    \rightarrowfill\cr\noalign{\kern.5ex}
    \rightarrowfill\cr}}}\limits^{\!#1}_{\!#2}}}
\makeatother

\begin{document}
\title{
Derivative relations for determinants, Pfaffians and 
characteristic polynomials in random matrix theory
}
\author{{\sc Gernot Akemann$^{1}$, Georg Angermann$^{1}$, Mario Kieburg$^{1,2}$, 
Adrian Padellaro$^{1,3}$}\\~\\
$^1$Faculty of Physics, Bielefeld University, PO-Box 100131, D-33501 Bielefeld, Germany\\
$^2$School of Mathematics and Statistics, University of Melbourne, 813 Swanston Street,\\ Parkville, Melbourne VIC 3010, Australia \\
$^3$International Centre for Theoretical Sciences-TIFR, Shivakote, Hesaraghatta Hobli,\\ Bengaluru North 560089, India
}

\date{}

\maketitle

\begin{abstract}

Explicit expressions are proven for derivatives of the ratio of a
determinant or Pfaffian determinant and a Vandermonde determinant. Such ratios
appear for example in general group integrals of
Harish-Chandra--Itzykson--Zuber type and in expectation values of
products of characteristic polynomials in random matrix theory.
In the latter case we start from known results for general
non-Hermitian and Hermitian ensembles for expectation values without
derivatives, at finite matrix size.
They are given in terms of
the determinant or Pfaffian of the corresponding kernel,
for unitary or orthogonal and symplectic ensembles, respectively.
Several equivalent expressions are proven for general ratios  of
determinants, starting from first order derivatives containing 
the Borel transform of the corresponding matrix or kernel. Higher order
derivatives
are expressed as sums over partitions containing
determinants of derivatives of these,
with coefficients given in terms of combinatorial expressions.
Our most general result is valid for mixed higher order derivatives of
ratios of determinants in several variables.
This generalises previous findings, e.g. for mixed moments in specific
ensembles of random matrices, relevant in applications to the Riemann
$\zeta$-function.  Applications of our results to several examples are
presented, including the complex Ginibre ensemble and the circular unitary ensemble.

\end{abstract}

\noindent
keywords:
derivatives of determinants, characteristic polynomials, non-Hermitian random matrices, Kostka numbers, Borel transform

\begin{center}

{\it Dedicated to Uzy Smilansky on the occasion of his 85th birthday}

\end{center}


\newpage

\section{Introduction}\label{intro}

Determinantal and Pfaffian formulae are ubiquitous in random point
processes in one and two dimensions when they are integrable. This is
partly due to multi-dimensional integration formulas of Andr\'eief \cite{Andreief} and
de Bruijn~\cite{DeBruijn1955}, and their generalisations~\cite{KG09a}, as well as in the
classical group integrals of Harish-Chandra--Itzykson-Zuber~\cite{HC,IZ}, Berezin-Karpelevich~\cite{BK} and
Gelfand-Naimark type~\cite{GN}. 
Our main motivation for this work comes from such expressions
originating from expectation values of products of characteristic
polynomials of random matrices.
Such correlation functions have seen a very rapid development~\cite{AKW22,ABGS21,KW1,KW2,AGKW24,SimmWei,Keating-etal}, driven by
the following applications. Keating and Snaith proposed characteristic
polynomials as a tool in the statistical study of
non-trivial zeros of the Riemann $\zeta$-function~\cite{KS00a} and more
general Dirichlet $L$-functions~\cite{KS00b}. Here, specifically random
matrices from the compact groups ${\rm U}(N)$~\cite{KS00a}, ${\rm USp}(2N)$ and
${\rm O}(N)$~\cite{KS00b} appear, integrated over the respective Haar measure.
    Almost in parallel, Hughes~\cite{Hughes} proposed a combination with
the first derivative of the characteristic polynomial 
to gain further insights.

Secular determinants also appear in the study of chaotic systems, where
the Riemann-zeta function again comes into play. Based on the old
suggestion, that the existence of a Hermitian quantum Hamiltonian may
shed some light on the Riemann hypothesis, it was proposed by Berry
\cite{Berry86} that the Riemann-Siegel formula has a semi-classical
interpretation in quantum chaotic systems, nowadays called
Riemann-Siegel look-alike formula.
It has been prominently advocated by Keating, Smilansky and coworkers
\cite{Keating92,KS99} that such periodic orbit representations would
become key in proving local random matrix statistics, as it was
eventually achieved in \cite{Haakegroup}, see also
\cite{GS2007,Haake2010} for references and a full account.

Expectation values of characteristic polynomials found an
interpretation as partition functions with non-zero quark masses in
Euclidean Quantum Chromodynamics  in 4 (QCD4) \cite{JacEd} and 3 dimensions (QCD3) \cite{JacIsmail}, in an
appropriate low energy limit for the Dirac operator eigenvalues.
In addition to chiral symmetry, again ensembles with unitary, orthogonal
and symplectic symmetry appear, depending on the underlying gauge group
and its representation~\cite{Jac3fold}. Here, typically the average is
taken over Gaussian random matrices with independently distributed
complex, real or quaternion random variables as elements, respectively. However, due
to universality, such restrictions on the choice of distribution become immaterial
in the limit of large matrix dimension $N\to\infty$.
Applications of such massive partition functions include the
derivation of sum rules for low energy Dirac operator eigenvalues in
QCD4 \cite{JacEd} and QCD3 \cite{JacIsmail}, and the generation of spectral correlators from
degenerate masses called replicas, see \cite{DV,ADDV,SV04} and references therein in the context of QCD, as well as \cite{NK01,EK02} in the context of non-Hermitian random matrices. 

Further motivation comes from mathematics and the relation to the
Gaussian free field  and random geometry~\cite{WebbWong}. Here, real
powers of characteristic polynomials are considered, in the
asymptotic large-$N$ limit. A relation to integrable systems
shows up via Painlev\'e functions~\cite{EK02,DeanoSimm}.
To that purpose, the duality between an integer number of
characteristic polynomials and the matrix dimension $N$ plays a crucial
role, cf. \cite{PeterDual} for recent review, which is also known in other contexts due to the supersymmetric
approach to random matrix theory.
It is not easy to give credit to all contributions in this very active
field of research, cf. \cite{Winn,AliAltug,SSD,SerebryakovSimm,Alvarez-Snaith,Cooper-Snaith, SimmWei,Kivimae2025,Keating-etal} and references therein.
We would like to mention, however, a recent branch of random matrix fields where topological properties of the spectra have been analysed~\cite{BHWGG,HKGG1,HKGG2,HKGG3}. In those expressions, one of the authors of the present work came across similar limits of derivatives of ratios of determinants or Pfaffians and  Cauchy determinants. This is due to the consideration of ratios of characteristic polynomials for which such algebraic structures have been already derived before, e.g., see~\cite{BS,KG09a,KG09,AP,Bergere}.

What is the challenge in computing the expectation value of derivatives of
products of characteristic polynomials? The general formulas known from
random matrix theory without derivatives are of the form of a
determinant of the respective kernel of orthogonal polynomials, divided
by a Vandermonde determinant of the arguments of the characteristic
polynomials in the case with unitary symmetry~\cite{BDS,AV03}. For ensembles with
orthogonal or symplectic symmetry we have a Pfaffian determinant
instead, containing the anti-symmetric kernel of skew-orthogonal
polynomials \cite{AB06,KG09,AKP}. It also happens that some mixture of both symmetries might appear, resulting in a Pfaffian~\cite{AN,Mixing,APS}, as it comprises the determinant formulas as well. 
By definition, expected characteristic polynomials as well as their
derivatives are polynomials. However, this is not obvious in terms of
ratios of (Pfaffian) determinants. We will present several strategies to
remove the Vandermonde determinants in the denominator, in order to be
able to obtain polynomial expressions after differentiating these.

An additional motivation for us is the universality of such expressions,
that is the independence of the distribution of random matrix elements
within a given symmetry class. It is expected to hold when taking the
large-$N$ limit. However, most of the expressions derived for
derivatives so far are for specific ensembles, e.g., in the Riemann
$\zeta$-function context  for the Circular Unitary Ensemble (CUE) over
${\rm U}(N)$ (see~\cite{SimmWei,GMN}).
One of our main goals is thus to derive formulae that hold for a given
 symmetry class (unitary, orthogonal or symplectic) at finite-$N$,
for a general distribution of matrix elements,
specified by the weight for the (skew-) orthogonal polynomials. Such
expressions then allow to directly
take the corresponding large-$N$ limits on the respective (skew-)
kernels.

The present work is organised as follows. In 
Subsection~\ref{subsec:setup} we briefly describe the setup for the
application of our results to  expectation values of products of
characteristic polynomials in random matrix theory. Further details are
given in Appendix~\ref{App:results-OUSE}. The reader familiar with this setup 
may skip directly to Section~\ref{sec:DdetPf}, where our main results
are presented. Subsection~\ref{subsec:GenMixed} starts with the
most general result, for mixed derivatives of different degrees at
different arguments.
Theorem~\ref{thm:main.result} provides derivatives of different order at
different points of a Pfaffian over a Vandermonde determinant. It is given by the product
of certain differential operators acting on the Pfaffian of transformed
matrix elements, which is no longer divided by a Vandermonde. Corollary~\ref{cor:special results} yields the corresponding result for a determinant instead of a Pfaffian. These results are general and even apply to expressions which may not be related to averages of characteristic polynomials at all.

In Subsection~\ref{subsec:Borel}, we consider cases that correspond  to
mixed moments of characteristic polynomials of zero-th and first
derivative, all at the same argument.
In Corollary~\ref{cor: factoring derivatives} our main result
applied to the unitary class simplifies to ordinary derivatives acting
on the determinant of a Borel transformed kernel.
These derivatives can be carried out, resulting to a double sum over
determinants of derivatives of the new kernel.
A similar Corollary~\ref{thm: factoring derivatives pfaffian} gives the
corresponding result for the class of Pfaffian determinants.
These corollaries generalise existing results in the literature to
general kernel functions.

Subsection~\ref{subsec:Kostka} is devoted to mixed derivatives
of general, higher order, at one or two points. The derivation follows a
different, representation theoretic/combinatorial route, using Schur functions and
Kostka numbers.
Theorem~\ref{Thm-gen-det} gives the determinantal case, while~\ref{Thm-gen-pfaffian} and~\ref{Thm-gen-symplectic-abs} give the
Pfaffian cases for one and two points, respectively.

In Section~\ref{sec:App-mixed}, we give two examples for the application 
to characteristic polynomials. The infinite dimensional limit of the complex Ginibre ensemble~\cite{Ginibre} is discussed in Subsection~\ref{subsec:Gin} where several statements are proven in Appendix~\ref{apx:GinUE}.  In Subsection~\ref{subsec:CUE},  we discuss the case of the CUE, where the known generalised Bessel-kernel is understood from the
Borel transformation of the initial kernel.
Section~\ref{sec:proofs} collects the proofs of our main theorems and in
Section~\ref{conclusio} we summarise our findings and present some open
problems.

\subsection{Motivation: averages of characteristic polynomials of
random matrices}\label{subsec:setup}

In this subsection, we present the setup for computing expectation values of characteristic polynomials and their derivatives, focusing on non-Hermitian random matrices, which serves as our motivation for the results we have found. Let us again emphasise that the results we present do not depend on the details of the kernel of the determinant or Pfaffian considered. They are not restricted to averages of characteristic polynomials. Thus, they can equally apply to elliptic ensembles, chiral symmetry breaking ensembles, sums and/or products of random matrices. 

Let us turn back to averages of characteristic polynomials and their derivatives. Below we will work with the point processes of random matrix eigenvalues rather than with the matrix representation, where characteristic polynomials are given in terms of determinants. The reason is convenience, in keeping track of normalisation constants. 

We begin with the partition function $Z_N$ or normalising constant, that is given by an $N$-fold integral over the complex eigenvalues $z_1,\ldots,z_N$ of the joint eigenvalue distribution function (jpdf) $\mathcal{P}_N$ in the corresponding ensemble,
\begin{equation}
\label{partZN}
Z_N:=\prod_{j=1}^N\int_\mathbb{C}d^2z_j \mathcal{P}_N(z_1,\ldots,z_N)\ .
\end{equation}
Expectations values of a function $\mathcal{O}:\mathbb{C}^N\to\mathbb{C}$ of the eigenvalues  will be denoted by
\begin{equation}
\label{vev}
\langle \mathcal{O}\rangle:= \frac{1}{Z_N}\prod_{j=1}^N\int_\mathbb{C}d^2z_j \mathcal{P}_N(z_1,\ldots,z_N) \mathcal{O}(z_1,\ldots,z_N)\ .
\end{equation}
The most relevant example for our motivation is the characteristic polynomial $\mathcal{O} = \det(\chi\mathbf{1} - J)$ leading to
\begin{equation}
\label{DNdef}
    D_N(\chi):=\prod_{j=1}^{N} (\chi-z_j)\ , \quad \chi\in\mathbb{C}\ ,
\end{equation}
where $z_j$ are the complex eigenvalues of the random matrix $J$.
We denote its first respectively $p$th order derivative by $D_N(\chi)^\prime$ respectively $\partial_\chi^pD_N(\chi)$. 
A popular object of study are so-called mixed moments, for example
\begin{equation}
\label{AN1def}
    \left\langle |D_N(\chi)|^{2h} |D_N(\chi)^\prime|^{2k-2h} \right\rangle\ , 
\end{equation}
where the powers take nonnegative integer values $h=0,1,\ldots$ and $k=h,h+1,\ldots$ only.
This can be generalised to expectation values of higher derivatives 
\begin{equation}
    \label{1ptGenDerivs}
    \left\langle D_N(\chi)^{m(0)} (D_{N}(\chi)')^{m(1)} \ldots (\partial_{\chi}^{d} D_N(\chi))^{m(d)} \right\rangle, 
\end{equation}
where the multiplicities $m(p)$ determine the power of the $p$th derivative of the characteristic polynomial.
In Section~\ref{subsec:GenMixed} we further generalise to the case of several points $\chi_1,\dots,\chi_L$.

Let us turn to the random matrix ensembles for which we will compute the above quantities. 
We call those ensembles, where the jpdf is given by
\begin{equation}
\label{PNU}
\mathcal{P}^{\rm (U)}_N(\vec z):= \frac{|\Delta_N(\vec z)|^2 \prod_{j=1}^N w(z_j)}{N!\prod_{j=0}^{N-1}h_j}\quad{\rm with}\ \vec z=(z_1,\ldots,z_N),
\end{equation}
 ensembles with unitary symmetry (U).
Typically, the Vandermonde determinant $\Delta_N$ results from the Jacobian of a diagonalising unitary transformation.
However, there are ensembles for which no matrix representation is available, see e.g. \cite{AEP,KK}. 
The Vandermonde determinant as a function of the set of $N$ variables $\vec z=(z_1,\ldots,z_N),$ is defined as
\begin{equation}
\label{Vander}
\Delta_N(\vec z):=\prod_{1\leq k<j\leq N} (z_j-z_k)=\det\left[z_j^{k-1}\right]_{j,k=1}^N\ .
\end{equation}
For $N=0,1$, here and throughout this work the empty product is defined as $1$.
For the weight function $w(z)$ on the complex plane (that may be supported on part of $\mathbb{C}$ only)  we assume that all moments exist, 
\begin{equation}
\label{mom}
\int_\mathbb{C}d^2z_j w(z) z^j \bar{z}^k<\infty\quad \mbox{for}\ j,k=0,1,\ldots
\end{equation}
Here, $\bar{z}$ denotes the complex conjugate of $z$. 
Therefore, the corresponding planar orthogonal polynomials $P_j(z)$ exist and can be constructed via Gram--Schmidt, and their corresponding polynomial kernel is denoted by $\mathfrak{K}_N(z,\bar{z})$, see Appendix \ref{sec:AppA} for more details.
Examples within this symmetry class include, for example,  the complex Ginibre ensemble, its generalisations, the chiral counterpart, or truncated unitary random matrices. For more examples and references see the recent volume~\cite{PeterSungsoo}.

In the ensembles with unitary symmetry, expectation values of products of characteristic polynomials follow from the theorem in~\cite{AV03}
\begin{equation}
\label{productUk}
\left\langle \prod_{j=1}^kD_N(\chi_j) \overline{D_N(\xi_j)}\right\rangle_{\rm U}=C_{N,k}^{\rm (U)}
\frac{\det[\mathfrak{K}_{N+k}(\chi_a,\bar{\xi}_b)]_{a,b=1,\ldots,k}}{\Delta_k(\vec\chi)\Delta_k(\vec{\bar{\chi}})}\ .
\end{equation} 
In Appendix \ref{sec:AppA} the constant $C_{N,k}^{\rm (U)}$ and  the kernel $\mathfrak{K}_{N+k}$ of planar orthogonal polynomials is given, and the more general statement for an unequal number of $k$ characteristic polynomials and $\ell$ conjugated ones with $\ell\leq k$ is reviewed. 
In the Hermitian limit these are no longer distinct, leading to several equivalent identities, depending on how the product is split into $k$ and $\ell$ factors, see \cite{AV03}.

Clearly, to compute the mixed moments in \eqref{1ptGenDerivs} (or more generally in \eqref{LptGenDerivs} below) in the unitary symmetry class, a number of derivatives of the expression \eqref{productUk} has to be taken with respect to a subset of the variables $\chi_j$ and $\bar{\xi}_j$, before setting them equal, $\chi_j=\xi_j=\chi$. 
This can be achieved when applying our results to be presented in the next Section \ref{sec:DdetPf}.

We turn to non-Hermitian 
random matrix ensembles with symplectic symmetry (S). They are defined by having the corresponding jpdf given by \cite{Ginibre,PeterSungsoo}
\begin{equation}
\label{PNS}
\mathcal{P}^{\rm (S)}_N(z_1,\ldots,z_N):=
\prod_{1\leq k<j\leq N} |z_j-z_k|^2|z_j-\bar{z}_k|^2\prod_{j=1}^N w(z_j)|z_j-\bar{z}_j|^2=|\Delta_{2N}(\vec{z},\vec{\bar{z}})|^2 \prod_{j=1}^N w(z_j).
\end{equation} 
Here, the eigenvalues of a matrix $J$ with $N^2$ quaternion entries come in complex conjugated pairs, and the jpdf is proportional to the squared modulus of a Vandermonde determinant of size $2N$ \cite{Kanzieper01}.
For this reason, in the symplectic ensembles a single characteristic polynomial is always of the following product form\footnote{For a random matrix $J$ with quaternionic elements of size $N\times N$, in complex representation of size $2N\times 2N$, 
we have 
$\det[\chi-J]=\prod_{j=1}^{N} (\chi-z_j)(\chi-\bar{z}_j)$ in terms of the $N$ complex conjugate eigenvalue pairs $(z_j,\bar{z}_j)$.}
\begin{equation}
\label{DN-SE}
D_N(\chi) = \prod_{j=1}^{N} (\chi-z_j)(\chi-\bar{z}_j)\ .
\end{equation}
This is in contrast to the single product~\eqref{DNdef} in the unitary ensembles. 
In particular, here the product~\eqref{DN-SE} is real for $\chi\in\mathbb{R}$. 
Imposing the existence of all moments in these ensembles gives rise to planar skew-orthogonal polynomials $q_j$ that can be constructed in analogy to Gram-Schmidt, see~\cite{AKP, AEP}.
The corresponding polynomial skew-kernel which is anti-symmetric is then denoted by $\kappa_N(u,v)=-\kappa_N(v,u)$, and we refer again to Appendix~\ref{sec:AppA} for more details, including the skew-norms $r_j^{\rm (S)}$. The expectation value of products of the form~\eqref{DN-SE} has been computed in~\cite{AB06} and reads for $k$ even
\begin{equation}
\label{productSk1}
\left\langle \prod_{j=1}^k
    D_N(\chi_j)
\right\rangle_{\rm S}=
\left\langle \prod_{j=1}^k
    \left(\prod_{l=1}^N (\chi_j - z_l)(\chi_j - \bar{z}_l)\right)
\right\rangle_{\rm S}=
C_{N,k}^{\rm (S)}
\frac{\Pf[\kappa_{N+k/2}(\chi_a,{\chi}_b)]_{a,b=1,\ldots,k}}{\Delta_k(\vec\chi)}\ .
\end{equation} 
For the formula with an odd number $k$ of products see Appendix  \ref{sec:AppA}, where an extra row and column appears. Notice that this expectation value is holomorphic in all variables $\chi_j$. 
A formula with the same structure holds for ensembles with orthogonal symmetry, see~\cite{AKP}. The only difference is that because of the characteristic polynomial having real coefficients, on the left-hand side the eigenvalues come in complex conjugate pairs or are real.
Once again it is clear how to obtain the derivative of such characteristic polynomials on the left-hand side, and for this operation on the right-hand side our result in Section  \ref{sec:DdetPf} applies.

When choosing $k=2m$ to be even and $\chi_{j+m}=\bar{\chi}_j$ for $j=1,\ldots,m$, we obtain as a corollary of~\eqref{productSk1} an expression for the product of the modulus square of characteristic polynomials, valid for any $m$ even or odd~\cite{A05}. 
In particular, for $m\in\mathbb{N}$ and $\chi_1,\ldots,\chi_m\in \mathbb{C}$ pairwise distinct, it holds for the orthogonal respectively symplectic class ${\rm C=O,S}$:
\begin{equation}
\label{productSk2}
\left\langle \prod_{j=1}^{m}|D_N(\chi_j)|^2\right\rangle_{\rm C}=
\frac{(-1)^{m}\prod_{j=N}^{N-1+m}r_j^{\rm (C)} 
}{\Delta_{2m}(\vec\chi,\vec{\bar{\chi}})}
\Pf
\begin{bmatrix}
\kappa_{N+m}^{\rm (C)} (\chi_a,{\chi}_b)&\kappa_{N+m}^{\rm (C)} (\chi_a,\bar{\chi}_b)\\
\kappa_{N+m}^{\rm (C)} (\bar{\chi}_a,{\chi}_b) & \kappa_{N+m}^{\rm (C)} (\bar{\chi}_a,\bar{\chi}_b)\\
\end{bmatrix}
_{a,b=1,\ldots,m}\ .
\end{equation}
Derivatives of this equation are a special case of our general expression for products of derivatives, \eqref{LptGenDerivs} below, with $L=2m$, $d_l=m_l(1)=1$ and $m_l(0)=0$ for all $l=1,\dots,L$, to which our main result \eqref{main.result} applies. 

We would like to mention that formulas like~\eqref{productSk1} can be also found for the orthogonal class (O) where one considers real matrices. Then, the eigenvalues $z_j$ can be real or come in complex conjugate pairs. Even more generally such Pfaffian structures are not alone restricted to the symplectic or orthogonal class. In lattice Quantum Chromodynamics two of the authors came across ensembles that have a unitary symmetry group but exhibit Pfaffian structures as well, see~\cite{AN,KVZ}. In this case a mixture of orthogonal and skew-orthogonal polynomials appear~\cite{Mixing}. General expressions for averages of products and even ratios of characteristic polynomials of even larger classes have been derived in~\cite{BS,KG09a,KG09}, where very similar expressions as in~\eqref{productUk} and~\eqref{productSk1} hold.

\section{Main results: Derivative relations for determinants and Pfaffians}\label{sec:DdetPf}

Our main results are presented in three different subsections.
Subsection \ref{subsec:GenMixed} starts with the most general result for general higher order derivatives at different points, acting on a Pfaffian determinant of an anti-symmetric matrix over a Vandermonde determinant. It results in a differential operator acting on a Pfaffian of the integral transformed matrix. Setting certain 
matrix elements to zero immediately implies the same result for determinants.
Next, Subsection \ref{subsec:Borel} applies to first derivatives at a single point only, both for determinants respectively Pfaffians. It it is taylored to later applications to characteristic polynomials, for general weight functions,  and contains the Borel transform respectivly integral transform of the corresponding matrix (kernel).
Subsection \ref{subsec:Kostka} provides formulae of combinatorial nature, 
valid for higher order derivatives of both determinants and Pfaffians at a single (pair) of point(s). It is given by the sum over partitions of (Pfaffian) determinants of derivatives of the respective kernels. Here, the coefficients contain so-called Kostka numbers, e.g., see Ref.~\cite{FultonHarris2004}, which are the expansion coefficients of Schur polynomials. This approach generalises existing results for the first derivative~\cite{SimmWei,Wei-unpub}, containing the hook length of the Young diagram of the corresponding partitions.
In the special case of the CUE which is determinantal, Kostka numbers have been shown to appear as well in most recent results parallel to ours \cite{GMN}.

\subsection{General mixed derivatives for determinants and Pfaffians}\label{subsec:GenMixed}

We first want to present a general theorem that is rather versatile and should have several applications to general expectation values of mixed derivatives of characteristic polynomials in determinantal and Pfaffian point processes.
Especially, it could be used to deal with the more general partition function
\begin{equation}
    \label{LptGenDerivs}
    \mathcal{Z}_{d,L}(\vec \chi;{\bf m})=\left\langle
    \prod_{l=1}^L  D_N(\chi_l)^{m_{l,0}} (\partial_{\chi_l}D_{N}(\chi_l))^{m_{l,1}} \cdots (\partial_{\chi_l}^dD_N(\chi_l))^{m_{l,d}}
    \right\rangle,
\end{equation}
which generalises~\eqref{1ptGenDerivs} to multiple points.
Here, we employ the notation $\vec \chi = (\chi_1,\dots,\chi_L)$  and ${\bf m}=\{m_{l,j}\}_{\substack{l=1,\ldots, L\\j=0,\ldots,d}}$ collecting the multiplicities $m_{l,j}$ of the $d_j$th derivative at the point $\chi_l$.
Of the matrix ${\bf m}$, the $l$-th row is summarised as $\vec{m}_l$.
We would like to stress that the following results actually apply to more general settings, not necessarily related to moments of characteristic polynomials.

To set up the approach, we start with $P$ pairwise distinct variables $\vec x=(x_1,\ldots,x_P)\in\mathbb{C}^P$ of which we will take derivatives and $L$ pairwise distinct limiting points $\vec \chi = (\chi_1,\dots,\chi_L) \in \mathbb{C}^L$.
The limit, for which we use the shorthand $\vec x \to \vec \chi$, is encoded in a surjective function $\psi: \{x_1,\dots,x_P\} \rightarrow \{\chi_1,\dots,\chi_L\}$ and really means the limit $\vec x \rightarrow (\psi(x_1),\dots,\psi(x_P))$.
As we now explain, the multiplicity labels $m_{l,j}$ in \eqref{LptGenDerivs} appear as a consequence of labelling the points $x_i$ by how they appear in inverse images of $\psi$.
To see this, consider labelling the elements of the inverse image of $\chi_l$ as follows
\begin{equation}
    \{x_{l,1}, \dots, x_{l,P_l}\} = \psi^{-1}(\chi_l),
    \label{eq: paramPartition1}
\end{equation}
where $P_l = |\psi^{-1}(\chi_l)|$ is the cardinality of the inverse image. The relation between doubly-labelled and single-labelled variables
is
\begin{equation}
    x_{1,1}=x_1,\ldots,x_{1,P_1}=x_{P_1},x_{2,1}=x_{P_1+1},\ldots,\ldots,x_{L,P_L} = x_{P_1+\ldots+P_L} = x_{P}.
\end{equation}
The inverse images partition the set of pair-wise distinct points $\{x_i\}_{i=1}^P$.
Therefore, 
\begin{equation}
    \{x_i\}_{i=1}^P = \bigcup_{l=1}^L \psi^{-1}(\chi_l), \quad \sum_{l=1}^L P_l = P.
\end{equation}
In particular, this gives a correspondence between the labelling $x_i$ and the labelling $x_{l,j}$.
As a consequence, a similar correspondence is induced for differential operators in these variables.
Therefore, for any family of non-negative integers $n_{l,j}$, the differential operator
\begin{equation}
    \prod_{l=1}^L \prod_{j=1}^{P_l} \partial_{x_{l,j}}^{n_{l,j}},
    \label{eq: inv img diff op}
\end{equation}
can be understood as a differential operator in $x_i$.
This operator plays an important role in Theorem \ref{thm:main.result}, where both $x_i$ and $x_{l,j}$ labels are used to state and prove the theorem.
Since this is just a re-labelling the above operator should be understood as
\begin{equation}
    \prod_{i=1}^P \partial_{x_i}^{n_i},
\end{equation}
for some family of integers $n_i$ determined by the integers $n_{l,j}$.
For example, consider the case where $P=3, L=2$ so that $\vec x = (x_1,x_2,x_3)$, $\vec \chi = (\chi_1,\chi_2)$ and $\psi(x_1) = \psi(x_2) = \chi_1$, $\psi(x_3) = \chi_2$.
We can label each inverse image in two ways $\psi^{-1}(\chi_1) = (x_1, x_2) = (x_{1,1}, x_{1,2})$ and $\psi^{-1}(\chi_2) = x_3 = x_{2,1}$.
In this case, the identification gives $n_1 = n_{1,1}$, $n_2 = n_{1,2}$ and $n_3 = n_{2,1}$.
With this notation, the partition function \eqref{LptGenDerivs} can be written as
\begin{equation}
    \label{LptGenDerivsV}
    \mathcal{Z}_{d,L}(\vec \chi;{\bf m}) =
    \lim_{\vec x \to \vec \chi}
     \prod_{l=1}^L \prod_{j=1}^{P_l} \partial_{x_{l,j}}^{n_{l,j}} \left\langle \prod_{i=1}^P D_N(x_i) \right\rangle \coloneqq \lim_{\vec x \to \vec \chi} \left(\prod_{i=1}^P  \partial_{x_i}^{n_i}\right )
      \left\langle \prod_{i=1}^P D_N(x_i) \right\rangle .
\end{equation}
The multiplicities are determined by
\begin{equation}
    m_{l,d} = \sum_{i=1}^{P_l} \delta_{d,n_{l,i}}.
\end{equation}
We emphasise that it could be that some $m_{l,d}$ vanish.

In preparation of the Theorem~\ref{thm:main.result}, we show the following lemma.

\begin{lemma}\label{lemma:der.rel}

Let $d,k,N \in \mathbb{N}$ be integers satisfying $d\geq k\geq1$ and $N>1$.
Introduce points $x,z_1,\ldots,z_N,\zeta_1,\ldots, \zeta_N\in\mathbb{C}$ and auxiliary variables
\begin{equation}
  \vec u =  (u_{1},\ldots, u_{d}).
\end{equation}
We define the function
\begin{equation}\label{Fd.def}
    F_d(\vec u,\chi,\zeta)=\sum_{l=1}^{d}(l-1)!\frac{ u_{l}}{(\zeta-\chi)^l}.
\end{equation}
Then, there exists a unique differential operator $D_{\vec u,k}$, which is a polynomial in $\partial_{ u_1},\ldots,\partial_{ u_k}$, only, satisfying
the identity
\begin{equation}\label{der.rel}
    \partial_{x}^k\prod_{j=1}^N\frac{z_j-x}{\zeta_j-x}=\lim_{\vec u \to 0}D_{\vec u ,k}\exp\left[\sum_{j=1}^N \bigl(F_{d}(\vec u,x,\zeta_j)-F_{d}(\vec u,x,z_j)\bigl)\right]\prod_{j=1}^N\frac{z_j-x}{\zeta_j-x}.
\end{equation}
The operators $D_{\vec u,k}$ admit the scaling relation
\begin{equation}\label{der.scaling}
   D_{(su_1,s^2u_2,\ldots,s^du_d),k}=s^{-k}D_{(u_1,u_2,\ldots,u_d),k}
\end{equation}
and the recurrence relation
\begin{equation}\label{recurrence.diff}
\lim_{\vec u \to 0}D_{\vec u,k}=\lim_{\vec u \to 0}D_{\vec u,k-1}\left[\partial_{ u_1}+\sum_{l=1}^{k-1} u_l\partial_{ u_{l+1}}\right]
\end{equation}
while for $k=1$ it is simply
\begin{equation}\label{D.k1}
D_{\vec u,1}=\partial_{ u_1}.
\end{equation}
\end{lemma}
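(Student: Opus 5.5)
Write $f(x)=\prod_{j=1}^N\frac{z_j-x}{\zeta_j-x}$ and $g(x)=\log f(x)=\sum_j\bigl(\log(z_j-x)-\log(\zeta_j-x)\bigr)$. Its derivatives are
\begin{equation*}
g^{(l)}(x)=(l-1)!\sum_{j=1}^N\Bigl(\frac{1}{(\zeta_j-x)^l}-\frac{1}{(z_j-x)^l}\Bigr).
\end{equation*}
Hence the exponent in \eqref{der.rel} is exactly
\begin{equation*}
G(\vec u,x):=\sum_{j}\bigl(F_d(\vec u,x,\zeta_j)-F_d(\vec u,x,z_j)\bigr)=\sum_{l=1}^d u_l\,g^{(l)}(x),
\end{equation*}
which is linear in $\vec u$. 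The identity $\partial_x^k f=f\,B_k(g',g'',\ldots,g^{(k)})$, with $B_k$ the complete Bell polynomial (Faà di Bruno), reduces the lemma to the following claim. We need a polynomial differential operator $D_{\vec u,k}$ in $\partial_{u_1},\ldots,\partial_{u_k}$ such that
\begin{equation*}
\lim_{\vec u\to0}D_{\vec u,k}\,e^{\sum_l u_l y_l}=B_k(y_1,\ldots,y_k)
\end{equation*}
for indeterminates $y_l=g^{(l)}(x)$. Since $\partial_{u}^{\alpha}e^{\sum u_ly_l}|_{u=0}=y^\alpha$, the choice $D_{\vec u,k}:=B_k(\partial_{u_1},\ldots,\partial_{u_k})$ works, and this gives existence together with \eqref{D.k1}, because $B_1(y_1)=y_1$.

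For uniqueness, I interpret the statement as uniqueness of the constant-coefficient polynomial in $\partial_{u_1},\ldots,\partial_{u_k}$. The map $p\mapsto p(\partial_u)e^{u\cdot y}|_{u=0}=p(y)$ is injective on polynomials. It therefore suffices that the $y_l=g^{(l)}(x)$, $l\le k$, are algebraically independent when viewed as functions of the free parameters $z_j,\zeta_j$. This holds for $N>1$ and, where needed, for sufficiently many points; otherwise one should read uniqueness as holding in the variables $y_l$. This is the step I expect to need care: I will argue it via the Jacobian of the power sums $\sum_j(\zeta_j-x)^{-l}$, $l=1,\ldots,k$, which is generically nonsingular when there are at least $k$ independent variables. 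The statement's hypothesis $N>1$ suggests the authors intend uniqueness as an operator identity rather than in this function-theoretic sense, so I will state precisely which sense is proved.

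The scaling relation \eqref{der.scaling} follows from homogeneity. $B_k$ is weighted-homogeneous of degree $k$ when $y_l$ is given weight $l$, and substituting $u_l\to s^lu_l$ sends $\partial_{u_l}\to s^{-l}\partial_{u_l}$. Hence $B_k(s^{-1}\partial_{u_1},\ldots,s^{-k}\partial_{u_k})=s^{-k}B_k(\partial_{u_1},\ldots,\partial_{u_k})$.

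For the recurrence \eqref{recurrence.diff}, I will use the Bell recursion $B_k=(y_1+\sum_{l}y_{l+1}\partial_{y_l})B_{k-1}$, which mirrors $\partial_x(fB_{k-1})=f(g'+\partial_x)B_{k-1}$ with $\partial_x y_l=y_{l+1}$. On the $u$-side, acting with $\partial_{u_1}$ on $e^{u\cdot y}$ produces $y_1$. The term $u_l\partial_{u_{l+1}}$, after applying $D_{\vec u,k-1}$ and setting $u=0$, acts as $y_{l+1}\partial_{y_l}$ by the Fourier-type duality, since $u_l$ acts as $\partial_{y_l}$ on $e^{u\cdot y}$. Concretely, I will check that
\begin{equation*}
\lim_{u\to0}p(\partial_u)\,u_l\,q(\partial_u)e^{u\cdot y}=\partial_{y_l}p(y)\cdot q(y),
\end{equation*}
using the commutator $[p(\partial_u),u_l]=(\partial_{\partial_{u_l}}p)(\partial_u)$. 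Applied with $q=y_{l+1}$ and $p=B_{k-1}$, this yields $(y_1+\sum y_{l+1}\partial_{y_l})B_{k-1}=B_k$, which matches $\lim D_{\vec u,k}$. The ordering conventions (which operator acts first) need to be tracked carefully, but the bookkeeping is routine.
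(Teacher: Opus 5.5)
Your proof is correct, but it runs in the opposite direction from the paper's. The paper argues by induction on $k$. Its one computation is $\partial_x\bigl(F_{k-1}(\vec u,x,\zeta_j)-F_{k-1}(\vec u,x,z_j)\bigr)=\sum_{l=1}^{k-1}u_l\partial_{u_{l+1}}\bigl(F_{k}(\vec u,x,\zeta_j)-F_{k}(\vec u,x,z_j)\bigr)$, which trades one extra $\partial_x$ acting on the exponential for the first-order operator $\partial_{u_1}+\sum_{l}u_l\partial_{u_{l+1}}$. The recurrence \eqref{recurrence.diff} is thus the construction itself. The constant-coefficient form follows by commuting $D_{\vec u,k-1}$ through under $\vec u\to0$, and the scaling is read off from the recurrence. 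The identification $D_{\vec u,k}=B_k(\partial_{u_1},\ldots,\partial_{u_k})$ comes only later, in Appendix~B, via generating functions.

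You instead write the exponent as $\sum_l u_l g^{(l)}(x)$ and apply Fa\`a di Bruno, which gives existence and the closed form at once. Scaling and the recurrence then become the weighted homogeneity and the standard recursion of $B_k$, transported by the duality $u_l\leftrightarrow\partial_{y_l}$ on $e^{\vec u\cdot\vec y}$. Checking on all exponentials suffices, since both sides are finite-order operators evaluated at $\vec u=0$. The underlying computation is the same: your $\partial_x y_l=y_{l+1}$ is exactly the paper's identity for $F_{k-1}$. Your route front-loads the explicit formula, while the paper's is self-contained and needs no facts about Bell polynomials.

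On uniqueness you are more careful than the paper, whose induction simply assumes it at level $k-1$ and never verifies it. Your reduction to algebraic independence of $y_1,\ldots,y_k$ is the right one, but the relevant condition is not $N>1$. With $a_j=(\zeta_j-x)^{-1}$ and $b_j=(z_j-x)^{-1}$ one has $y_l=(l-1)!\sum_j(a_j^l-b_j^l)$. The Jacobian with respect to both families (rows $l\,a_j^{l-1}$, $-l\,b_j^{l-1}$) has generic rank $\min(k,2N)$, so uniqueness holds exactly for $k\le 2N$.

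For fixed $N$ and $k>2N$ uniqueness genuinely fails. For $N=2$, $k=5$, the five functions $y_1,\ldots,y_5$ of four variables satisfy some polynomial relation $R(\vec y)\equiv0$. Then $D_{\vec u,5}+R(\partial_{u_1},\ldots,\partial_{u_5})$ also satisfies \eqref{der.rel}. For $N=1$ the relation is explicitly $2y_1y_3-3y_2^2-y_1^4\equiv0$. You should therefore state uniqueness as holding when the identity is required for all $N$, or as an identity in indeterminates $y_l$, rather than suggesting it holds for $N>1$.
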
 

\begin{proof}
We prove this statement by induction. It is clear that for $k=1$, we have
\begin{equation}
 \partial_{x}\prod_{j=1}^N\frac{z_j-x}{\zeta_j-x}
 =\sum_{j=1}^N \left(\frac{1}{\zeta_j-x}-\frac{1}{z_j-x}\right)
 \prod_{j=1}^N\frac{z_j-x}{\zeta_j-x}
 =\sum_{j=1}^N 
 \partial_{ u_1}(F_{d}(\vec u,x,\zeta_j)-F_{d}(\vec u,x,z_j))
 \prod_{j=1}^N\frac{z_j-x}{\zeta_j-x},
\end{equation}
which immediately yields the statement and~\eqref{D.k1}.

For $d\geq k>1$, we assume that $D_{\vec u,k-1}$ is indeed unique and only given by a polynomial of $\partial_{ u_1},\ldots,\partial_{ u_{k-1}}$ which means that $F_d$ can be restricted to $F_{k-1}$ only, i.e.,
\begin{equation}
\begin{split}
&\lim_{\vec u \to 0}D_{\vec u ,k-1}\exp\left[\sum_{j=1}^N \bigl(F_{d}(\vec u,x,\zeta_j)-F_{d}(\vec u,x,z_j)\bigl)\right]\prod_{j=1}^N\frac{z_j-x}{\zeta_j-x}\\
=&\lim_{\vec u \to 0}D_{\vec u ,k-1}\exp\left[\sum_{j=1}^N \bigl(F_{k-1}(\vec u,x,\zeta_j)-F_{k-1}(\vec u,x,z_j)\bigl)\right]\prod_{j=1}^N\frac{z_j-x}{\zeta_j-x}.
\end{split}
\end{equation}
Furthermore, we note the commutation relations $\partial_xD_{\vec u,k-1}=D_{\vec u,k-1}\partial_x$ and $ u_kD_{\vec u,k-1}=D_{\vec u,k-1} u_k$
 as well as the simple identity
\begin{equation}
\partial_x\bigl(F_{k-1}(\vec u,x,\zeta_j)-F_{k-1}(\vec u,x,z_j)\bigl)=\sum_{l=1}^{k-1} u_l\partial_{ u_{l+1}}\bigl(F_{k}(\vec u,x,\zeta_j)-F_{k}(\vec u,x,z_j)\bigl).
\end{equation}
Collecting everything, we have
\begin{equation}
\begin{split}
    \partial_{x}^k\prod_{j=1}^N\frac{z_j-x}{\zeta_j-x}=&\lim_{\vec{u} \to 0}D_{\vec u ,k-1}\partial_x\exp\left[\sum_{j=1}^N \bigl(F_{k-1}(\vec u,x,\zeta_j)-F_{k-1}(\vec u,x,z_j)\bigl)\right]\prod_{j=1}^N\frac{z_j-x}{\zeta_j-x}\\
    =&\lim_{\vec u \to 0}D_{\vec u ,k-1}\left[\partial_{ u_1}+\sum_{l=1}^{k-1} u_l\partial_{ u_{l+1}}\right]\exp\left[\sum_{j=1}^N \bigl(F_{k}(\vec u,x,\zeta_j)-F_{k}(\vec u,x,z_j)\bigl)\right]\prod_{j=1}^N\frac{z_j-x}{\zeta_j-x}
\end{split}
\end{equation}
yielding the recurrence relation~\eqref{recurrence.diff}.
Commuting $D_{\vec u ,k-1}$ with $\partial_{ u_1}+\sum_{l=1}^{k-1} u_l\partial_{ u_{l+1}}$ under the limit $\vec u\to0$ yields the claim that $D_{\vec u ,k}$ is indeed a polynomial of the partial derivatives  $\partial_{ u_1},\ldots,\partial_{ u_k}$ alone.
The scaling relation~\eqref{der.scaling} is a direct consequence of the recurrence relation~\eqref{recurrence.diff} where one rescales $u_j\to s^j u_j$.
\end{proof}

To illustrate what the differential operators $D_{\vec{u} ,k}$ explicitly look like,  we give
\begin{equation}
\begin{split}
k=2:&\quad D_{\vec u,2}=\partial_{ u_1}^2+\partial_{ u_2},\\
k=3:&\quad D_{\vec u,3}=\partial_{ u_1}^3+3\partial_{ u_2}\partial_{ u_1}+\partial_{ u_3},\\
k=4:&\quad D_{\vec u,4}=\partial_{ u_1}^4+6\partial_{ u_2}\partial_{ u_1}^2+3\partial_{ u_2}^2+4\partial_{ u_3}\partial_{ u_1}+\partial_{ u_4}.
\end{split}
\end{equation}
As we prove in Appendix \ref{apx: bell poly}, the derivative operators 
$D_{\vec{u} ,k}$ 
are given by $k$th complete Bell polynomials \cite{Bell34}\footnote{Thanks to Jeanne Scott for noticing this connection.}
\begin{equation}
    D_{\vec u, k} = B_k(\partial_{u_1},\dots,\partial_{u_k}) \, .
\end{equation}

As a final ingredient we define an integral transformation of a function $f$ that shall be analytic in some neighbourhoods of each of the points $\chi_1,\ldots,\chi_L$. This transform is
\begin{equation}\label{transform}
    [\mathcal{K}_{\vec{\chi},\alpha} f](\boldsymbol{u})=\oint_{\mathcal{C}}\frac{\dif\zeta\, f(\zeta)}{2\pi i }\exp\left[\sum_{l=1}^LF_{d}(\vec u_l,\chi_l,\zeta)\right]\frac{\zeta^{P-\alpha}}{\prod_{l=1}^L(\zeta-\chi_l)^{P_l}},
\end{equation}
where $\vec u_l=( u_{l ,1}, \dots, u_{l, d})$, $\boldsymbol{ u}=\{ u_{l ,j}\}_{\substack{l=1,\ldots,L\\j=1,\ldots,d}}$ and $\mathcal{C}$ is a contour encircling counter-clockwise each $\chi_l$ close enough, so that it lies in the domain where $f$ is analytic. 
For example, with $L=2, d=1$ the transformation is explicitly given by
\begin{equation}
    \label{eq: L=2 Ktransform}
    [\mathcal{K}_{\vec{\chi},\alpha} f]( u_{1,1}, u_{2,1})=\left(\oint_{|\zeta - \chi_1| = \varepsilon} +\oint_{|\zeta - \chi_2| = \varepsilon} \right)\frac{\dif\zeta\, f(\zeta)}{2\pi i }\exp\left[
    \frac{ u_{1,1}}{(\zeta - \chi_1)} + \frac{ u_{2,1}}{(\zeta - \chi_2)}
    \right]\frac{\zeta^{P-\alpha}}{(\zeta-\chi_1)^{P_1}(\zeta-\chi_2)^{P_2}}.
\end{equation}
Moreover, we define $\mathcal{K}_{\vec{\chi},\alpha}\otimes\mathcal{K}_{\vec{\chi},\gamma} A$ for a function $A$ of two entries. It means that this transformation is applied in each entry, separately.

\begin{theorem}[Higher order derivatives of Pfaffians]\label{thm:main.result}\

We choose the above-mentioned partition of the complex variables $\vec x\in\mathbb{C}^P$ and their limits $\vec \chi\in\mathbb{C}^L$. Let $P,Q\in\mathbb{N}_0$ such that $P+Q \in 2\mathbb{N}$ is positive and even, $C=-C^T\in{\rm Asym}_{\mathbb{C}}(Q)$ a fixed antisymmetric complex $Q \times Q$ matrix, and consider the functions $A:\mathbb{C}^2\to\mathbb{C}$ and $B_1,\ldots,B_Q:\mathbb{C}\to\mathbb{C}$ which should be analytic at the points $\chi_l$ and $A(x_1,x_2)=-A(x_2,x_1)$ is skew-symmetric. Then, it is
\begin{equation}\label{main.result}
\begin{split}
&\lim_{\vec x\to \vec\chi}\left(\prod_{l=1}^L\prod_{j=1}^{P_l}\partial_{x_{l,j}}^{n_{l,j}}\right) \frac{1}{\Delta_P(\vec x)}{\rm Pf}\left[\begin{array}{cc} A(x_a,x_c) & B_d(x_a) \\ -B_b(x_c) & C_{bd} \end{array}\right]_{\substack{a,c=1,\ldots,P\\b,d=1,\ldots,Q}}\\
    =&\lim_{\boldsymbol{u}\to 0}\left(\prod_{l=1}^L\prod_{k=1}^{d}D_{\vec u_l,k}^{m_{l,k}}\right){\rm Pf}\left[\begin{array}{cc} [\mathcal{K}_{\vec{\chi},\alpha}\otimes\mathcal{K}_{\vec{\chi},\gamma}A](\boldsymbol{u},\boldsymbol{u}) & [\mathcal{K}_{\vec{\chi},\alpha}B_d](\boldsymbol{u}) \\ -[\mathcal{K}_{\vec{\chi},\gamma}B_b](\boldsymbol{u}) & C_{bd} \end{array}\right]_{\substack{\alpha,\gamma=1,\ldots,P\\b,d=1,\ldots,Q}}.
\end{split}
\end{equation}
\end{theorem}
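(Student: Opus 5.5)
The plan is to remove the Vandermonde determinant before differentiating, by rewriting the Pfaffian as a Pfaffian of contour integrals in which all $\vec x$-dependence sits in a common Cauchy-type factor. Fix a contour $\mathcal{C}$ encircling each $\chi_l$ closely, inside the region where $A$ and the $B_b$ are analytic. For $\vec x$ close enough to $\vec\chi$, all $x_a$ lie inside $\mathcal{C}$, and Cauchy's formula gives $B_d(x_a)=\oint_{\mathcal C}\frac{\dif\zeta}{2\pi i}\frac{B_d(\zeta)}{\zeta-x_a}$ and the double integral analogue for $A(x_a,x_c)$.

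\textbf{Step 1: Lagrange decomposition.} I expand
\begin{equation}
\frac{1}{\zeta-x_a}=\frac{\prod_{b\neq a}(\zeta-x_b)}{\prod_{b=1}^P(\zeta-x_b)}=\sum_{\alpha=1}^P c_{a\alpha}(\vec x)\,\frac{\zeta^{P-\alpha}}{\prod_{b=1}^P(\zeta-x_b)},
\end{equation}
with $c_{a\alpha}=(-1)^{\alpha-1}e_{\alpha-1}(x_1,\ldots,\widehat{x_a},\ldots,x_P)$. The matrix $c=(c_{a\alpha})$ is a standard one whose determinant is $\pm\Delta_P(\vec x)$; I expect the sign to be fixed by the ordering convention $\zeta^{P-\alpha}$ in \eqref{transform}, and I would check it, e.g. for $P=2$.

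With $G=c\oplus\mathbf{1}_Q$, the Pfaffian in \eqref{main.result} equals ${\rm Pf}(G\widetilde M G^T)=\det G\,{\rm Pf}\,\widetilde M$. Here $\widetilde M$ has entries $\oint\oint A(\zeta,\zeta')\,\zeta^{P-\alpha}\zeta'^{P-\gamma}/[\prod_b(\zeta-x_b)\prod_b(\zeta'-x_b)]$ and $\oint B_d(\zeta)\,\zeta^{P-\alpha}/\prod_b(\zeta-x_b)$. The factor $\Delta_P(\vec x)$ then cancels exactly, leaving a function analytic in $\vec x$ near $\vec\chi$.

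\textbf{Step 2: differentiate inside the integrals.} Expanding the Pfaffian as a sum over pairings, each row index $\alpha\le P$ carries exactly one integration variable $\zeta_\alpha$. So ${\rm Pf}\,\widetilde M$ is a $P$-fold contour integral of $\prod_{\alpha}\prod_b(\zeta_\alpha-x_b)^{-1}$ times an $\vec x$-independent Pfaffian-type integrand. The derivatives $\partial_{x_b}^{n_b}$ act only on $\prod_\alpha(\zeta_\alpha-x_b)^{-1}$.

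For these derivatives I use Lemma~\ref{lemma:der.rel}. The lemma is stated for $\prod_j\frac{z_j-x}{\zeta_j-x}$, but its inductive proof applies verbatim to $\prod_\alpha(\zeta_\alpha-x)^{-1}$ by dropping the $F_d(\cdot,z_j)$ terms (equivalently, one may send the $z_j$ to infinity). This gives
\begin{equation}
\partial_x^k\prod_\alpha\frac{1}{\zeta_\alpha-x}=\lim_{\vec u\to0}D_{\vec u,k}\,e^{\sum_\alpha F_d(\vec u,x,\zeta_\alpha)}\prod_\alpha\frac{1}{\zeta_\alpha-x}.
\end{equation}
Each variable $x_{l,j}$ receives its own auxiliary vector $\vec u^{(l,j)}$.

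\textbf{Step 3: the limit $\vec x\to\vec\chi$.} After the limit, which is legitimate because the contours stay fixed and the integrands are uniformly analytic, all factors belonging to the same $\chi_l$ coincide. Since $F_d$ is linear in $\vec u$, their product of exponentials is $\exp[\sum_\alpha F_d(\sum_j\vec u^{(l,j)},\chi_l,\zeta_\alpha)]$. A function of $\sum_j\vec u^{(l,j)}$ has all $\partial_{\vec u^{(l,j)}}$ equal to $\partial_{\vec u_l}$, so the operators merge into $\prod_k D_{\vec u_l,k}^{m_{l,k}}$ with one $\vec u_l$ per point. Operators with $n_{l,j}=0$ are trivial, and $D_{\vec u,k}$ involves only $\partial_{u_1},\ldots,\partial_{u_k}$ with $k\le d$.

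What remains factorises over $\alpha$ as $\exp[\sum_lF_d(\vec u_l,\chi_l,\zeta_\alpha)]\,\zeta_\alpha^{P-\alpha}\prod_l(\zeta_\alpha-\chi_l)^{-P_l}$, which is precisely the kernel of $\mathcal{K}_{\vec\chi,\alpha}$ in \eqref{transform}. Reassembling the sum over pairings gives the Pfaffian on the right of \eqref{main.result}. The differential operators commute with the contour integrals, and since they act on an analytic function of $\boldsymbol u$ they can be pulled out of the Pfaffian.

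\textbf{Main obstacle.} I expect the bookkeeping of Step 1 to be the main obstacle. One has to confirm that $\det c$ gives exactly $\Delta_P(\vec x)$ with the paper's sign convention, including the transposed use in the $B$-block and the $C$ block being untouched. One also has to ensure the contour manipulations remain valid uniformly as $\vec x\to\vec\chi$. The merging of auxiliary variables in Step 3 is the conceptually key but short observation.
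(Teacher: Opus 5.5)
Your proof is correct, and it takes a genuinely different and more streamlined route than the paper's.

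\textbf{How the paper does it.} The paper absorbs $1/\Delta_P(\vec x)$ by multiplying with the inverse Vandermonde matrix $W=V^{-1}$, writing its entries as contour integrals of Lagrange polynomials. This produces matrix entries that are double integrals over $z$ and $\zeta$, with $\vec x$-dependence $\prod_l (z-x_l)/(\zeta-x_l)$; that is why Lemma~\ref{lemma:der.rel} is formulated for that ratio. After differentiating and letting $\vec x\to\vec\chi$, the paper must do further work. It expands on $|z|=R$ with $R$ large and evaluates the $z$-integral as a monic polynomial $\zeta^{P-\alpha}+\ldots$ with $\boldsymbol u$-dependent lower coefficients. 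It then removes those lower terms by unit-triangular row and column operations.

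\textbf{How your route differs.} You use the coefficient matrix $c$ of $\prod_{b\neq a}(\zeta-x_b)$ in the basis $\zeta^{P-\alpha}$. One can check that $W=U c^{-1}$ with $U$ an $\vec x$-dependent unit upper-triangular matrix. So you carry out the paper's final triangular reduction before differentiating, where it costs nothing. You obtain single contour integrals with kernel $\zeta^{P-\alpha}/\prod_b(\zeta-x_b)$ directly. The auxiliary $z$-integrals, the large-$R$ expansion, the residue at infinity and the final reduction all become unnecessary.

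You only need the special case of the lemma without the $z_j$. This follows from the same induction, or from Fa\`a di Bruno's formula: $\partial_{u_i}\sum_\alpha F_d(\vec u,x,\zeta_\alpha)=(i-1)!\sum_\alpha(\zeta_\alpha-x)^{-i}$ are exactly the $i$th $x$-derivatives of $-\sum_\alpha\log(\zeta_\alpha-x)$. Your Step~3, which merges the auxiliary vectors $\vec u^{(l,j)}$ using linearity of $F_d$ in $\vec u$, makes explicit a step the paper passes over when it applies the lemma with a single $\vec u_l$ per point.

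\textbf{The sign.} The sign you flag as the main obstacle works out exactly. Set $\widetilde V_{a\alpha}=x_a^{P-\alpha}$. Then $(c\widetilde V^{T})_{aa'}=\prod_{b\neq a}(x_{a'}-x_b)=\delta_{aa'}\prod_{b\neq a}(x_a-x_b)$. Moreover $\det\widetilde V=(-1)^{P(P-1)/2}\Delta_P(\vec x)$ and $\prod_a\prod_{b\neq a}(x_a-x_b)=(-1)^{P(P-1)/2}\Delta_P(\vec x)^2$. Hence $\det c=\Delta_P(\vec x)$ for every $P$, consistent with your $P=2$ check.
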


This theorem is proven in Sec.~\ref{sec:proof.main}. It has several immediate consequences. For instance, when we deal only with a determinant instead of a Pfaffian we have the following two special cases. They do not warrant any proof as they are purely restrictions or multiple applications of the general theorem.

\begin{corollary}\label{cor:special results}\

\begin{enumerate}
\item	In the case $P=Q$, $A=0$ and $C=0$, Theorem~\eqref{thm:main.result} reduces to
\begin{equation}\label{special.result.1}
\begin{split}
\hspace*{-0.5cm} \lim_{\vec x\to \vec \chi}\left(\prod_{l=1}^L\prod_{j=1}^{P_l}\partial_{x_{l,j}}^{n_{l,j}}\right) \frac{1}{\Delta_P(\vec x)}\det\left[ B_d(x_a) \right]_{\substack{c,d=1,\ldots,P}}=\lim_{\boldsymbol{u}\to 0}\left(\prod_{l=1}^L\prod_{k=1}^{d}D_{\vec u_l,k}^{m_{l,k}}\right)\det\left[ [\mathcal{K}_{\vec{\chi},\alpha}B_d](\boldsymbol{u}) \right]_{\substack{\alpha,d=1,\ldots,P}}.
\end{split}
\end{equation}

\item	Considering the setting $B_d(x_a)=B(x_a,y_d)$ in part (1) and dividing by $\Delta_P(\vec y)$ with parameters $\widetilde P_l$, $\widetilde{n}_{l,j}$, $\widetilde{m}_{l,k}$ so that we can take the limit $\vec{y}\to\vec{\xi}$ yields
\begin{equation}\label{special.result.2}
\begin{split}
&\lim_{\vec x\to \vec \chi, \vec{y}\to \vec{\xi}}\left(\prod_{l=1}^L\prod_{j=1}^{P_l}\partial_{x_{l,j}}^{n_{l,j}}\right) \left(\prod_{l=1}^L\prod_{j=1}^{\widetilde P_l}\partial_{y_{l,j}}^{\widetilde n_{l,j}}\right) \frac{1}{\Delta_P(\vec x)\Delta_P(\vec{y})}\det\left[ B(x_a,y_b) \right]_{\substack{a,b=1,\ldots,P}}\\
=&\lim_{\boldsymbol{u},\boldsymbol{v}\to 0}\left(\prod_{l=1}^L\prod_{k=1}^{d}D_{\vec u_l,k}^{m_{l,k}}\right)\left(\prod_{l=1}^L\prod_{k=1}^{d}D_{\vec{v}_l,k}^{\widetilde m_{l,k}}\right)\det\left[ [\mathcal{K}_{\vec{\chi},\alpha}\otimes\mathcal{K}_{\vec{\xi},\gamma}B](\boldsymbol{u},\boldsymbol{v}) \right]_{\substack{\alpha,\gamma=1,\ldots,P}}.
\end{split}
\end{equation}
\end{enumerate}
\end{corollary}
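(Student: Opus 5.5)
Both parts follow from Theorem~\ref{thm:main.result} by specialisation, so the plan is a restriction argument rather than a new computation.

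\emph{Part (1).} Take $Q=P$, $A\equiv0$ and $C=0$ in Theorem~\ref{thm:main.result}. The Pfaffian on the left-hand side of \eqref{main.result} is then that of the block matrix $\left[\begin{smallmatrix}0&B\\-B^T&0\end{smallmatrix}\right]$ with $B=[B_d(x_a)]_{a,d=1}^P$. The standard identity gives
\begin{equation*}
{\rm Pf}\left[\begin{smallmatrix}0&B\\-B^T&0\end{smallmatrix}\right]=(-1)^{P(P-1)/2}\det B.
\end{equation*}
On the right-hand side, the transform $\mathcal{K}_{\vec\chi,\alpha}\otimes\mathcal{K}_{\vec\chi,\gamma}$ is linear, so it maps $A=0$ to $0$. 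The right-hand Pfaffian therefore has the same block structure with $B$ replaced by $[\mathcal{K}_{\vec\chi,\alpha}B_d](\boldsymbol u)$, and the same sign $(-1)^{P(P-1)/2}$ appears. This sign does not depend on $\vec x$ or $\boldsymbol u$, so it commutes with all derivatives and limits and cancels from both sides, which yields \eqref{special.result.1}. The size condition $P+Q=2P\in2\mathbb N$ holds automatically.

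\emph{Part (2).} Apply part (1) twice.
\begin{itemize}
\item[1.] Fix pairwise distinct $\vec y$ and set $B_d(x)=B(x,y_d)$. Part (1) in the $x$-variables gives
\begin{equation*}
\lim_{\vec x\to\vec\chi}\Big(\prod\partial_{x_{l,j}}^{n_{l,j}}\Big)\frac{\det[B(x_a,y_b)]}{\Delta_P(\vec x)}=\lim_{\boldsymbol u\to0}\Big(\prod D^{m_{l,k}}_{\vec u_l,k}\Big)\det\big[[\mathcal{K}_{\vec\chi,\alpha}B(\cdot,y_b)](\boldsymbol u)\big].
\end{equation*}
\item[2.] Divide by $\Delta_P(\vec y)$ and apply the $y$-derivatives and the limit $\vec y\to\vec\xi$. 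Transposing the determinant, the matrix has entries $\tilde B_\alpha(y_b):=[\mathcal{K}_{\vec\chi,\alpha}B(\cdot,y_b)](\boldsymbol u)$.
\item[3.] Apply part (1) again in the $y$-variables, with $\vec\xi$, $\widetilde P_l$, $\widetilde n_{l,j}$, $\widetilde m_{l,k}$, to the functions $\tilde B_\alpha$. This produces the transform $\mathcal{K}_{\vec\xi,\gamma}$ in the second entry and the operators $D_{\vec v_l,k}^{\widetilde m_{l,k}}$.
\end{itemize}

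The main point needing care is the justification for interchanging the $y$-operations with the $\boldsymbol u$-operations. For this I would use the following facts.
\begin{itemize}
\item[(a)] $\mathcal{K}_{\vec\chi,\alpha}$ is a contour integral over a fixed compact contour $\mathcal{C}$. Its integrand is jointly analytic in $(\zeta,y,\boldsymbol u)$ for $y$ near the $\xi_l$, assuming $B$ is analytic near $\{\chi_l\}\times\{\xi_l\}$.
\item[(b)] Hence $\tilde B_\alpha$ is analytic in $y$ near each $\xi_l$, so part (1) applies to it.
\item[(c)] The limit $\boldsymbol u\to0$ is evaluation of an analytic function at $0$, and the $D_{\vec u,k}$ are constant-coefficient operators. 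Derivatives in $\vec y$ and $\boldsymbol u$ therefore commute, and the limits can be exchanged: both sides of step 1 are analytic in $\vec y$, so dividing by $\Delta_P(\vec y)$ and taking $\vec y\to\vec\xi$ can be done either before or after $\boldsymbol u\to0$.
\item[(d)] By Fubini on the product of compact contours, $\mathcal{K}_{\vec\xi,\gamma}$ applied to $\mathcal{K}_{\vec\chi,\alpha}B(\cdot,y)$ equals $[\mathcal{K}_{\vec\chi,\alpha}\otimes\mathcal{K}_{\vec\xi,\gamma}B](\boldsymbol u,\boldsymbol v)$.
\end{itemize}
These give \eqref{special.result.2}.
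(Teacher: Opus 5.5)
Your proposal is correct and follows the paper's own justification, which is only a remark that no proof is needed: part (1) is the restriction $Q=P$, $A=0$, $C=0$ of Theorem~\ref{thm:main.result}, where the common sign $(-1)^{P(P-1)/2}$ coming from ${\rm Pf}\left[\begin{smallmatrix}0&B\\-B^T&0\end{smallmatrix}\right]$ cancels from both sides, and part (2) is two successive applications of part (1), first in $\vec x$ and then, after transposing, in $\vec y$. Your added justification for interchanging the $\boldsymbol u$-operations with the $\vec y$-derivatives and limit (joint analyticity, together with Fubini on fixed compact contours) supplies details the paper leaves implicit.
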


We concede that it is not immediate where the advantage is in this theorem and its corollaries in its full generality. Surely, we have carried out the limit $\vec x\to\vec\chi$ and have taken care of the Vandermonde determinant(-s) in the denominator. Yet, we are still left with products of differential operators which are rather involved in full generality. Therefore, we want to show in the next subsection what those results would look like in some simple but, nonetheless, interesting non-trivial cases.

\subsection{First derivatives, Borel transforms and combinatorial formulas}\label{subsec:Borel}

To highlight that Theorem~\ref{thm:main.result} has its virtues, we would like to discuss the case of Corollary~\ref{cor:special results} for $L=d=1$ implying $P_1=P=k$. This means there is only one limiting point $\chi$, i.e., $x_1,\ldots,x_P\to\chi$, and at most first derivatives in $x_j$ may appear, implying $n_{1,1}=\ldots=n_{1,m_{1,0}}=0$ and $n_{1,m_{1,0}+1}=\ldots=n_{1,P}=1$. This is the setting of Corollary~\ref{cor:special results}.1. For Corollary~\ref{cor:special results}.2 we have an additional limiting point $\widetilde x_1,\ldots,\widetilde x_P\to\widetilde\chi$ and take the choice  $\widetilde n_{1,1}=\ldots=\widetilde n_{1,\widetilde m_{1,0}}=0$ and $\widetilde n_{1,\widetilde m_{1,0}+1}=\ldots=\widetilde n_{1,P}=1$. To simplify the situations we assume $m_{1,0}=\widetilde m_{1,0}=h$.

In terms of the example of averages of characteristic polynomials, the two settings would correspond to computing the partition functions
\begin{equation}\label{av.no.modulus}
    \mathcal{Z}_{1,1}^{(\U)}(\chi;h,k-h)=\left\langle D_N(\chi)^{h} [D_N(\chi)^\prime]^{k-h} \right\rangle_{\U}
\end{equation}
and
\begin{equation}\label{av.modulus}
    \mathcal{Z}_{1,2}^{(\U)}(\chi,\overline\chi;h,k-h)=\left\langle |D_N(\chi)|^{2h} |D_N(\chi)^\prime|^{2k-2h} \right\rangle_{\U},
\end{equation}
both in the unitary class where we set $\widetilde\chi=\overline{\chi}$.

As we will see in Corollary \ref{cor: factoring derivatives}, the integral transform~\eqref{transform} in Corollary~\ref{cor:special results} can be recast  in terms of more known transform, namely the Borel transformation also known as Borel summation~\cite{Borel}. Borel has introduced it to improve the convergence of  a series as a possible regularisation for a divergent series by dividing the Taylor series with an additional factorial. In our situation it will naturally arise from applying Theorem~\ref{thm:main.result}.
We give two equivalent representations of the Borel transform. The two definitions are related via the residue theorem.

\begin{definition}[Borel transform]\

Let $f: \mathbb{C} \rightarrow \mathbb{C}$, being holomorphic in the open disc $D(\chi,R)$ centered at $\chi$ with radius $R>0$ and  $\partial_\chi^jf(\chi)$  denotes the $j$-th derivative at $\chi$.
The first order Borel transform of $f$ at $\chi$  is, then, defined as (see Ref.~\cite{Borel})
\begin{eqnarray}
   [\mathcal{B}_{\chi}f]( u) =[\mathcal{B}_{\chi,1}f]( u) 
    &:=&\oint_{|z|=R/2} \frac{\dif z}{2\pi i z}\, f\left(\chi+z\right)\exp\left[ u z^{-1}\right] 
   \label{eq: borel transf2}\\ 
    &=& \sum_{j=0}^\infty \frac{\partial_\chi^jf(\chi)}{j!^2}  u^j \label{eq: borel transf}
\end{eqnarray}
and its higher order version is
\begin{eqnarray}
   [\mathcal{B}_{\chi,d}f](\vec u) 
    &:=&\oint_{|z|=R/2} \frac{\dif z}{2\pi i z}\, f(\chi+z)\exp\left[\sum_{j=1}^d (j-1)!\,u_j z^{-j}\right],
   \label{eq: borel transf2.higher}
\end{eqnarray}
where $\vec u=( u_1,\ldots, u_d)\in\mathbb{C}^d$ with $d\geq1$.

\end{definition}

\begin{example}
The higher order version for $d=2$ has a series representation as well. To derive this we perform a Hubbard--Stratonovich transformation~\cite{Hubbard1959} of the term
\begin{equation}
\exp[u_2z^{-2}]=\int_{-\infty}^\infty \frac{\dif t}{\sqrt{\pi}}\exp\left[-t^2+2\sqrt{u_2}z^{-1}t\right].
\end{equation}
The integrals over $t$ and $z$ can be interchanged as they evidently converge absolutely, so that we can carry out the integral over $z$ first which is the ordinary Borel transform of $f$,
\begin{equation}
[\mathcal{B}_{\chi,2}f](\vec u) =\int_{-\infty}^\infty \frac{\dif t}{\sqrt{\pi}} e^{-t^2}  [\mathcal{B}_{\chi}f]( u_1+2\sqrt{u_2}t)=\int_{-\infty}^\infty \frac{\dif t}{\sqrt{\pi}} e^{-t^2}   \sum_{j=0}^\infty \frac{\partial_\chi^jf(\chi)}{j!^2}  ( u_1+2\sqrt{u_2}t)^j.
\end{equation}
Also the series can be interchanged with the integral due to absolute convergence of the  Taylor series and 
\begin{equation}
\begin{split}
\frac{1}{j!(R/2)^j}\int_{-\infty}^\infty \frac{\dif t}{\sqrt{\pi}} e^{-t^2}| u_1+2\sqrt{u_2}t|^j&\leq \frac{2^j}{j!}\int_{-\infty}^\infty \frac{\dif t}{\sqrt{\pi}} e^{-t^2}(| u_1|^j+2^j|u_2|^{j/2}|t|^j)\\
&\leq \frac{(2|u_1|)^j+2^{2j}|u_2|^{j/2}\Gamma[(j+1)/2]}{j!(R/2)^j}
\end{split}
\end{equation}
which has a uniform bound in $j>0$ by a constant. Thus, the integral over $t$ is
\begin{equation}
\int_{-\infty}^\infty \frac{\dif t}{\sqrt{\pi}} e^{-t^2}( u_1+2\sqrt{u_2}t)^j
=(2u_2)^{j}\exp\left[-\frac{u_1^2}{4u_2}\right]\partial_{u_1}^j\exp\left[\frac{u_1^2}{4u_2}\right]=(-i\sqrt{u_2})^{j} H_j\left(i\frac{u_1}{2\sqrt{u_2}}\right),
\end{equation}
where $H_j$ are the Hermite polynomials orthogonal with respect to the weight $e^{-x^2}$. The square root in $u_2$ is not an issue because the combination with the Hermite polynomial and the factor $u_2^{j/2}$ guarantees that the whole expression is a polynomial in $u_2$ and, thence, entire.  Therefore, the second order Borel transform is in terms of a series equal to
\begin{equation}\label{eq: borel transf.2d}
[\mathcal{B}_{\chi,2}f](\vec u) =\sum_{j=0}^\infty \frac{\partial_\chi^jf(\chi)}{j!^2}  (-i\sqrt{u_2})^{j} H_j\left(i\frac{u_1}{2\sqrt{u_2}}\right).
\end{equation}
It can be readily checked in this formula that $[\mathcal{B}_{\chi,2}f](u_1,u_2=0)=[\mathcal{B}_{\chi}f](u_1)$ and that $[\mathcal{B}_{\chi,2}f](u_1,u_2)$ is an entire function in both variables $u_1$ and $u_2$.

Finding some representation of the higher order Borel transform beyond $d=2$ which is similarly explicit and compact as~\eqref{eq: borel transf} or~\eqref{eq: borel transf.2d} is still an open problem. The reason is that increasingly more variables enter the expansion. It is actually quite remarkable for $d=1$ and $d=2$, that the sum of the two powers in $u_1$ and $u_2$ has an explicit form in terms of a known family of functions, namely the Hermite polynomials.
\end{example}

When combining this definition with Corollary~\ref{cor:special results}, we arrive at the following two results. For the second formulations which are of combinatorial nature, we
need the multinomial
\begin{equation}
    \binom{m}{r_1, r_2, \dots, r_k} :=\frac{m!}{r_1!\,r_2!\cdots r_k!}\ ,
\end{equation}
with $\sum_{l=1}^k r_l=m$.

\begin{corollary}
\label{cor: factoring derivatives}\

\begin{enumerate}
\item	Under the conditions of Corollary~\ref{cor:special results}.1 and the special case discussed at the beginning of Sec.~\ref{subsec:Borel} it is
\begin{align}
    \lim_{\vec x\to \chi}\left(\prod_{j=h+1}^{k}\partial_{x_j}\right) \frac{{\rm det}\left[ B_b(x_a)\right]_{\substack{a,b=1,\ldots,k}}}{\Delta_k(\vec x)}
    &=\lim_{u\to 0}\partial_u^{k-h} {\rm det}\left[\partial_u^{a-1} [\mathcal{B}_{\chi}B_b](u)  \right]_{\substack{a,b=1,\ldots,k}}
    \label{eq: factoring derivatives}\\
    &= \sum_{\substack{0 \leq r_a \leq h-k \\ \sum_a r_a = h-k}} \binom{h-k}{r_1, r_2, \dots, r_k} \frac{\det\left[ \partial_\chi^{r_a+a-1}B_b(\chi)\right]_{{a,b=1,\ldots,k}} }{\prod_{c=1}^k(r_c+c-1)!} .
    \label{eq: factoring derivatives2}
\end{align}

\item	Under the conditions of Corollary~\ref{cor:special results}.1 and the special case discussed at the beginning of Sec.~\ref{subsec:Borel} it is
\begin{align}
&\lim_{\vec x\to \chi, \vec{y}\to \vec{\xi}}\left(\prod_{j=h+1}^{k}\partial_{x_{j}}\right) \left(\prod_{j=h+1}^{k}\partial_{y_{j}}\right) \frac{1}{\Delta_k(\vec x)\Delta_k(\vec{y})}\det\left[ B(x_a,y_b) \right]_{\substack{a,b=1,\ldots,k}}\nonumber\\
=&\lim_{u,v\to 0}\partial_u^{k-h}\partial_{v}^{k-h}\det\left[\partial_u^{a-1}\partial_v^{b-1} [\mathcal{B}_{{\chi}}\otimes\mathcal{B}_{{\xi}}B](u,v) \right]_{\substack{a,b=1,\ldots,k}}\label{firstBorelU}\\
=&\sum_{\substack{0 \leq r_a \leq k-h \\ \sum_a r_a = k-h}} 
\sum_{\substack{0 \leq s_b \leq k-h \\ \sum_b s_b = k-h}} \binom{k-h}{r_1, r_2, \dots, r_k}\binom{k-h}{s_1, s_2, \dots, s_k}\frac{\det\left[ \partial_\chi^{r_a+a-1}\partial_\xi^{s_b+b-1}B(\chi,\xi)\right]_{{a,b=1}}^{k} }{\prod_{c=1}^k(r_c+c-1)!(s_c+c-1)!}  .
   \label{firstBorelU2}
\end{align}
\end{enumerate}
\end{corollary}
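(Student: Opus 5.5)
The plan is to specialise Corollary~\ref{cor:special results} to $L=d=1$ and $P=k$, with $n_{1,j}=0$ for $j\le h$ and $n_{1,j}=1$ for $j>h$. Then $m_{1,1}=k-h$, and by \eqref{D.k1} the operator on the right-hand side of \eqref{special.result.1} is $D_{\vec u_1,1}^{k-h}=\partial_u^{k-h}$ with $u=u_{1,1}$. Since $d=1$, only $F_1(u,\chi,\zeta)=u/(\zeta-\chi)$ enters. Substituting $\zeta=\chi+z$, the transform \eqref{transform} becomes
\begin{equation*}
[\mathcal{K}_{\chi,\alpha}B_b](u)=\oint\frac{\dif z}{2\pi i}\,B_b(\chi+z)\,e^{u/z}\,\frac{(\chi+z)^{k-\alpha}}{z^{k}} .
\end{equation*}
The whole first equality then reduces to identifying $\det[[\mathcal{K}_{\chi,\alpha}B_b](u)]_{\alpha,b=1}^k$ with $\det[\partial_u^{\alpha-1}[\mathcal{B}_\chi B_b](u)]_{\alpha,b=1}^k$ as functions of $u$. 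Both sides are entire in $u$, so the limit $u\to0$ and $\partial_u^{k-h}$ may be applied afterwards.

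For this identification, differentiate \eqref{eq: borel transf2} under the integral:
\begin{equation*}
\partial_u^{\beta-1}[\mathcal{B}_\chi f](u)=\oint\frac{\dif z}{2\pi i}\,f(\chi+z)\,e^{u/z}\,\frac{z^{k-\beta}}{z^{k}} .
\end{equation*}
Expanding binomially, $(\chi+z)^{k-\alpha}=\sum_{\beta\ge\alpha}\binom{k-\alpha}{\beta-\alpha}\chi^{\beta-\alpha}z^{k-\beta}$. This is upper triangular in $(\alpha,\beta)$ with unit diagonal, and the coefficients do not depend on $b$ or $u$. Hence row $\alpha$ of the $\mathcal{K}$-matrix equals row $\alpha$ of the Borel-derivative matrix plus a combination of rows $\beta>\alpha$, so the determinant is unchanged. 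This gives \eqref{eq: factoring derivatives}.

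For the combinatorial form \eqref{eq: factoring derivatives2}, note that $\partial_u$ acts on every row of the determinant. The generalised Leibniz rule for $\partial_u^{k-h}$ of a product over rows (after expanding the determinant) therefore gives the multinomial sum over $r_1+\dots+r_k=k-h$, with row $a$ becoming $\partial_u^{r_a+a-1}[\mathcal{B}_\chi B_b](u)$. Then set $u=0$ using the series \eqref{eq: borel transf}: $\partial_u^j[\mathcal{B}_\chi f](0)=j!\,\partial_\chi^j f(\chi)/j!^2=\partial_\chi^jf(\chi)/j!$. Pulling $1/(r_a+a-1)!$ out of each row yields the stated formula, where the exponent in the multinomial is $k-h$ (the ``$h-k$'' in the statement should read $k-h$).

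Part 2 follows in exactly the same way from \eqref{special.result.2}, with $\widetilde P=k$ and $\widetilde m_{1,1}=k-h$. The triangular row reduction is applied in the $x$-variables and the analogous column reduction in the $y$-variables, the latter using $\zeta'=\xi+z'$ and $\partial_v^{b-1}$. Leibniz is then applied separately in $u$ (rows) and $v$ (columns), which produces the double multinomial sum and the product of factorials in \eqref{firstBorelU2}. The only step requiring care is the triangular row-operation argument, in particular checking that the unit diagonal makes the transformation determinant $1$ and that the index ordering $\alpha\leftrightarrow a-1$ matches. Everything else is bookkeeping, since all expressions are entire in $u$ and $v$ and differentiation under the contour integral is justified.
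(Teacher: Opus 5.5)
Your proposal is correct and follows essentially the same route as the paper: you specialise Corollary~\ref{cor:special results} to $L=d=1$, $P=k$, then use the unitriangular binomial change of basis between $\zeta^{k-\alpha}$ and $(\zeta-\chi)^{k-\alpha}$ (written in the inverse direction, which is immaterial) to replace the rows by $\partial_u^{\alpha-1}[\mathcal{B}_\chi B_b](u)$, and finally extract the coefficients at $u=0$. Your row-wise Leibniz rule is just a repackaging of the paper's step of inserting the Borel series and pulling the sums out by multilinearity, and you are right that the multinomial index in \eqref{eq: factoring derivatives2} should read $k-h$ rather than $h-k$.
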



\begin{proof}
We consider a single limiting point $\chi_1=\chi$ in~\eqref{special.result.1}, that is $L=1$, with $h$ first derivatives ($d=1$) and none in the other $k-h$ variables, i.e., $P=k$.
In that case, the result reads
\begin{equation}\label{main.result.L=1}
    \lim_{\vec x\to \chi}\left(\prod_{j=h+1}^{k}\partial_{x_j}\right) \frac{1}{\Delta_k(\vec x)}{\rm det}\left[ B_b(x_a)\right]_{{a,b=1,\ldots, k}}\\
=\lim_{u\to 0}\partial_u^{k-h} {\rm det}\left[ [\mathcal{K}_{\chi,\alpha}B_b](u) \right]_{{\alpha,b=1,\ldots,k}},
\end{equation}
and the transformation in \eqref{main.result.L=1} equals
\begin{equation}
    [\mathcal{K}_{\chi,\alpha} f](u)=\oint_{|\zeta-\chi|=R/2}\frac{\dif \zeta\, f(\zeta)}{2\pi i } \exp\left[\frac{u}{\zeta-\chi}\right] \frac{\zeta^{k-\alpha}}{(\zeta-\chi)^k},
\end{equation}
where we recall that $R$ is the radius of the open neighbourhood centered at $\chi$ where the function $f$ is holomorphic.
To connect \eqref{main.result.L=1} to the Borel transform, we note that the span of $\{1, \zeta, \zeta^2,\dots,\zeta^{k-1}\}$ and of the monic polynomials $\{1,(\zeta-\chi),\dots,(\zeta-\chi)^{k-1}\}$ are the same  by a binomial transformation.
In particular, we can recombine the entries in the determinant on the right hand side of~\eqref{main.result.L=1} as follows
\begin{align}
    \sum_{\beta=\alpha}^{k} (-\chi)^{\beta-\alpha}\binom{k-\alpha}{\beta-\alpha}  [\mathcal{K}_{\chi,\beta} f](u)  
    &= \oint_{|\zeta-\chi|=R/2}\frac{\dif \zeta\, f(\zeta)}{2\pi i } \exp\left[\frac{u}{\zeta-\chi}\right] \frac{(\zeta-\chi)^{k-\alpha}}{(\zeta-\chi)^k}=\partial_u^{\alpha-1} [\mathcal{B}_{\chi} f](u)\ ,
\end{align}
yielding the claim~\eqref{eq: factoring derivatives} and when applying this twice also~\eqref{firstBorelU}.
In the first equality, we have carried out the binomial sum after shifting $\beta\to \beta + \alpha$, and in the second one we have shifted $\zeta=\chi+z$.

The combinatorial formula follows when employing the series representation~\eqref{eq: borel transf} of the Borel transform and then pulling the sums out of the determinant by using the multi-linearity thereof, i.e., it is
\begin{equation}
\begin{split}
\lim_{u\to 0}\partial_u^{k-h} {\rm det}\left[\partial_u^{a-1} [\mathcal{B}_{\chi}B_b](u)  \right]_{\substack{a,b=1,\ldots,k}}=&\lim_{u\to 0}\partial_u^{k-h} {\rm det}\left[ \sum_{j=a-1}^\infty\frac{\partial_\chi^jB_b(\chi)}{j!(j-a+1)!}u^{j-a+1}\right]_{\substack{a,b=1,\ldots,k}}\\
&\hspace*{-4cm}=\lim_{u\to 0}\partial_u^{k-h} {\rm det}\left[ \sum_{r=0}^\infty\frac{\partial_\chi^{r+a-1}B_b(\chi)}{(r+a-1)!r!}u^{r}\right]_{\substack{a,b=1,\ldots,k}}\\
&\hspace*{-4cm}=\lim_{u\to 0}\partial_u^{k-h} \sum_{r_1,\ldots,r_k=0}^\infty\left(\prod_{a=1}^k\frac{u^{r_a}}{r_a!(r_a+a-1)!}\right) {\rm det}\left[ \partial_\chi^{r+a-1}B_b(\chi)\right]_{\substack{a,b=1,\ldots,k}}.
\end{split}
\end{equation}
The derivative in $u$ with the limit $u\to0$ yields the factorial $(h-k)!$ in the numerator of the multinomial and restricts the multi-sum to $\sum_{a=1}^kr_a=k-h$. This gives~\eqref{eq: factoring derivatives2} and~\eqref{firstBorelU2} when applying this calculation twice.
\end{proof}

In this proof it becomes clear that one can actually readily generalise the result in terms of the Borel transform to higher derivatives as long as we keep only a single limiting point ($L=1$). The formula becomes only marginally more complicated as it will involve the derivative operators $D_{\vec u,j}$ given by the recurrence~\eqref{recurrence.diff}. It is the counterpart of expectation values of moments of a characteristic polynomial and its derivatives at a single point as it has been discussed in~\cite{KS00b,SimmWei}.

\begin{corollary}
\label{cor: factoring derivatives.higher}\

We consider the setting of Corollary~\ref{cor: factoring derivatives}.1  with the only difference in the order of the derivatives which are given by the partition $\vec h=(h_0,\ldots, h_d)\in\mathbb{N}_0^{d+1}$ and $\sum_{j=0}^dh_j=k$ and $(n_1,\ldots,n_k)=
(0,\ldots,0,1,\ldots,1,\ldots,d,\ldots,d)$, with multiplicities $h_j$. Then, we obtain the more general result
\begin{align}
    \lim_{\vec x\to \chi}\left(\prod_{j=1}^{k}\partial_{x_j}^{n_j}\right) \frac{{\rm det}\left[ B_b(x_a)\right]_{\substack{a,b=1,\ldots,k}}}{\Delta_k(\vec x)}
    &=\lim_{\vec u\to 0}\left(\prod_{j=1}^dD_{\vec u,j}^{h_j}\right) {\rm det}\left[\partial_{u_1}^{a-1}  [\mathcal{B}_{\chi,d}B_b](\vec u) \right]_{\substack{a,b=1,\ldots,k}}.
    \label{eq: factoring derivatives3}
\end{align}
\end{corollary}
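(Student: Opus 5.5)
The plan is to mirror the proof of Corollary~\ref{cor: factoring derivatives}.1, replacing the first-order transform by the $d$-th order one. First I would specialise Corollary~\ref{cor:special results}.1 to a single limiting point, $L=1$, $P=P_1=k$, with derivative orders $(n_1,\ldots,n_k)$ as in the statement, so that $m_{1,j}=h_j$. This gives
\begin{equation*}
\lim_{\vec x\to \chi}\left(\prod_{j=1}^{k}\partial_{x_j}^{n_j}\right) \frac{{\rm det}\left[ B_b(x_a)\right]}{\Delta_k(\vec x)}=\lim_{\vec u\to 0}\left(\prod_{j=1}^dD_{\vec u,j}^{h_j}\right){\rm det}\left[[\mathcal{K}_{\chi,\alpha}B_b](\vec u)\right]_{\alpha,b=1,\ldots,k},
\end{equation*}
where
\begin{equation*}
[\mathcal{K}_{\chi,\alpha}f](\vec u)=\oint\frac{\dif\zeta\, f(\zeta)}{2\pi i}\exp\left[F_d(\vec u,\chi,\zeta)\right]\frac{\zeta^{k-\alpha}}{(\zeta-\chi)^k}.
\end{equation*}
The $j=0$ factor is trivial, since $D_{\vec u,0}$ is the identity.

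Next comes the same triangular recombination of rows. Take the combinations $\sum_{\beta\ge\alpha}(-\chi)^{\beta-\alpha}\binom{k-\alpha}{\beta-\alpha}[\mathcal{K}_{\chi,\beta}f]$. They replace $\zeta^{k-\alpha}$ by $(\zeta-\chi)^{k-\alpha}$ in the integrand. The transformation matrix is upper unitriangular, so the determinant is unchanged. After the shift $\zeta=\chi+z$, row $\alpha$ becomes
\begin{equation*}
\oint\frac{\dif z}{2\pi i}\, z^{-\alpha} f(\chi+z)\exp\left[\sum_{l=1}^d (l-1)!\,u_l z^{-l}\right].
\end{equation*}

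The remaining point is to identify this with $\partial_{u_1}^{\alpha-1}[\mathcal{B}_{\chi,d}f](\vec u)$. Differentiating \eqref{eq: borel transf2.higher} under the integral in $u_1$ brings down exactly $z^{-1}$ each time, because the coefficient of $u_1$ in the exponent is $0!\,z^{-1}=z^{-1}$. So $\alpha-1$ derivatives produce $z^{-\alpha}$ together with the factor $1/z$ already present in the Borel measure. Differentiation under the integral is legitimate because the contour $|z|=R/2$ is compact and the integrand is entire in $\vec u$. Substituting this identification into the specialised identity gives \eqref{eq: factoring derivatives3}.

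I expect no real obstacle; the only care needed is bookkeeping. One must check that the $(l-1)!$ normalisation in $F_d$ matches the exponent in the higher Borel transform, and that the sign convention $F_d(\vec u,\chi,\zeta)$ uses $(\zeta-\chi)^{-l}=z^{-l}$ consistently. One must also confirm that the row operations commute with the operators $D_{\vec u,j}$ and with the limit $\vec u\to0$, which holds since the coefficients are independent of $\vec u$.
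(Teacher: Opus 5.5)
Your proposal is correct and follows the same route the paper intends. You specialise Corollary~\ref{cor:special results}.1 to $L=1$, $P=P_1=k$, then apply the unitriangular binomial row recombination from the proof of Corollary~\ref{cor: factoring derivatives}, which turns row $\alpha$ into $\partial_{u_1}^{\alpha-1}[\mathcal{B}_{\chi,d}B_b](\vec u)$. The paper only sketches this generalisation in a remark; your checks that the $(l-1)!$ normalisation in $F_d$ matches the higher Borel transform and that each $\partial_{u_1}$ brings down exactly $z^{-1}$ supply the details it leaves implicit.
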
 

\begin{example}
In the simplest case beyond $d=1$, namely $d=2$, this case can be explicitly written as follows
\begin{equation}
\begin{split}
&\lim_{\vec x\to \chi}\left(\prod_{j=h_0+1}^{h_0+h_1}\partial_{x_j}\right)\left(\prod_{j=h_0+h_1+1}^{k}\partial_{x_j}^2\right) \frac{{\rm det}\left[ B_b(x_a)\right]_{\substack{a,b=1,\ldots,k}}}{\Delta_k(\vec x)}\\
    =&\lim_{u_1,u_2\to 0}\partial_{u_1}^{h_1}(\partial_{u_1}^2+\partial_{u_2})^{k-h_0-h_1} {\rm det}\left[\partial_{u_1}^{a-1}  [\mathcal{B}_{\chi,2}B_b](u_1,u_2) \right]_{\substack{a,b=1,\ldots,k}}.
\end{split}
\end{equation}
\end{example}

When dealing with the symplectic or orthogonal class ($\rm C=S,O$) of expectation values of the form
\begin{equation}\label{part.S.O}
    \mathcal{Z}_{1,1}^{({\rm C})}\left(\chi,\overline\chi;\left[\begin{array}{cc} h, & k-h\\ h, & k-h\end{array}\right]\right)=\left\langle |D_N(\chi)|^{2h} |D_N(\chi)'|^{2k-2h}\right\rangle_{\rm C}
\end{equation}
we actually need Theorem~\ref{thm:main.result} with $P=2k, Q=0, L=2$, $P_1=P_2=k$, $d=1$ and $m_{1,0}=m_{2,0}=h$. Therefore, there is no matrix block with $B$ and $C$. For this purpose, we consider the map $\psi$ given by
\begin{equation}
    \psi(x_i) = \begin{cases}
        \chi, &\text{ if $i\in\{1,\dots,k\}$}\\
        \xi, &\text{ if $i\in \{k+1,\dots,2k\}$}
    \end{cases}
\end{equation}
and inverse images
\begin{equation}
    \{x_{1,1},\dots,x_{1,k}\} = \psi^{-1}(\chi), \quad 
    \{x_{2,1},\dots,x_{2,k}\} = \psi^{-1}(\xi).
\end{equation}

\begin{corollary}\label{cor:symplectic.case}
\label{thm: factoring derivatives pfaffian}
    Under the condition of Theorem~\ref{thm:main.result} and the above restriction it is
\begin{equation}
    \begin{aligned}
    &\lim_{\vec x \rightarrow \vec\chi}
   \left( \prod_{a=h+1}^k \partial_{x_{1,a}}\partial_{x_{2,a}}\right)
    \frac{1}{\Delta_{2k}(\vec x)}
    {\rm Pf}
        \begin{bmatrix}
            A(x_{1,a},x_{1,b})&A(x_{1,a},x_{2,b})\\
            A(x_{2,a} ,x_{1,b}) & A(x_{2,a},x_{2,b})\\
        \end{bmatrix}_{a,b=1,\ldots,k}   \\
        =&\lim_{u,v \rightarrow 0}
        (\partial_u
        \partial_v)^{k-h}
        {\rm Pf} 
        \left[
            [\mathcal{K}_{\vec{\chi},\alpha} \otimes \mathcal{K}_{\vec{\chi},\beta} A](u,v)
            \right]_{\alpha,\beta=1,\ldots,2k}
    \end{aligned}
\end{equation}
	with the integral transform
\begin{equation}\label{transform2}
    [\mathcal{K}_{\vec{\chi},\alpha} f](u,v)=\oint_{\mathcal{C}}\frac{\dif \zeta\, f(\zeta)}{2\pi i }\exp\left[\frac{u}{\zeta-\chi}+\frac{v}{\zeta-\xi}\right]\frac{\zeta^{2k-\alpha}}{(\zeta-\chi)^{k}(\zeta-\xi)^{k}},
\end{equation}
	where $\mathcal{C}$ encircles $\chi$ and $\xi$ close enough so that the contour lies in neighbourhoods about these point where  $f$ is holomorphic thereon.
\end{corollary}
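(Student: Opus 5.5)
This corollary is a direct specialisation of Theorem~\ref{thm:main.result}, so the plan is to fix the parameters and check that both sides reduce to the displayed formulas. We take $Q=0$, so that the $B$, $C$ blocks are absent. We set $P=2k$, $L=2$, $(\chi_1,\chi_2)=(\chi,\xi)$, $P_1=P_2=k$ and $d=1$. The map $\psi$ and the labelling $x_{1,a}$, $x_{2,a}$ are as above, and the derivative orders are $n_{l,a}=0$ for $a\le h$ and $n_{l,a}=1$ for $h<a\le k$. This gives the multiplicities $m_{1,1}=m_{2,1}=k-h$ and $m_{1,0}=m_{2,0}=h$. With the ordering $x_1,\ldots,x_{2k}=x_{1,1},\ldots,x_{1,k},x_{2,1},\ldots,x_{2,k}$, the matrix $[A(x_a,x_c)]_{a,c=1}^{2k}$ is exactly the $2\times 2$ block matrix in the statement, and $\Delta_{2k}(\vec x)$ is the Vandermonde in this order. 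Hence the left-hand side of~\eqref{main.result} coincides with that of the corollary. No reordering sign arises, because the single-label and double-label orderings agree.

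On the right-hand side, $d=1$ means each $\vec u_l$ consists of the single variable $u_{l,1}$. We rename $u:=u_{1,1}$ and $v:=u_{2,1}$. By~\eqref{D.k1}, $D_{\vec u_l,1}=\partial_{u_{l,1}}$, so the operator product becomes $\partial_u^{k-h}\partial_v^{k-h}=(\partial_u\partial_v)^{k-h}$. No operators $D_{\vec u_l,k}$ with $k\ge 2$ occur, since $m_{l,k}=0$ for $k\ge2$. The multiplicities $m_{l,0}$ contribute no operator, as the product in~\eqref{main.result} runs over $k\ge1$.

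Next we specialise the transform~\eqref{transform}. With $F_1(\vec u_l,\chi_l,\zeta)=u_{l,1}/(\zeta-\chi_l)$, the exponent is $u/(\zeta-\chi)+v/(\zeta-\xi)$. The prefactor is $\zeta^{P-\alpha}/\prod_l(\zeta-\chi_l)^{P_l}=\zeta^{2k-\alpha}/[(\zeta-\chi)^k(\zeta-\xi)^k]$, which gives exactly~\eqref{transform2}. The contour is the one from~\eqref{transform}, encircling both $\chi$ and $\xi$ within the analyticity domain of $A$. The doubly transformed entry $[\mathcal{K}_{\vec\chi,\alpha}\otimes\mathcal{K}_{\vec\chi,\beta}A](\boldsymbol u,\boldsymbol u)$ depends on $(u,v)$ in each of its two arguments, with the same variables in both. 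The shorthand $(u,v)$ in the corollary should be read this way, namely as both transforms evaluated at $(u,v)$, and not as $u$ in one slot and $v$ in the other. This reading is also forced by the single limit $u,v\to0$.

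The only genuinely delicate point is this bookkeeping of notation: the identification of the arguments of the tensor-product transform, and the consistency of the orderings and signs between the Pfaffian block structure and the Vandermonde. I would check both explicitly. Everything else is a literal substitution into Theorem~\ref{thm:main.result}, whose hypotheses (skew-symmetry and analyticity of $A$ at $\chi,\xi$, $P+Q=2k$ even and positive) are satisfied by assumption.
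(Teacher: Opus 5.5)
Your proposal is correct and is the paper's argument: the paper simply states that the corollary is a restriction of Theorem~\ref{thm:main.result} with $P=2k$, $Q=0$, $L=2$, $P_1=P_2=k$, $d=1$ and $m_{1,0}=m_{2,0}=h$. Your explicit checks spell out the bookkeeping this restriction entails: the block ordering of the Pfaffian matches the Vandermonde, $D_{\vec u_l,1}=\partial_{u_{l,1}}$, and $(u,v)$ is read as $\boldsymbol u=(u_{1,1},u_{2,1})$ in both slots of the tensor transform.
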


The proof is trivial as the corollary is only a restriction of the general Theorem~\ref{thm:main.result}.
To compute the integral transform explicitly, it seems to be that an explicit $f$ is needed. The general integral is surely a challenge and poses an open problem.

\subsection{Higher order derivatives and Kostka numbers}
\label{subsec:Kostka}

Already in the results of the previous section, particularly in~\eqref{eq: factoring derivatives2} and~\eqref{firstBorelU2}, it could be seen that the problem has also a combinatorial interpretation. This is the reason why we present also a second approach.
This is based on the theory of symmetric functions. Thus,  we first introduce our notation for integer partitions, Young diagrams, tableaux and some associated quantities (compare~\cite[\S 4.1]{FultonHarris2004}).

An {\it integer partition} is a (finite or infinite) sequence
\begin{equation}
    \vec\lambda = (\lambda_1, \lambda_2, \dots, \lambda_j, \dots)
\end{equation}
of non-negative integers in semi-decreasing order
\begin{equation}\label{ordering}
    \lambda_1 \geq \lambda_2 \geq \dots \geq \lambda_j\geq \dots
\end{equation}
with finitely many non-zero terms.
We do not distinguish between partitions differing by zeros, e.g. $(2,1) \sim (2,1,0) \sim (2,1,0,0,\dots)$.
The non-zero $\lambda_j$ are called {\it parts} of $\vec\lambda$ and the number of parts is the {\it length} of $\vec\lambda$ denoted by $l(\vec\lambda)=\#\{\lambda'\in\{\lambda_1,\lambda_2,\ldots\}|\lambda'>0\}$.
Furthermore, we define
\begin{equation}
    |\vec\lambda| = \sum_{j=1}^\infty \lambda_j,
\end{equation}
and write $\vec\lambda \vdash m$ to denote a partition $\vec\lambda$ such that $|\vec\lambda| = m$. In particular, a sum over $\vec\lambda \vdash m$ with $l(\vec\lambda)\leq k$ implies a sum over the $k$ variables $\lambda_1,\ldots,\lambda_k=0,\ldots,m$ with $ |\vec\lambda| =m$ and ordering~\eqref{ordering}.
For a non-negative integer $k$ and a partition $\vec\lambda$ with $l(\vec\lambda) \leq k$, we define the strictly ascending \textit{shifted sequence} $\widehat{\lambda}$ by
\begin{equation}\label{lambda.hat.rel}
    \widehat{\lambda}_j = j - 1+\lambda_{k - j+ 1} , \text{ with }1 \leq j\leq k.
\end{equation}
For brevity, we do not indicate the integer $k$ in the notation.
We use the following shorthand for the product of factorials of a tuple of integers $\vec{r} = (r_1,\dots,r_k)$ of length $k$
\begin{equation}
    \vec{r}! = \prod_{j=1}^{k} r_j!
\end{equation}

Associated with any partition $\vec\lambda \vdash m$, there is a corresponding Young diagram $Y_{\vec\lambda}$, which consists of $|\vec\lambda|$ square boxes arranged according to $\vec\lambda$.
Specifically, the $j$th row of $Y_{\vec\lambda}$ has $\lambda_j$ boxes, for example
\begin{equation}
    \vec\lambda = (4,2,2,1) \longleftrightarrow Y_{\vec\lambda} = \ytableaushort{{} {} {} {}, {} {}, {} {}, {}}
\end{equation}
We will use $\vec\lambda$ and $Y_{\vec\lambda}$ interchangeably in what follows.

A Young tableau of shape $\vec\lambda \vdash m$ and weight $\vec\alpha \vdash m$ is a filling of the boxes of $Y_{\vec\lambda}$ with integers $\{1,\dots, l(\vec\alpha)\}$.
In particular, exactly $\alpha_j$ copies of the integer $j$ is used in the filling.
For example, 
\begin{equation}
    \ytableaushort{2 1 1, 3}  \label{eq: not SSYT}
\end{equation}
is a tableau with shape $\vec{\lambda} = (3,1)$ and weight $\vec{\alpha} = (2,1,1)$.
The set of Young tableaux of shape $\vec\lambda$ and weight $\vec\alpha$ is designated $\mathbf{YT}_{\vec\lambda,\vec\alpha}$.
For $|\vec\lambda| \neq |\vec\alpha|$, $\mathbf{YT}_{\vec\lambda,\vec\alpha}$ is empty as neither empty boxes in $Y_{\vec\lambda}$ nor leftover integers are allowed.
A Young tableau is called semi-standard if the filling is weakly increasing along rows and strictly increasing along columns.
We call $\mathbf{SSYT}_{\vec\lambda,\vec\alpha}$ the subset of tableaux in $\mathbf{YT}_{\vec\lambda,\vec\alpha}$ which are semi-standard.
For example,
\begin{equation}
    \ytableaushort{1 1 2, 3}, \quad \quad
    \ytableaushort{1 1 3, 2}, \label{eq: SSYTs}
\end{equation}
are semi-standard tableaux of shape $\vec\lambda = (3,1)$ and weight $\vec\alpha=(2,1,1)$, but the tableau in equation \eqref{eq: not SSYT} is not semi-standard.
The number of semi-standard Young tableaux of shape $\vec\lambda$ and weight $\vec\alpha$ is known as the Kostka number
\begin{equation}
    \label{eq:KostkaDef}
    K_{\vec\lambda,\vec\alpha} = |\mathbf{SSYT}_{\vec\lambda,\vec\alpha}|.
\end{equation}
Since the tableaux in~\eqref{eq: SSYTs} are all the semi-standard tableaux of shape $(3,1)$ and weight $(2,1,1)$, we have $K_{(3,1), (2,1,1)} = 2$.
Note that for $|\vec\alpha| \neq |\vec\lambda|$ the Kostka number $K_{\vec\lambda, \vec\alpha} = 0$. 

In the special case of $\vec\alpha = (1,\ldots,1)$ with $l(\vec\alpha)=m$, the Kostka number $K_{\vec\lambda,(1,\ldots,1)}$ is equal to the dimension of the irreducible representation of the symmetric group $S_m$ corresponding to the partition $\vec\lambda$.
This is one of the few special cases where the Kostka numbers have a known closed-form~\cite[\S 4.1]{FultonHarris2004},
\begin{equation}\label{eq:SymIrrepDim}
    K_{\vec\lambda,(1,\ldots,1)} = \frac{m!}{\prod_{b \in Y_{\vec\lambda}} h_{\vec\lambda}(b)}=\frac{m! \Delta_m(\widehat{\lambda})}{\widehat{\lambda}!},
\end{equation}
where the product is over all boxes in $Y_{\vec\lambda}$.
The hook length $h_{\vec\lambda}(b)$ is equal to the number of boxes in the hook shape going through the box $b$. 
For example, the dark grey box in the Young diagram below has a hook length equal to $4$
\begin{equation}
    \begin{ytableau}
        *(white) & *(white) & *(white) & *(white) \\
        *(white) & *(gray) & *(lightgray) & *(lightgray) \\
        *(white) & *(lightgray) & *(white)
    \end{ytableau} \, .
\end{equation}

We are now ready to state our first result for higher order derivatives of the ratio of a determinant and Vandermonde determinants. This theorem is the combinatorial counterpart of  Corollary~\ref{cor:special results}.2. It generalises, however, this result to higher derivatives. Hence, it covers the case $L=1$ but arbitrary $d$.

\begin{theorem}[Higher order derivatives of determinants -- combinatorial formula]\label{Thm-gen-det}\

    Given a positive integer $k$, two partitions $\vec\alpha$ and $\vec\beta$ of length $l(\vec\alpha),l(\vec\beta)\leq k$,
    and a function $B \colon \mathbb{C} \times \mathbb{C} \to \mathbb{C}$ analytic at the point $(x,y)=(\chi,\xi)$. It is
    \begin{equation}\label{del-gen-det}
    \lim_{(\vec{x},\vec{y}) \to (\chi,\xi)} \left( \prod_{a=1}^k \partial_{x_a}^{\alpha_a} \partial_{y_a}^{\beta_a} \right) \frac{\det[B(x_a,y_b)]_{a,b=1}^{k}}{\Delta_k(\vec{x}) \Delta_k(\vec{y})} = 
    \vec\alpha! \vec\beta! \sum_{\substack{\vec\mu \vdash |\vec\alpha|\\ l(\vec\mu) \leq k}} \sum_{\substack{\vec\lambda \vdash |\vec\beta| \\ l(\vec\lambda) \leq k}} \frac{K_{\vec\mu,\vec\alpha} K_{\vec\lambda,\vec\beta}}{\widehat{\mu}!\ \widehat{\lambda}!} \det\left[\partial_\chi^{\widehat{\mu}_a}\partial_{\xi}^{\widehat{\lambda}_b}B(\chi,\xi)\right]_{a,b=1}^{k} .
    \end{equation}
\end{theorem}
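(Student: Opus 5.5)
Expand $B$ in a double Taylor series around $(\chi,\xi)$ and reduce everything to the Cauchy--Binet formula combined with Schur functions. Write
\begin{equation*}
B(x,y)=\sum_{p,q\ge0}\frac{\partial_\chi^p\partial_\xi^qB(\chi,\xi)}{p!\,q!}(x-\chi)^p(y-\xi)^q,
\end{equation*}
which converges absolutely for $x,y$ near $(\chi,\xi)$. By translation invariance of the Vandermonde determinant we may shift $x_a\to\chi+x_a$ and $y_b\to\xi+y_b$ and take the limit $\vec x,\vec y\to0$. Cauchy--Binet applied to the factorisation $[B(x_a,y_b)]=X\,M\,Y^T$, with $X_{a,p}=x_a^p$, $Y_{b,q}=y_b^q$ and $M_{pq}=\partial^p_\chi\partial^q_\xi B/(p!q!)$, gives
\begin{equation*}
\det[B(x_a,y_b)]=\sum_{p_1<\dots<p_k}\ \sum_{q_1<\dots<q_k}\det[x_a^{p_c}]\,\det[M_{p_c q_d}]\,\det[y_b^{q_d}].
\end{equation*}
Writing $p_c=\widehat\mu_c$ and $q_d=\widehat\lambda_d$ for partitions $\vec\mu,\vec\lambda$ of length at most $k$, as in \eqref{lambda.hat.rel}, the quotient $\det[x_a^{\widehat\mu_c}]/\Delta_k(\vec x)$ is the Schur polynomial $s_{\vec\mu}(\vec x)$ by the bialternant formula. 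The ordering of columns, i.e.\ reversal $c\leftrightarrow k+1-c$, must be checked so that the signs match the convention $\Delta_k=\det[x_j^{k-1}]$. Hence
\begin{equation*}
\frac{\det[B(x_a,y_b)]}{\Delta_k(\vec x)\Delta_k(\vec y)}=\sum_{\vec\mu,\vec\lambda}s_{\vec\mu}(\vec x)\,s_{\vec\lambda}(\vec y)\,\frac{\det\bigl[\partial_\chi^{\widehat\mu_a}\partial_\xi^{\widehat\lambda_b}B(\chi,\xi)\bigr]_{a,b=1}^k}{\widehat\mu!\,\widehat\lambda!},
\end{equation*}
where the factorials have been pulled out of rows and columns of $\det[M_{p_cq_d}]$.

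Next apply $\prod_a\partial_{x_a}^{\alpha_a}$ and set $\vec x=0$. This extracts $\vec\alpha!$ times the coefficient of the monomial $x^{\vec\alpha}$ in $s_{\vec\mu}(\vec x)$. By the combinatorial definition $s_{\vec\mu}=\sum_T x^{\mathrm{wt}(T)}$ over semistandard tableaux, that coefficient is exactly $K_{\vec\mu,\vec\alpha}$, as defined in \eqref{eq:KostkaDef}. Only $\vec\mu\vdash|\vec\alpha|$ contribute, since $s_{\vec\mu}$ is homogeneous of degree $|\vec\mu|$. The same holds in $\vec y$ with $\vec\beta$. Since $\vec\alpha$ is a partition, the Kostka number here is the standard one; symmetry of Schur functions would in any case allow any weight ordering. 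This gives \eqref{del-gen-det}.

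The main obstacle is analytic justification: interchanging the derivatives and the limit with the infinite Cauchy--Binet sum. I would handle it by noting that the left-hand side is analytic in a polydisc, since the quotient is a symmetric analytic function with removable singularities. The double series of Schur functions converges absolutely and locally uniformly there, which follows from Cauchy estimates on $M_{pq}$ and the bound $|s_{\vec\mu}(\vec x)|\le K\cdot r^{|\vec\mu|}$ with polynomially growing counts. Term-by-term differentiation at the origin is therefore legitimate, and only finitely many terms survive. A secondary bookkeeping point is the sign convention relating $\det[x_a^{\widehat\mu_c}]$ to $\Delta_k$, which should come out consistently for both $\vec x$ and $\vec y$ with the ascending ordering of $\widehat\mu$.
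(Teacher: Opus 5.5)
Your proposal is correct and follows essentially the paper's route: expand the shifted ratio in Schur polynomials $S_{\vec\mu}(\vec x)S_{\vec\lambda}(\vec y)$ with coefficients $\det[\partial_\chi^{\widehat\mu_a}\partial_\xi^{\widehat\lambda_b}B(\chi,\xi)]/(\widehat\mu!\,\widehat\lambda!)$, then read off $\vec\alpha!\,\vec\beta!\,K_{\vec\mu,\vec\alpha}K_{\vec\lambda,\vec\beta}$ from the monomial expansion of the Schur polynomials. The one difference is how the coefficients are obtained: you get them directly by Cauchy--Binet on the double Taylor series, while the paper projects onto Schur functions with the Hall inner product and evaluates the resulting contour integrals with Andr\'eief's identity, which is the integral form of the same Cauchy--Binet step. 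Your sign worry resolves trivially, since both $\widehat\mu$ and the columns of $\Delta_k$ are in ascending order.
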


The proof is provided in Subsection~\ref{proof:Thm-gen-unitary}. 
For a related very recent result at $\chi=\xi$ in the CUE we refer to \cite{GMN}.
One can apply this theorem to higher order derivatives of characteristic polynomials in the unitary class. It actually has a simple Pfaffian determinant counterpart.
With the same notation as introduced above, the following holds 
 which is also proven in Subsection~\ref{proof:Thm-gen-unitary}.

\begin{theorem}[Higher order derivatives of Pfaffians with $L=1$ -- combinatorial formula]\label{Thm-gen-pfaffian}\

    Consider an integer $k\in\mathbb{N}$, a partition $\vec \alpha$ with length $l(\vec\alpha) \leq k$,
    and a skew-symmetric function $A \colon \mathbb{C} \times \mathbb{C} \to \mathbb{C}$ analytic at $(x,y)=(\chi,\chi)$. Then, we have
    \begin{equation}\label{del-gen-pfaffian}
        \lim_{\vec x \to \chi} \left( \prod_{j=1}^{2k} \partial_{x_j}^{\alpha_j} \right) \frac{\Pf[A(x_a,x_b)]_{a,b=1,\ldots,2k}}{\Delta_{2k}(\vec x)} = 
    \vec\alpha!  \sum_{\substack{\vec\mu \vdash |\vec\alpha| \\ l(\vec\mu) \leq 2k}} \frac{K_{\vec\mu,\vec\alpha}}{\widehat{\mu}!} \Pf\left[\partial_u^{\widehat{\mu}_a}\partial_v^{\widehat{\mu}_b}A(u,v)|_{u=v=\chi}\right]_{a,b=1,\ldots,2k} .
    \end{equation}
\end{theorem}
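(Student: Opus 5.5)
Expand the Pfaffian in a double Taylor series around $(\chi,\chi)$ and divide by the Vandermonde using Schur functions. This is the same strategy as for Theorem~\ref{Thm-gen-det}, but with only one set of variables.

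\textbf{Step 1 (power series).} By analyticity and skew-symmetry, write $A(x,y)=\sum_{p,q\ge0} a_{pq}(x-\chi)^p(y-\chi)^q$ with $a_{pq}=-a_{qp}$, where $a_{pq}=\partial_u^p\partial_v^qA(u,v)|_{u=v=\chi}/(p!q!)$. Shifting all variables, we may take $\chi=0$.

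\textbf{Step 2 (Pfaffian Cauchy--Binet).} Use the minor-summation formula (de Bruijn / Ishikawa--Wakayama) for $\Pf[X^T\mathcal{A}X]$, with $X_{p,a}=x_a^p$ and $\mathcal{A}=(a_{pq})$. This gives
\begin{equation*}
\Pf[A(x_a,x_b)]_{a,b=1}^{2k}=\sum_{0\le \ell_1<\dots<\ell_{2k}}\Pf[a_{\ell_i\ell_j}]_{i,j=1}^{2k}\det[x_a^{\ell_j}]_{a,j=1}^{2k}.
\end{equation*}
Writing $\ell_j=\widehat\mu_j=j-1+\mu_{2k-j+1}$ for a partition $\vec\mu$ with $l(\vec\mu)\le 2k$, we get $\det[x_a^{\widehat\mu_j}]/\Delta_{2k}(\vec x)=s_{\vec\mu}(\vec x)$, the Schur polynomial. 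The ordering convention for $\Delta$ in \eqref{Vander} is exactly the one making this ratio equal to $+s_{\vec\mu}$. The division by the Vandermonde is therefore exact termwise, and we obtain a convergent expansion of an entire symmetric function near $0$:
\begin{equation*}
\frac{\Pf[A(x_a,x_b)]}{\Delta_{2k}(\vec x)}=\sum_{\vec\mu,\ l(\vec\mu)\le 2k}\Pf[a_{\widehat\mu_i\widehat\mu_j}]\,s_{\vec\mu}(\vec x).
\end{equation*}
Convergence follows from the Cauchy bounds on $a_{pq}$ together with $|s_{\vec\mu}(\vec x)|\le s_{\vec\mu}(1,\dots,1)\max|x_a|^{|\vec\mu|}$ and the polynomial growth of $s_{\vec\mu}(1^{2k})$. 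This justifies termwise differentiation.

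\textbf{Step 3 (extracting derivatives at $0$).} The operator $\prod_j\partial_{x_j}^{\alpha_j}$ evaluated at $\vec x=0$ picks out $\vec\alpha!$ times the coefficient of the monomial $x^{\vec\alpha}$. In $s_{\vec\mu}$ this coefficient is $K_{\vec\mu,\vec\alpha}$ by the combinatorial definition of Schur functions, using the symmetry of Kostka numbers in the order of the weight. It is nonzero only when $|\vec\mu|=|\vec\alpha|$. Finally, $\Pf[a_{\widehat\mu_i\widehat\mu_j}]=\Pf[\partial_u^{\widehat\mu_i}\partial_v^{\widehat\mu_j}A]/\widehat\mu!$, since the Pfaffian scales by $\prod_i c_i$ under $M_{ij}\to c_ic_jM_{ij}$. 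Assembling these gives \eqref{del-gen-pfaffian}.

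\textbf{Main obstacle.} The main obstacle is Step 2: correctly invoking the Pfaffian minor-summation identity with an infinite index set, and tracking signs and orderings so that the result is exactly $+s_{\vec\mu}$ with the $\widehat\mu$ ordering. I would verify it by truncating $A$ to a polynomial of degree $<M$ (finite Cauchy--Binet for Pfaffians), then letting $M\to\infty$ using the bounds above. Alternatively, one can expand the Pfaffian over perfect matchings, multilinearise, and antisymmetrise in the exponents to reach the same sum.
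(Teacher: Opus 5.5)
Your proof is correct and follows essentially the same route as the paper. You expand the shifted ratio in Schur polynomials, identify each coefficient as $\Pf\bigl[\partial_u^{\widehat\mu_a}\partial_v^{\widehat\mu_b}A/(\widehat\mu_a!\,\widehat\mu_b!)\bigr]$, pull out $1/\widehat\mu!$, and read off $\vec\alpha!\,K_{\vec\mu,\vec\alpha}$ from the monomial expansion of $S_{\vec\mu}$. The one difference is how the coefficients are computed: you use the discrete Pfaffian minor-summation (Ishikawa--Wakayama) formula on the Taylor-coefficient matrix, whereas the paper projects onto $S_{\vec\mu}$ with the Hall inner product on small circles and evaluates the integral with de Bruijn's formula. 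Your route has the small advantage of proving that the Schur expansion exists and converges, which the paper assumes.
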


It is sometimes useful, especially when considering expectation values of the form~\eqref{part.S.O} or similar, to compute for the Pfaffian the case of $L=2$.
As we have seen in Corollary~\ref{thm: factoring derivatives pfaffian} this becomes quickly involved, even when one restricts to derivatives of at most first order ($d=1$).
Yet, for arbitrary orders of derivatives in the case of two limiting points we arrive at the following result that is derived in Subsection \ref{proof:Thm-gen-symplectic-abs}.

\begin{theorem}[Higher order derivatives of Pfaffians with $L=2$ - combinatorial formula]\label{Thm-gen-symplectic-abs}\

Given a positive integer $k$, a partition $\vec \alpha$, and a skew-symmetric function $A \colon \mathbb{C} \times \mathbb{C} \to \mathbb{C}$
holomorphic at the three points $(\chi,\chi)$, $(\chi,\xi)$ and  $(\xi,\xi)$, the following holds
\begin{equation}
\begin{aligned}\label{del-gen-symplectic-abs}
    \lim_{\vec x \to (\chi,\xi)} &\Big( \prod_{j=1}^k (\partial_{x_{1,j}} \partial_{x_{2,j}})^{\alpha_j} \Big) \frac{1}{\Delta_{2k}(\vec x)}
        \Pf\begin{bmatrix}
            A(x_{1,a},x_{1,b}) & A(x_{1,a},x_{2,b}) \\
            A(x_{2,a},x_{1,b}) & A(x_{2,a},x_{2,a})
        \end{bmatrix}_{a,b=1,\ldots,k} \\
    & = \frac{(\vec\alpha!)^2 (-1)^{|\vec\alpha|}}{(\xi - \chi)^{2|\vec\alpha| + k^2}}
    \sum_{0 \leq q,q' \leq |\vec\alpha|} (-1)^{q'} (\xi - \chi)^{q + q'}
    \sum_{\substack{\vec\nu \vdash q \\ l(\vec\nu) \leq k}} \frac{1}{\widehat{\nu}!} \sum_{\substack{\vec\eta \vdash q' \\ l(\vec\eta) \leq k}} \frac{1}{\widehat{\eta}!}
    \\ 
    & \hspace{-1cm}\times
    \Pf
    \begin{bmatrix}
        \partial_u^{\widehat{\nu}_a}\partial_v^{\widehat{\nu}_b}A(u,v)|_{u=v=\chi}& \partial_\chi^{\widehat{\nu}_a}\partial_\xi^{\widehat{\eta}_b}A(\chi,\xi) \\
        \partial_\xi^{\widehat{\eta}_a}\partial_\chi^{\widehat{\nu}_b}A(\xi,\chi) & \partial_u^{\widehat{\eta}_a}\partial_v^{\widehat{\eta}_b}A(u,v)|_{u=v=\xi}
    \end{bmatrix}_{a,b=1,\ldots,k}\,
    \sum_{\substack{\vec\lambda,\vec\mu \vdash |\vec\alpha| \\ l(\vec\lambda),l(\vec\mu) \leq k}} K_{\vec\lambda,\vec\alpha} K_{\vec\mu,\vec\alpha} \widetilde{\mathcal{A}}\left(\substack{{\widehat\lambda}, {\widehat\nu} \\ {\widehat\mu}, {\widehat\eta}}\right),
\end{aligned}
\end{equation}
    where the coefficients $\widetilde{\mathcal{A}}$ are antisymmetric separately in each of the four tuples of variables. These coefficients, as functions of the upper tuples of size $m$ each and lower tuples of size $n$ each, admit the expansion
\begin{equation}
    \label{eq:def-mathcal-A-tilde}
    \widetilde{\mathcal{A}}\left( \substack{{\widehat\lambda}, {\widehat\nu} \\ {\widehat\mu}, {\widehat\eta}} \right) = \sum_{\sigma \in S_m , \tau \in S_n} \sign(\sigma) \sign(\tau) \mathcal{A}\left( \substack{\widehat{\lambda}_1 - \widehat{\nu}_{\sigma(1)}, \dots, \widehat{\lambda}_m - \widehat{\nu}_{\sigma(m)} \\ \widehat{\mu}_1 - \widehat{\eta}_{\tau(1)}, \dots, \widehat{\mu}_n - \widehat{\eta}_{\tau(n)}} \right)
\end{equation}
where
\begin{equation}
    \label{eq:def-mathcal-A}
    \mathcal{A}\Big( \substack{x_1, \dots, x_m \\ y_1, \dots, y_n} \Big) = \sum_{\substack{r^{(l)}_1+\cdots+r^{(l)}_m=  x_l  \\ l=1,\ldots,m}}
        \sum_{\substack{s^{(j)}_1+\cdots+s^{(j)}_n=  y_j \\ j=1,\ldots,n}} \prod_{j=1}^n \prod_{l=1}^m \binom{r_j^{(l)} + s_l^{(j)}}{r_j^{(l)}}
\end{equation}
is a multi-sum over non-negative integers and $\mathcal{A}$ is symmetric separately in both of the tuples.
\end{theorem}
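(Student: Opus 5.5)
The plan is to handle both groups of variables at once: expand $A$ in Taylor series around the relevant pairs of points, write the Pfaffian as a sum over products of Vandermonde-like determinants, and use Schur functions to take derivatives and limits, following the route of Theorems~\ref{Thm-gen-det} and~\ref{Thm-gen-pfaffian}.

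\textbf{Step 1: factorise the Vandermonde.} Write
\[
\Delta_{2k}(\vec x)=\Delta_k(\vec x_1)\,\Delta_k(\vec x_2)\prod_{a,b=1}^k(x_{2,b}-x_{1,a}).
\]
The cross factor is the only piece that does not degenerate in the limit. Its inverse will be expanded as a double power series in the small quantities $x_{1,a}-\chi$ and $x_{2,b}-\xi$ around $\xi-\chi$. Each factor
\[
(\xi-\chi+(x_{2,b}-\xi)-(x_{1,a}-\chi))^{-1}
\]
is expanded binomially, producing coefficients $\binom{r+s}{r}$ times the prefactor $(\xi-\chi)^{-1-r-s}$, with signs. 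Collecting over all $k^2$ pairs gives the overall factor $(\xi-\chi)^{-k^2}$. The exponents combine into the multi-sum $\mathcal{A}$ of \eqref{eq:def-mathcal-A}: for each $x_{1,l}$ we distribute its total degree $x_l$ over the $n$ partners $x_{2,j}$, and symmetrically for the other block.

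\textbf{Step 2: Laplace expansion of the Pfaffian.} Expand the Pfaffian of the $2\times2$ block matrix. Each of $A(x_{1,a},x_{1,b})$, $A(x_{1,a},x_{2,b})$ and $A(x_{2,a},x_{2,b})$ is Taylor expanded around $(\chi,\chi)$, $(\chi,\xi)$ and $(\xi,\xi)$ respectively; this is where the three holomorphy assumptions enter. By multilinearity, the Pfaffian becomes a sum over monomials $\prod_a (x_{1,a}-\chi)^{\widehat\nu_a}$ and $\prod_b (x_{2,b}-\xi)^{\widehat\eta_b}$. After antisymmetrisation these assemble into
\[
\Delta_k(\vec x_1-\chi)\,s_{\vec\nu}(\vec x_1-\chi)\,\Delta_k(\vec x_2-\xi)\,s_{\vec\eta}(\vec x_2-\xi),
\]
multiplied by the Pfaffian of Taylor coefficients displayed in \eqref{del-gen-symplectic-abs}, with the normalisations $1/\widehat\nu!$ and $1/\widehat\eta!$. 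This is exactly the mechanism used in the proof of Theorem~\ref{Thm-gen-pfaffian}, applied to the two blocks separately, so I would reuse that lemma.

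\textbf{Step 3: combine the two expansions.} The Vandermonde determinants $\Delta_k(\vec x_1)\Delta_k(\vec x_2)$ cancel. We are left with Schur functions $s_{\vec\nu}$, $s_{\vec\eta}$ multiplied by the power series from Step 1, which is symmetric in each group. The product must be re-expanded in Schur functions $s_{\vec\lambda}(\vec x_1-\chi)$, $s_{\vec\mu}(\vec x_2-\xi)$. Writing each Schur function as a bialternant, the coefficient of $s_{\vec\lambda}\,s_{\vec\nu}$ is the antisymmetrised sum over $\sigma$ of monomial coefficients at exponent differences $\widehat\lambda_l-\widehat\nu_{\sigma(l)}$; the same holds on the other side with $\tau$. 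This is precisely $\widetilde{\mathcal A}$ in \eqref{eq:def-mathcal-A-tilde}. The global powers of $(\xi-\chi)$ and the signs $(-1)^{q'}$, $(-1)^{|\vec\alpha|}$ are then fixed by matching total degrees: $|\vec\lambda|=|\vec\mu|=|\vec\alpha|$, $|\vec\nu|=q$ and $|\vec\eta|=q'$.

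\textbf{Step 4: differentiate and take the limit.} Apply $\prod_j\partial_{x_{1,j}}^{\alpha_j}$ and set $\vec x_1=\chi$. In the monomial expansion
\[
s_{\vec\lambda}=\sum_{\vec\beta} K_{\vec\lambda,\vec\beta}\, m_{\vec\beta},
\]
only the monomials with exponent exactly $\vec\alpha$ survive. Their symmetrisation yields $\vec\alpha!\,K_{\vec\lambda,\vec\alpha}$, and the same computation on the second group yields the second factor, together giving $(\vec\alpha!)^2K_{\vec\lambda,\vec\alpha}K_{\vec\mu,\vec\alpha}$. Only finitely many terms, with $q,q'\le|\vec\alpha|$, contribute.

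\textbf{Main obstacle.} I expect the hard part to be the bookkeeping in Steps 1 and 3: turning the expansion of $\prod_{a,b}(x_{2,b}-x_{1,a})^{-1}$ into exactly the symmetric coefficient $\mathcal A$ with the stated sign conventions. My first attempt would be to verify the cases $k=1$ and $k=2$ explicitly, then argue generally by treating each pair $(a,b)$ independently and using the antisymmetry of both determinants.
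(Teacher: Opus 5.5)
Your proposal is correct and follows the paper's proof in substance. Like the paper, you split off the cross factor $\prod_{a,b}(\xi-\chi+v_b-u_a)^{-1}$, expand in Schur functions of the two shifted variable sets, let the binomial expansion of the cross factor produce $\mathcal{A}$, antisymmetrise to obtain $\widetilde{\mathcal{A}}$, and read off the Kostka numbers from the derivatives.

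The only difference is presentational. The paper extracts the Schur coefficients via the Hall inner product and evaluates the contour integrals by residues, using the Leibniz rule to split derivatives between the Pfaffian and the cross factor. You instead multiply the power series directly and read off bialternant coefficients, which is equivalent. You should add a line on absolute convergence in a small polydisc (with $|u_a|,|v_b|\ll|\xi-\chi|$) to justify differentiating term by term.
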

In all cases we have checked $\tilde{\mathcal{A}}\Big( \substack{\widehat{\lambda}, \widehat{\nu} \\ \widehat{\mu}, \widehat{\eta}} \Big)$ is a non-negative integer.
A proof of this fact is the topic of upcoming work~\cite{GA-AP-unpub}.

As we have mentioned before, Eq.~\eqref{firstBorelU2} reflects the relation to Kostka numbers, in particular~\eqref{del-gen-det}. This can be seen by~\cite[Theorem 2]{Lederer} for the case of the partition $(1,\ldots,1)$ which reads in our setting as follows.

\begin{lemma}
    \label{FirstOrderKostkaLemma}
    Let $(u_{a,p})_{\substack{a=1,\dots,k \\ p \in \mathbb{N}_0}}$ be a $k$-tuple of sequences of complex numbers.
    Then, the following relation holds true,
    \begin{equation}
        \label{eq:FirstOrderKostkaLemma}
        \sum_{\substack{0 \leq r_a \leq k \\ \sum_a r_a = k}} \binom{k}{r_1, r_2, \dots, r_k} \frac{\det\left[ u_{b,r_a+a-1}\right]_{{a,b=1,\dots,k }} }{\prod_{c=1}^k(r_c+c-1)!}
            = \sum_{\vec\lambda \vdash k} \frac{\det\left[ u_{b,\widehat{\lambda}_a}\right]_{{a,b=1,\dots,k }} }{\widehat{\lambda}!} K_{\vec\lambda,(1,\dots,1)}.
    \end{equation}
\end{lemma}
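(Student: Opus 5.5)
The plan is to read both sides as the same multilinear functional and compare coefficients. Both sides are linear in each column $b$ of the $k\times k$ array $[u_{b,p}]$, and both are alternating in the column index. So it suffices to prove the identity after expanding the determinants, that is, after replacing each $\det[u_{b,q_a}]$ by the antisymmetrised monomial $\sum_{\sigma\in S_k}\sign(\sigma)\prod_a u_{\sigma(a),q_a}$. Equivalently, we may view $\det[u_{b,q_a}]_{a,b}$ as an alternating function $\Phi(q_1,\dots,q_k)$ of the row indices. This $\Phi$ vanishes whenever two $q_a$ coincide, and it changes sign under permutations of the $q_a$. Each term on the left can therefore be rewritten as $\pm\Phi(\widehat\lambda)$ for the unique partition $\vec\lambda$ whose shifted sequence $\widehat\lambda$ is the increasing rearrangement of $(r_a+a-1)_a$, or else it is zero. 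The statement then reduces to a scalar identity: for each $\vec\lambda\vdash k$ with $l(\vec\lambda)\le k$,
\begin{equation*}
\sum_{\vec r}\ \sign(\pi_{\vec r})\binom{k}{r_1,\dots,r_k}\frac{1}{\widehat\lambda!}=\frac{K_{\vec\lambda,(1,\dots,1)}}{\widehat\lambda!}.
\end{equation*}
Here the sum runs over all $\vec r$ with $\sum r_a=k$ such that $\{r_a+a-1\}$ is a permutation $\pi_{\vec r}$ of $\{\widehat\lambda_a\}$. Note that the factor $\prod_c(r_c+c-1)!=\widehat\lambda!$ is invariant under this rearrangement.

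After cancelling $\widehat\lambda!$, the claim becomes $\sum_{\pi\in S_k}\sign(\pi)\binom{k}{(\widehat\lambda_{\pi(a)}-a+1)_a}=K_{\vec\lambda,(1^k)}$. Multinomials with a negative entry are read as zero. This is exactly the Frobenius/Jacobi--Trudi determinantal formula for the number of standard Young tableaux,
\begin{equation*}
f^{\vec\lambda}=k!\det\left[\frac{1}{(\lambda_i-i+j)!}\right]_{i,j=1}^{k}.
\end{equation*}
That formula follows from $s_{\vec\lambda}=\det[h_{\lambda_i-i+j}]$ by extracting the coefficient of $x_1\cdots x_k$, equivalently from $h_n\mapsto 1/n!$ under the exponential specialisation. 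The index relabelling $\widehat\lambda_j=j-1+\lambda_{k-j+1}$ only reverses the row order, and the signs match because both sides reverse consistently. Alternatively, I could cite the Vandermonde form already recorded in \eqref{eq:SymIrrepDim}, using $\det[1/(\widehat\lambda_a-b+1)!]=\Delta_k(\widehat\lambda)/\widehat\lambda!$, which is a falling-factorial Vandermonde evaluation. Combined with $K_{\vec\lambda,(1^k)}=k!\Delta_k(\widehat\lambda)/\widehat\lambda!$, this closes the argument. This is the content of \cite[Theorem 2]{Lederer}.

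The main obstacle is sign bookkeeping in the reduction step. I need to check that sorting $(r_a+a-1)$ into increasing order gives precisely the sign appearing in the determinant expansion $\det[1/(\widehat\lambda_{a}-b+1)!]$. I also need to confirm that the constraint $0\le r_a\le k$ is automatic, since $r_a\ge0$ and $\sum r_a=k$. I would verify this on $k=2$ as a sanity check: there $\vec\lambda=(2)$ and $(1,1)$ give $K=1,1$, matching $\binom{2}{0,2}$ and $\binom{2}{1,1}-\binom{2}{2,0}\cdot$ (sign) after the rearrangement $(r_1,r_2+1)$.
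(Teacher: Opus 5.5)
Your proposal is correct and follows the paper's proof essentially step for step. You drop the terms with repeated row indices, write the survivors as $r_a=\widehat\lambda_{\sigma(a)}-a+1$, pull out $\sign(\sigma)$ by reordering rows (using $\prod_c(r_c+c-1)!=\widehat\lambda!$), and recognise the leftover signed sum of multinomials as $K_{\vec\lambda,(1,\dots,1)}$, which the paper simply takes from Lederer's Theorem 2.

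Your falling-factorial evaluation $\det[1/(\widehat\lambda_a-b+1)!]=\Delta_k(\widehat\lambda)/\widehat\lambda!$, combined with \eqref{eq:SymIrrepDim}, is the cleanest way to close that last step. If you go via $k!\det[1/(\lambda_i-i+j)!]$ instead, note that the relabelling reverses both rows and columns, not only rows, which is why no sign appears.
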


\begin{proof}
We denote the left hand side by
\begin{equation}
    F(\vec{u}) := \sum_{\substack{0 \leq r_a \leq k \\ \sum_a r_a = k}} \binom{k}{r_1, r_2, \dots, r_k} \frac{\det\left[ u_{b,r_a+a-1}\right]_{{a,b=1,\dots,k }} }{\prod_{c=1}^k(r_c+c-1)!}.
\end{equation}
Here, a term with $r_a+a = r_{a'}+a'$ for $a \neq a'$ vanishes due to the determinant. Hence for the nonvanishing terms,
each tuple $(r_a+a-1)_{1\leq a \leq k}$ is a permutation of a shifted partition $\widehat{\lambda}$ of $k$, so we substitute $r_a = \widehat{\lambda}_{\sigma(a)} - a + 1$ and get

\begin{equation}
  F(\vec{u})
   = \sum_{\vec\lambda \vdash  k} \sum_{\sigma \in S_k} \frac{k!}{\prod_{c=1}^k (\widehat\lambda_{\sigma(c)}-c+1)!} \frac{\det\left[ u_{b,\widehat\lambda_{\sigma(a)}}\right]_{{a,b=1,\dots,k }} }{\widehat{\lambda}!}.
\end{equation}
Reordering the columns of the determinant expression yields the sign of $\sigma$. In the final step, we pull factors out of the sum over the symmetric group and swap $c$ and $\sigma(c)$,
\begin{equation}
  F(\vec{u})
    = \sum_{\vec\lambda \vdash  k} \frac{\det\left[ u_{b,\widehat\lambda_a}\right]_{{a,b=1,\dots,k }} }{\widehat{\lambda}!} \sum_{\sigma \in S_k} \sign(\sigma) \frac{k!}{\prod_{c=1}^k (\lambda_{k-c+1}+c-\sigma(c))!},
\end{equation}
such that the remaining sum over the symmetric group is the Kostka number $K_{\vec\lambda,(1,\ldots,1)}$ as shown in~\cite[Theorem 2]{Lederer}.
\end{proof}

\section{Application to mixed characteristic polynomials}\label{sec:App-mixed}

In this section, we consider averages of moments of characteristic polynomials and their derivatives. We will discuss some standard examples and begin two general results.

\subsection{Example: the complex Ginibre ensemble}
\label{subsec:Gin}

The complex Ginibre ensemble (Gin) consists of matrices with independently and identically distributed complex normal distributed entries. It jpdf of complex eigenvalues $\vec z\in\mathbb{C}^N$ is \cite{Ginibre}
\begin{equation}
\mathcal{P}_N^{\rm (Gin)}(\vec z)=\frac{1}{(2\pi)^N\prod_{j=0}^Nj!}|\Delta_N(\vec z)|^2\exp\left[-\sum_{j=1}^N|z_j|^2\right],
\end{equation}
where we identify the weight function $w(z)=e^{-|z|^2}$. The monomials $z^j$ are the orthogonal polynomials of this ensembl, with  norms $h_j=\int_{\mathbb{C}}|z|^{2j}e^{-|z|^2}\mathrm{d}^2z=\pi j!$
The average of products of characteristic polynomials of this ensemble exhibits a determinantal structure~\eqref{productUk} with
the polynomial kernel
\begin{equation}
    \label{eq:KGinUEFinite}
    \mathfrak{K}_N^{\rm (Gin)}(x, \bar{y}) = \frac{1}{\pi}\sum_{j=0}^{N-1} \frac{(x \bar{y})^j}{j!}.
\end{equation}
In the large-$N$ limit in the bulk of the spectrum, the kernel converges to the exponential function
\begin{equation}
    \label{eq:KGinUELim}
  \lim_{N\to\infty}\mathfrak{K}_N^{\rm (Gin)}(x, \bar{y})=  \mathfrak{K}_{\infty}^{\rm (Gin)}(x, \bar{y}) = \frac{1}{\pi}e^{x \bar{y}}
\end{equation}
which is valid for any fixed pair $x,y\in\mathbb{C}$.

\begin{example}[Moments without derivatives]

Let us first consider the limit of the moments of characteristic polynomials without any derivatives. For that we need the mixed derivatives of the kernel,
\begin{equation}
    \label{eq:KGinUELimD}
    \partial_u^a\partial_v^b\mathfrak{K}_{\infty}^{\rm (Gin)}(u, v)=  \partial_u^a\frac{u^b}{\pi}e^{uv}=\frac{e^{uv}}{\pi}(\partial_u+v)^au^b=\frac{a!}{\pi}u^{b-a}L_a^{b-a}(-u v)e^{uv}=\frac{b!}{\pi}v^{a-b}L_b^{a-b}(-u v)e^{uv},
\end{equation}
which are essentially generalised Laguerre polynomials times an exponential function and a monomial. When using Corollary~\ref{cor: factoring derivatives}.2 for $k=h$ or equivalently Theorem~\ref{Thm-gen-det} for $|\vec \alpha|=|\vec \beta|=0$ we arrive at
\begin{equation}
   \lim_{N\to\infty} \frac{\left\langle |D_{N}(\chi)|^{2k}\right\rangle_{\rm  Gin}}{\prod_{j=N}^{N+k-1} \pi j!}
    = \frac{1}{\prod_{j=0}^{k-1} j!^2}\det\left[ \frac{e^{|\chi|^2}}{\pi}(\partial_u+\overline\chi)^{a-1}u^{b-1}|_{u=\chi}\right]_{a,b=1,\ldots,k}.
\end{equation}
We pull the factors $e^{|\chi|^2}/\pi$ out the determinant and recombine the polynomial $(\partial_u+\overline\chi)^{a-1}$ to $\partial_u^{a-1}$ by linear combination of the rows. This gives us an upper triangular matrices with entries $\partial_u^{a-1}u^{b-1}$ which indeed vanishes when $a>b$ and is equal to $(a-1)!$ on the diagonal. Thus, we arrive at
\begin{equation}
    \label{eq:GinUENoD-det-1}
   \lim_{N\to\infty} \frac{\left\langle |D_{N}(\chi)|^{2k}\right\rangle_{\rm Gin}}{\prod_{j=N}^{N+k-1} \pi j!}
    = \frac{e^{k|\chi|^2}}{\pi^k\prod_{j=0}^{k-1} j!}=\frac{e^{k|\chi|^2}}{\pi^k G(k+1)}
\end{equation}
with the Barnes $G$-function. This result was derived previously (apart from some prefactors) for expectations of more general powers of the modulus of the characteristic polynomial by Webb and Wong~\cite{WebbWong}. Indeed, the analytic continuation in $k$ is expected to hold 
in our result as well.
For a discussion of \eqref{eq:KGinUELimD} and its relation to Painlev\'e in the limit at the edge of the spectrum we refer to \cite[Thm. 1.3]{DeanoSimm}, where also two limiting points in the bulk are given.
\end{example}

To extend this example to include derivatives we use Theorem~\ref{Thm-gen-det}.
The detailed computations are given in Appendix~\ref{apx:GinUE}.
For general derivatives the final answer is expressible in terms of factorial Schur polynomials~\cite{BiedenharnLouck1989}, which have determinantal expressions\footnote{Adopted from \cite[(4.3)]{MacDonald1992} with the order of the columns of the numerator determinant inverted, as MacDonald's definition of the Vandermonde also has inverse order compared to \eqref{Vander}.} 
\begin{equation}
    \label{eq:DefineFactorialSchur}
    t_{\vec{\nu}}(x_1,\dots,x_k) = \frac{\det[x_a(x_a-1) \dots (x_a-\widehat{\nu}_{b}+1)]_{a,b=1,\dots,n}}{\Delta_k(x)} =\frac{1}{\Delta_k(x)}\det\left[\frac{x_a!}{(x_a-\widehat{\nu}_b)!}\right]_{a,b=1,\ldots,k}.
\end{equation}
As shown in detail in Appendix \ref{apx:GinUE}, the mixed moment can be written as
\begin{equation}
    \label{eq:thmGinUE}
    \lim_{N\to\infty}\frac{\left\langle \prod_{j=1}^k |\partial_{\chi}^{\alpha_j}D_N(\chi)|^2 \right\rangle_{\rm Gin}}{\prod_{j=N}^{N+k-1} \pi j!}
    = \vec{\alpha}!^2 \Big(\frac{e^{|\chi|^2}}{\pi}\Big)^k \sum_{m=0}^{|\vec{\alpha}|} |\chi|^{2m} \sum_{\substack{\vec{\nu} \vdash (|\vec{\alpha}|-m) \\ l(\vec{\nu}) \leq k}} \frac{1}{\widehat{\nu}!}
    \left( \sum_{\substack{\vec{\lambda} \vdash |\vec{\alpha}| \\ l(\vec{\lambda}) \leq k}}
    K_{\vec{\lambda},\vec{\alpha}} \frac{\Delta_k(\widehat{\lambda})}{\widehat{\lambda}!} t_{\vec{\nu}}(\widehat{\lambda}) \right)^2\ .
\end{equation}
Let us see what it is in specific situations.

\begin{example}[Moments with first derivatives]\label{exmaple.1}

In this case of moments only involving at most first derivatives, the relevant Kostka numbers are of the form $K_{\vec{\lambda},\vec{\alpha}}$ with $\vec{\alpha} = (1,\dots,1,0,\dots,0)$ with $h$ zeroes and $k-h$ ones.
These Kostka numbers are given by~\eqref{eq:SymIrrepDim} as well were we set $m=k-h$ and $K_{\vec{\lambda}, \vec{\alpha}}=K_{\vec{\lambda}, (1,\ldots,1)}=(k-h)!\Delta_k(\widehat{\lambda})/\widehat{\lambda}!$.
For the following equation \eqref{eq:corGinUEZerothFirstExplicit} we use as shifting constant $k-h$ instead of $k$ in \eqref{lambda.hat.rel}, so
$\widehat{\lambda}_j = \lambda_{k-h-j+1}+j-1$, and consistently we change the length restriction of the partitions.
As we show in Appendix~\ref{subsec:GinUEFirstDeriv}, the product of $h$ characteristic polynomials and $k-h$ of its first derivatives has expected modulus squared, with $k-h \geq 1$ given by
\begin{equation}
\begin{aligned}
    \label{eq:corGinUEZerothFirstExplicit}
    &\lim_{N\to\infty}\frac{\left\langle |D_N(\chi)|^{2h} |D_N(\chi)'|^{2(k-h)}\right\rangle_{\rm Gin}}{\prod_{j=N}^{N+k-1} \pi j!}\\
    &\hspace{2em}=
        \begin{aligned}[t]
        &\frac{e^{k|\chi|^2}}{\pi^k (k-h)!^2 G(h+1)} \sum_{m=0}^{k-h-2} |\chi|^{2m} \sum_{\substack{\vec{\nu} \vdash (k-h-m) \\ l(\vec\nu) \leq k-h}} \frac{1}{\prod_{j=1}^{k-h} (\nu_j+k-j)!}
         \left[ \sum_{\substack{\vec{\lambda} \vdash k-h \\ l(\vec\lambda) \leq k-h}} \left(K_{\vec\lambda,(1,\ldots,1)}\right)^2 t_{\vec\nu}(\widehat{\lambda}) \right]^2 \\
        &+ \frac{e^{k|\chi|^2}}{\pi^k} \bigg(\frac{(k-h)^2 |\chi|^{2(k-h-1)}}{k G(k+1)}
        +\frac{|\chi|^{2(k-h)}}{G(k+1)} \bigg)\ .
        \end{aligned}
\end{aligned}
\end{equation}
For $k-h\leq1$ the sum over $m$ is empty and for $k=h$ we recover \eqref{eq:GinUENoD-det-1} by also dropping the $|\chi|^{2(k-h-1)}$ term.
At constant $k-h \in \mathbb{N}_0$, equation \eqref{eq:corGinUEZerothFirstExplicit} can be analytically continued in either $k$ or $h$.
The analytic continuation independently in both variables is still an open question.

\end{example}

\begin{example}[Moments with one higher derivative]
For the case of $\vec{\alpha}=(n,0,\dots,0)$ in~\eqref{eq:thmGinUE} the result is surprisingly simple. The relevant Kostka numbers are
\begin{equation}
    K_{\vec{\lambda}, \vec{\alpha}} = \begin{cases}
        1, &\text{ if $\vec{\lambda}=(n)$,} \\
        0, &\text{ otherwise.}
    \end{cases}
\end{equation}
Therefore, the sum over $\vec{\lambda}$ collapses to a single term given by the one-part partition.
Secondly, $t_{\vec{\nu}}(\widehat{\lambda}) \neq 0$ forces $\vec{\nu}$ to be a one-part partition as well.
Combining this with the general result~\eqref{eq:thmGinUE}, this gives
\begin{equation}\label{high.der.Gin}
    \lim_{N\to\infty}\frac{ \left\langle |\partial_{\chi}^{n}D_N(\chi)|^2|D_N(\chi)|^{2(k-1)}  \right\rangle_{\rm Gin}}{\prod_{j=N}^{N+k-1} \pi j!}
    = \frac{n!^2}{(n+k-1)! G(k)} \frac{e^{k|\chi|^2}}{\pi^k} L_{n+k-1,n}(-|\chi|^2)
\end{equation}
with Barnes $G$-function and where 
\begin{equation}\label{Lag.def}
    L_{a,b}(x) := \sum_{m=0}^b \binom{a}{m}\frac{(-x)^m}{m!} 
\end{equation}
denotes the truncated Laguerre polynomial of index $a$ with powers higher than $b$ dropped. This expression in \eqref{high.der.Gin}  is evidently analytic in $k$ and can be continued therein. We refer to Appendix~\ref{apx: ex one deriv GinUE} for further details.

\end{example}

\begin{example}[Moments with two higher derivatives]
For two derivatives, in particular the setting $\vec{\alpha}=(n_1,n_2,0,\dots,0)$ with $n_1 \geq n_2$, the Kostka numbers simplify to
\begin{equation}
    K_{\vec{\lambda}, \vec{\alpha}} =
    \begin{cases}
        1, &\text{ if $l(\vec{\lambda}) \leq 2$ and $\lambda_2 \leq n_2$, } \\
        0, &\text{ otherwise.}
    \end{cases}
\end{equation}
After some simplifications, which are given in Appendix~\ref{apx: ex two derivs GinUE}, we find the result
\begin{equation}\label{2nd.high.der.Gin}
\begin{aligned}
    & \lim_{N\to\infty}\frac{\left\langle |\partial_{\chi}^{n_1}D_N(\chi)|^2 |\partial_{\chi}^{n_2}D_N(\chi)|^2 |D_N(\chi)|^{2(k-2)} \right\rangle_{\rm Gin}}{\prod_{j=N}^{N+k-1} \pi j!} \\
    &= \frac{n_1!^2 n_2!^2}{G(k-1)} \frac{e^{k|\chi|^2}}{\pi^k} \sum_{m=0}^{n_1+n_2} \frac{|\chi|^{2m}}{m!^2}
    \sum_{r=0}^{\floor{\frac{1}{2}(n_1+n_2-m)}} \frac{\bigg[\sum_{s=0}^{n_2} [\binom{m}{s-r} - \binom{m}{n_1+n_2-s-r+1}] \bigg]^2}{(r+k-2)!(n_1+n_2-m-r+k-1)!} \ .
\end{aligned}
\end{equation}
Here, we have used the floor function $\lfloor x\rfloor$ which gives the largest integer such that $\lfloor x\rfloor\leq x$.
Analytic continuation in $k$ is possible in this result as well.

\end{example}

\subsection{Example: the circular unitary ensemble}
\label{subsec:CUE}

The Circular Unitary ensemble (CUE) consists of matrices distributed with respect to Haar measure of the unitary group ${\rm U}(N)$. Its jpdf of complex eigenvalues $e^{i\theta_j}$, with $\theta_j\in [0,2\pi)$, on the unit circle is given by 
\cite{Mehta2004}
\begin{equation}
\label{CUEjpdf}
\mathcal{P}_N^{\rm (CUE)}(\vec \theta)=\frac{1}{(2\pi)^N N!}\prod_{1\leq k<j\leq N}\left| e^{i\theta_j}-e^{i\theta_k}\right|^2.
\end{equation}
The orthogonal polynomials are again the monomial, with unit norm, leading to the truncated geometric series as kernel, see \eqref{CUEkernel} below. 

We consider the family of expectation values
\begin{equation}\label{eq:CUEoriginal}
   \mathcal{Z}_{d}^{({\rm CUE})}\left(\chi;\vec h\right):= \mathcal{Z}_{d,L=2}^{({\rm CUE})}\left(\left[\begin{array}{c}\chi\\ \overline\chi\end{array}\right];\left[\begin{array}{c} \vec h\\\vec h\end{array}\right]\right)=\left\langle
    | D_N(\chi)|^{2h_{0}} |\partial_\chi D_{N}(\chi)|^{2h_1} \cdots |\partial_\chi^d D_N(\chi)|^{2h_d}
    \right\rangle_{\rm CUE}
\end{equation}
with $|\vec h|=\sum_{j=0}^dh_j=k$. Especially, we aim for explicit formulas for the cases $d=0,1,2$. These expectation values are related to
\begin{equation}\label{eq:CUEoriginal.ref}
    \widetilde{\mathcal{Z}}_{d}^{({\rm CUE})}\left(\chi;\vec h\right):=\left\langle| D_N(\chi)|^{2h_{0}} |\chi\partial_\chi D_{N}(\chi)|^{2h_1} \cdots |(\chi\partial_\chi)^d D_N(\chi)|^{2h_d}
    \right\rangle_{\rm CUE},
\end{equation}
as they have been studied in~\cite{KS00b,SimmWei} while in~\cite{AKW22,ABGS21,KW1,KW2,AGKW24,Keating-etal} they studied partition functions of the form
\begin{equation}\label{eq:CUEoriginal.ref.b}
    \widehat{\mathcal{Z}}_{d}^{({\rm CUE})}\left(\chi;\vec h\right):=\left\langle| D_N(\chi)|^{2h_{0}} |\chi\partial_\chi \chi^{-N/2}D_{N}(\chi)|^{2h_1} \cdots |(\chi\partial_\chi)^d \chi^{-N/2}D_N(\chi)|^{2h_d}
    \right\rangle_{\rm CUE},
\end{equation}
though they are not exactly the same. Certainly, it is for the simplest cases $\mathcal{Z}_{0}^{({\rm CUE})}\left(\chi;\vec h\right)=\widetilde{\mathcal{Z}}_{0}^{({\rm CUE})}\left(\chi;\vec h\right)=\widehat{\mathcal{Z}}_{0}^{({\rm CUE})}\left(\chi;\vec h\right)$ and $\mathcal{Z}_{1}^{({\rm CUE})}\left(\chi;\vec h\right)=|\chi|^{2h_1}\widetilde{\mathcal{Z}}_{1}^{({\rm CUE})}\left(\chi;\vec h\right)$.  Yet, starting with $d=2$ we need a linear combination given by
\begin{equation}
\begin{split}
    &\widetilde{\mathcal{Z}}_{2}^{({\rm CUE})}\left(\chi;h_0,h_1,h_2\right):=\left\langle| D_N(\chi)|^{2h_{0}} |\chi\partial_\chi D_{N}(\chi)|^{2h_1}|(\chi^2\partial_\chi^2+\chi\partial_\chi) D_{N}(\chi)|^{2h_2}
    \right\rangle_{\rm CUE}\\
    =&\sum_{l_1,l_2=0}^{h_2}\binom{h_2}{l_1}\binom{h_2}{l_2}\chi^{h_1+2h_2-l_1}\bar\chi^{h_1+2h_2-l_2}\mathcal{Z}_{2,1}^{({\rm CUE})}\left(\left[\begin{array}{c}\chi\\ \overline\chi\end{array}\right];\left[\begin{array}{c} h_0,h_1+l_1,h_2-l_1 \\h_0,h_1+l_2,h_2-l_2 \end{array}\right]\right).
\end{split}
\end{equation}
Considering the setting of~\cite{KS00b,SimmWei}, where the authors consider the large $N$ limit at $|\chi|=1$, or  without loss of generality $\chi=1$, due to the isotropic spectrum, it happens that each derivative scales like $N$, see the ensuing discussion. Thus, it is
\begin{equation}\label{asymp.equiv}
\lim_{N\to\infty}\lim_{|\chi|\to1}\frac{\mathcal{Z}_{d}^{({\rm CUE})}\left(\chi;\vec h\right)}{\widetilde{\mathcal{Z}}_{d}^{({\rm CUE})}\left(\chi;\vec h\right)}=1,
\end{equation}
meaning it makes them asymptotically equivalent when the powers $\vec h$ remain fixed. 

The relation with $\widehat{\mathcal{Z}}_{d}^{({\rm CUE})}$ as studied in~\cite{Keating-etal} is evidently more complicated and we need to consider also mixed powers of derivatives of $D_N(\chi)$ and $D_N(\chi^*)$.

The above averages can be traced back to those without derivatives of the form
\begin{equation}\label{expect.CUE.1}
 \mathcal{Z}_{0,2k}^{({\rm CUE})}\left(\left[\begin{array}{c}\vec x\\ \vec y\end{array}\right],\left[\begin{array}{c}k\\ k\end{array}\right]\right)=\left\langle
   \prod_{j=1}^k\det(x_j\mathbf{1}_N-U)\det(y_j\mathbf{1}_N-\hconj{U})
    \right\rangle_{\rm CUE}\\
\end{equation}
with $\hconj{U}$ the Hermitian conjugate of $U\in\U(N)$.
Because the unitary matrices satisfy $\hconj{U}=U^{-1}$, those expectation values are equivalent to
\begin{equation}\label{expect.CUE.2}
   \mathcal{Z}_{0}^{({\rm CUE})}\left(\left[\begin{array}{c}\vec x\\ \vec y\end{array}\right]\right)=(-1)^{Nk}\left(\prod_{j=1}^ky_j^N\right)\left\langle
  \det U^{-k} \prod_{j=1}^k\det(x_j\mathbf{1}_N-U)\det(y'_j\mathbf{1}_N-U)
    \right\rangle_{\rm CUE}
\end{equation}
where $y'_j=y_j^{-1}$. This certainly only works when all $y_j\neq0$.

Like for the Ginibre ensemble, the monomials are the orthogonal polynomials of the corresponding ensemble. Combining this together with~\eqref{expect.CUE.1} and~\eqref{expect.CUE.2}, we know that there are two useful representations for this ensemble. The first one is in terms of a $k\times k$ determinant, but at the expense of dividing it by two Vandermonde determinants, one in $\vec u$ and one in $\vec v$, i.e.,
\begin{equation}\label{kernel.CUE.1}
 \mathcal{Z}_{0}^{({\rm CUE})}\left(\left[\begin{array}{c}\vec x\\ \vec y\end{array}\right]\right)=\frac{1}{\Delta_k(\vec x)\Delta_k(\vec y)}\det\left[\frac{1-(x_ay_b)^{N+k}}{1-x_ay_b}\right]_{a,b=1,\ldots,k}.
\end{equation}
For this purpose we recall that the corresponding kernel takes the simple form \cite[ch. 11]{Mehta2004}
\begin{equation}
\label{CUEkernel}
\mathfrak{K}_N^{({\rm CUE})}(x_a,y_b)=\sum_{j=0}^{N-1}(x_ay_b)^j=\frac{1-(x_ay_b)^{N}}{1-x_ay_b}.
\end{equation}

\begin{example}[Finite $N$-results of up to two derivatives]
Coming back to the original expectation value with the derivatives~\eqref{eq:CUEoriginal}, we can express it in terms of~\eqref{expect.CUE.1} as follows
\begin{equation}\label{eq:CUEoriginal.mod}
   \mathcal{Z}_{d}^{({\rm CUE})}\left(\chi,\vec h\right)=\lim_{\vec x\to\chi,\vec y\to\bar\chi}\left(\prod_{j=1}^k\partial_{x_j}^{n_j}\partial_{y_j}^{n_j}\right)\mathcal{Z}_{0,2k}^{({\rm CUE})}\left(\left[\begin{array}{c}\vec x\\ \vec y\end{array}\right],\left[\begin{array}{c}k\\ k\end{array}\right]\right)
\end{equation}
with $(n_1,\ldots,n_k)=(0,\ldots,0,1,\ldots,1,\ldots,d,\ldots,d)$, with multiplicity $h_j$ of $j$.
For general orders of derivatives, this expression gets quickly involved due to the generalised Borel transform~\eqref{eq: borel transf2.higher}. Yet, everything is relatively under control when only up to second derivatives appear ($d=2$). Then, we can employ Corollary~\ref{cor: factoring derivatives.higher} twice and obtain
\begin{equation}\label{CUE.result.general.d2}
\begin{split}
\mathcal{Z}_{1}^{({\rm CUE})}\left(\chi; k-h_1-h_2,h_1,h_2\right)=&\lim_{\vec u\to0}[\partial_{u_1}\partial_{\bar u_1}]^{h_1}[(\partial_{u_1}^2+\partial_{u_2})(\partial_{\bar u_1}^2+\partial_{\bar u_2})]^{h_2}\\
&\times\det\left[\partial_{u_1}^{a-1}\partial_{\bar u_1}^{b-1}[\mathcal{B}_\chi\otimes \mathcal{B}_{\bar\chi}\mathfrak{K}_{N+k}^{\rm (CUE)}](\vec u,\vec{\bar u})\right]_{a,b=1,\ldots k}.
\end{split}
\end{equation}
The double second order Borel transform and its derivatives can be explicitly carried out employing~\eqref{eq: borel transf.2d} in both entries leading to
\begin{equation}\label{CUE.result.gen-kernel.d2}
\begin{split}
\partial_{u_1}^{a-1}\partial_{\bar u_1}^{b-1}\left[\mathcal{B}_{\chi,2}\otimes \mathcal{B}_{\bar\chi,2}\mathfrak{K}_{N+k}^{\rm (CUE)}\right](\vec u,\vec{\bar u})=&\partial_{u_1}^{a-1}\partial_{\bar u_1}^{b-1}\sum_{l=0}^{N+k-1}\left|\sum_{j=0}^l\frac{l!}{j!^2(l-j)!}\chi^{l-j}(-i\sqrt{u_2})^{j} H_j\left(i\frac{u_1}{2\sqrt{u_2}}\right)\right|^2\\
=&\sum_{l=\max\{a-1,b-1\}}^{N+k-1}\left(\sum_{j=a-1}^l\frac{l!\chi^{l-j}(-i\sqrt{u_2})^{j-a+1} H_{j-a+1}\left(i\frac{u_1}{2\sqrt{u_2}}\right)}{j!(l-j)!(j-a+1)!}\right)\\
&\times\left(\sum_{j=b-1}^l\frac{l!\bar\chi^{l-j}(-i\sqrt{\bar u_2})^{j-b+1} H_{j-b+1}\left(i\frac{\bar u_1}{2\sqrt{\bar u_2}}\right)}{j!(l-j)!(j-b+1)!}\right).
\end{split}
\end{equation}

The case $h_2=0$, meaning only up to first derivatives, simplifies even more. For this case one may use
\begin{equation}
\lim_{u_2\to0}(-i\sqrt{u_2})^{j} H_{j}\left(i\frac{u_1}{2\sqrt{u_2}}\right)=u_1^{j}\ ,
\end{equation}
to find
\begin{equation}\label{CUE.result.general}
\mathcal{Z}_{1}^{({\rm CUE})}\left(\chi;k-h_1,h_1\right)=\lim_{u\to0}[\partial_{u}\partial_{\bar u}]^{h_1}\det\left[\partial_u^{a-1}\partial_{\bar u}^{b-1}[\mathcal{B}_\chi\otimes \mathcal{B}_{\bar\chi}\mathfrak{K}_{N+k}](u,\bar u)\right]_{a,b=1,\ldots k},
\end{equation}
 where
\begin{equation}\label{CUE.result.gen-kernel}
\begin{split}
\partial_u^{a-1}\partial_{\bar u}^{b-1}[\mathcal{B}_\chi\otimes \mathcal{B}_{\bar\chi}\mathfrak{K}_{N+k}](u,\bar u)=\sum_{l=\max\{a-1,b-1\}}^{N+k-1}\chi^{l-a+1}\bar\chi^{l-b+1}L_{l-a+1}^{(a-1)}\left(-\frac{u}{\chi}\right)L_{l-b+1}^{(b-1)}\left(-\frac{\bar u}{\bar\chi}\right)
\end{split}
\end{equation}
with $L_p^{(q)}$ the generalised Laguerre polynomials.
\end{example}

\begin{example}[Large $N$-results inside the unit disc]
When assuming that $|\chi|<1$, the limit $N\to\infty$ is easily tractable for the expectation value~\eqref{eq:CUEoriginal.mod}. When starting from~\eqref{kernel.CUE.1} and using the fact that then also $|x_ay_b|<1$, we find the limit
\begin{equation}
\lim_{N\to\infty}\mathcal{Z}_{0}^{({\rm CUE})}\left(\left[\begin{array}{c}\vec x\\ \vec y\end{array}\right];\left[\begin{array}{c}k\\ k\end{array}\right]\right)=\frac{1}{\Delta_k(\vec x)\Delta_k(\vec y)}\det\left[\frac{1}{1-x_ay_b}\right]_{a,b=1,\ldots,k}=\frac{1}{\prod_{a,b=1}^k(1-x_ay_b)},
\end{equation}
where we have applied the explicit evaluation of the Cauchy determinant~\cite{KG09a}. This result was already found in~\cite[Eq. (2.21)]{SimmWei}. It immediately yields for the expectation value without any derivative the result
\begin{equation}
\lim_{\vec x\to\chi,\vec y\to\bar\chi}\lim_{N\to\infty}\mathcal{Z}_{0}^{({\rm CUE})}\left(\left[\begin{array}{c}\vec x\\ \vec y\end{array}\right],\left[\begin{array}{c}k\\ k\end{array}\right]\right)=\frac{1}{(1-|\chi|^2)^{k^2}},
\end{equation}
which is analytically continuable in $k$ and equivalent to the fact that the random variable $|D_N(\chi)|^2$ is log-normal distributed when $|\chi|<1$.

The case of up to first derivatives can be derived from this result, too. We note, that, the limit $N\to\infty$ is uniform in $x_a$ and $y_b$ when $|x_a|,|y_b|<1$ and the function is holomorphic about $\chi$ and $\bar\chi$. Then, Weierstrass' theorems about such limits allow us to interchange them with $\vec x\to\chi$ and $\vec y\to\bar\chi$ as well as with the derivatives. Therefore, it is
\begin{equation}
\begin{split}
\lim_{N\to\infty}\mathcal{Z}_{1}^{({\rm CUE})}\left(\chi;k-h_1,h_1\right)&=\lim_{\vec x\to\chi,\vec y\to\bar\chi}\left(\prod_{j=h_0+1}^{h_0+h_1}\partial_{x_j}\partial_{y_j}\right)\frac{1}{\prod_{a,b=1}^k(1-x_ay_b)}\\
&=\lim_{\vec y\to\bar\chi}\left(\prod_{j=h_0+1}^{h_0+h_1}\partial_{y_j}\right)\frac{1}{\prod_{b=1}^k(1-\chi y_b)^k}\left[\sum_{b=1}^k\frac{y_b}{1-\chi y_b}\right]^{h_1},
\end{split}
\end{equation}
where we have carried out the derivatives in $x_j$ already. After expressing the term with the sum
\begin{equation}
\left[\sum_{b=1}^k\frac{y_b}{1-\chi y_b}\right]^{h_1}=\partial_u^{h_1}\exp\left[u\sum_{b=1}^k\frac{y_b}{1-\chi y_b}\right]\biggl|_{u=0}
\end{equation}
as a derivative over an auxiliary variable $u$, we can also carry out the derivatives in $y_j$ leading to
\begin{equation}
\begin{split}
\lim_{N\to\infty}\mathcal{Z}_{1}^{({\rm CUE})}\left(\chi;k-h_1,h_1\right)=\lim_{u\to0}\partial_u^{h_1}\frac{[k\chi(1-|\chi|^2) +u]^{h_1}}{(1-|\chi|^2)^{k^2+2h_1}}
=\frac{h_1!L_{h_1}^{(0)}(-k^2|\chi|^2)}{(1-|\chi|^2)^{k^2+2h_1}}.
\end{split}
\end{equation}
In the second step we have rescaled and shifted $u\to (1-|\chi|^2)(u-k^2|\chi|^2)/(k\bar\chi)$ to identify the Rodrigues formula for the Laguerre polynomials. This result allows for an analytic continuation in both $k$ and $h_1$ with the help of Carlson's theorem, when identifying $h_1!=\Gamma(h_1+1)$ with the Gamma function, and $L_{h_1}^{(0)}(z)={ _1F_1}(-h_1;1;z)$ with the confluent hypergeometric function. 
It agrees with \cite[Thm. 1.1]{SimmWei}, where also the example for two points is given.

Unfortunately, the above computations seem to carry only as far as that. For higher derivatives we exploit the integral representation of the higher order Borel transform~\eqref{eq: borel transf2.higher} in combination with Corollary~\ref{cor: factoring derivatives.higher} only applied to $\vec x$ at the moment. Then, it is
\begin{equation}\label{limx}
   \mathcal{Z}_{d}^{({\rm CUE})}\left(\chi,\vec n\right)=\lim_{\vec y\to\bar\chi,\vec u\to0}\left(\prod_{j=1}^k\partial_{y_j}^{n_j}\right)\left(\prod_{j=1}^dD_{\vec u,j}^{h_j}\right)\frac{\det\left[\partial_{u_1}^{a-1}[\mathcal{B}_{\chi,2}\otimes {\rm id} \;\, \mathfrak{K}_{N+k}^{\rm (CUE)}](\vec u,y_b)\right]_{a,b=1,\ldots,k}}{\Delta_k(\vec y)},
\end{equation}
with the kernel
\begin{equation}
\begin{split}
\partial_{u_1}^{a-1}[\mathcal{B}_{\chi,2}\otimes {\rm id} \;\, \mathfrak{K}_{N+k}^{\rm (CUE)}](\vec u,y_b)
=&\oint_{|\chi+z|=R}\frac{\dif z}{2\pi i z^{a}}\frac{1-(\chi+z)^{N+k}y_b^{N+k}}{1-(\chi+z)y_b}\exp\left[\sum_{j=1}^d(j-1)! u_jz^{-j}\right],
\end{split}
\end{equation}
where we choose the radius $R>0$ such that $|(\chi+z)y_b|<1$ and $|\chi|<R$ so that the origin lies in the interior of the counter-clockwise contour. This is possible when $|y_b|$ is close enough to $|\chi|<1$. With such a choice the integrand has a uniform limit in $z$ because of $|\chi|<1$, and in any compact subset of $\vec u\in\mathbb{C}^d$, so that
\begin{equation}
\begin{split}
\mathfrak{BK}_{a,b}(\chi; \vec u, y_b):=&\lim_{N\to\infty}\partial_{u_1}^{a-1}[\mathcal{B}_{\chi,2}\otimes {\rm id}\mathfrak{K}_{N+k}^{\rm (CUE)}](\vec u,y_b)=\oint_{|\chi+z|=R}\frac{\dif z}{2\pi i z^{a}}\frac{\exp\left[\sum_{j=1}^d(j-1)! u_jz^{-j}\right]}{1-(\chi+z)y_b}.
\end{split}
\end{equation}
The integrand has only the isolated essential singularity at the origin $z=0$ encircled by the contour and the simple pole at $z=(1-\chi y_b)/y_b$ in the exterior of the contour on the Riemann sphere. Thus, we can evaluate the integral at the simple pole via the residue theorem which leads to
\begin{equation}
\begin{split}
\mathfrak{BK}_{a,b}(\chi; \vec u, y_b)=\frac{y_b^{a-1}}{(1-\chi y_b)^a}\exp\left[\sum_{j=1}^d(j-1)! u_j\frac{y_b^{j}}{(1-\chi y_b)^j}\right].
\end{split}
\end{equation}
The exponential functions and a factor $1/(1-\chi y_b)$ can be pulled out of the determinant in~\eqref{limx} and what remains is the ratio of Vandermonde determinants
\begin{equation}
\frac{\det[(y_b/[1-\chi y_b])^{a-1}]_{a,b=1,\ldots k}}{\Delta_k(\vec y)}=\frac{1}{\prod_{a=1}^k(1-\chi y_a)^{k-1}}.
\end{equation}
Combining this in~\eqref{limx} with the other terms we arrive at
\begin{equation}
\begin{split}
   \mathcal{Z}_{d}^{({\rm CUE})}\left(\chi,\vec n\right)=&\lim_{\vec y\to\bar\chi,\vec u\to0}\left(\prod_{j=1}^k\partial_{y_j}^{n_j}\right)\left(\prod_{j=1}^dD_{\vec u,j}^{h_j}\right)\prod_{a=1}^k\frac{\exp\left[\sum_{j=1}^d(j-1)! u_jy_a^{j}/(1-\chi y_a)^j\right]}{(1-\chi y_a)^{k}}\\
   =&\lim_{\vec u\to0}\left(\prod_{j=1}^dD_{\vec u,j}^{h_j}\right)\left(\prod_{j=0}^d\left[\lim_{y\to\bar\chi}\partial_y^j\frac{\exp\left[\sum_{j=1}^d(j-1)! u_jy^{j}/(1-\chi y)^j\right]}{(1-\chi y)^{k}}\right]^{h_j}\right)
\end{split}
\end{equation}
because the products in $y_j$ factorise. 

We can still massage the expression which can be easier done when writing the derivative in $y$ in terms of a contour integral,
\begin{equation}
\lim_{y\to\bar\chi}\partial_y^j\frac{\exp\left[\sum_{j=1}^d\frac{(j-1)! u_jy^{j}}{(1-\chi y)^j}\right]}{(1-\chi y)^{k}}=j!\oint\frac{\dif z}{2\pi i (z-\bar\chi)^{j+1}(1-\chi z)^k}\exp\left[\sum_{l=1}^d(l-1)! \frac{u_lz^{l}}{(1-\chi z)^l}\right].
\end{equation}
The counter-clockwise contour encircles $\bar\chi$ close enough so that the singularity at $z=1/\chi$ lies in its exterior. Next, we perform the M\"obius transformation
\begin{equation}
z=\bar\chi\frac{1+z'}{1+|\chi|^2z'}\qquad\Leftrightarrow \qquad z'=-\frac{\bar\chi-z}{|\chi|^2z-\bar\chi}\ ,
\end{equation}
which maps the two singularities $\bar\chi$ and $1/\chi$ to $0$ and $\infty$, respectively. This leads to
\begin{equation}
\lim_{y\to\bar\chi}\partial_y^j\frac{\exp\left[\sum_{j=1}^d\frac{(j-1)! u_jy^{j}}{(1-\chi y)^j}\right]}{(1-\chi y)^{k}}=j!\oint\frac{\dif z'}{2\pi i {z'}^{j+1}}\frac{(1+|\chi|^2z')^{k+j-1}}{\bar\chi^{j}(1-|\chi|^2)^{k+j}}\exp\left[\sum_{l=1}^d(l-1)! \frac{u_l\bar\chi^l(1+z')^{l}}{(1-|\chi|^2)^l}\right].
\end{equation}
Finally, we exploit the scaling property~\eqref{der.scaling}, when rescaling $u_j\to(1-|\chi|^2)^j u_j/\bar\chi^j$, to find the result
\begin{equation}\label{CUE.result.1}
\begin{split}
   \mathcal{Z}_{d}^{({\rm CUE})}\left(\chi,\vec n\right)=&\frac{1}{(1-|\chi|^2)^{k^2+2\sum_{j=1}^djh_j}}\lim_{\vec u\to0}\left(\prod_{j=1}^dD_{\vec u,j}^{h_j}\right)\left(\prod_{j=0}^d\mathfrak{P}_j^{h_j}(|\chi|^2;\vec u)\right),
\end{split}
\end{equation}
with the functions
\begin{equation}
\mathfrak{P}_j(|\chi|^2;\vec u)=j!\oint_{|z'|=1}\frac{\dif z'}{2\pi i {z'}^{j+1}}(1+|\chi|^2z')^{k+j-1}\exp\left[\sum_{l=1}^d(l-1)! u_l(1+z')^{l}\right],
\end{equation}
which are polynomials of degree $j$ in $|\chi|^2$. The cases $d=0$ and $d=1$ can be readily regained and we believe that also for $d>1$ our result~\eqref{CUE.result.1} allows for analytic continuation, once rewriting the derivative in $\vec u$ as integrals.
\end{example}

The representation~\eqref{expect.CUE.2} is especially useful when the limiting point $\chi$ lies on the unit circle. We note that $\chi$ can always be rotated to $\chi=1$ in such a case due to the isotropy of the random matrix ensemble, meaning one has only to consider one limiting point. This also carries over when going from $\vec y$ to $\vec {y'}$ in~\eqref{expect.CUE.2}. Then, the expectation value can be expressed in terms of a $2k\times2k$ determinant and only a single Vandermonde determinant, in particular it is
\begin{equation}\label{kernel.CUE.2}
\begin{split}
 \mathcal{Z}_{0}^{({\rm CUE})}\left(\left[\begin{array}{c}\vec x\\ \vec y\end{array}\right]\right)=&\frac{(-1)^{Nk}}{N!}\left(\prod_{j=1}^ky_j^N\right)\left(\prod_{j=1}^N\int_{|z_j|=1}\frac{\dif z_j}{2\pi z_j^{k+1}}\right)
 \Delta_N(\vec z^*)\Delta_N(\vec z) \prod_{l=1}^N\prod_{j=1}^k(x_j-z_l)(y'_j-z_l)\\
    =&\left(\prod_{j=1}^ky_j^N\right)\frac{1}{\Delta_{2k}(\vec x,\vec{y'} )}\det\left[\begin{array}{c|c} x_a^{b-1}&x_a^{N+k+b-1}\\ \hline {y'}_a^{b-1}&{y'}_a^{N+k+b-1} \end{array}\right]_{a,b=1,\ldots,k}
\end{split}
\end{equation}
where we have combined the second Vandermonde determinant and subsequent factors in the first line into a larger Vandermonde determinant, $\Delta_{N+2k}(\vec x,\vec y,\vec z)/\Delta_{2k}(\vec x,\vec y)$. Also, 
we have applied the generalised Andr\'eief identity~\cite[Appendix C]{KG09a} and the orthogonality of the monomials. We recall the relation $y_j'=y_j^{-1}$.

\begin{example}[Large $N$-result on the unit circle]
To cover both types of partition functions~\eqref{eq:CUEoriginal.ref} and~\eqref{eq:CUEoriginal.ref.b} we consider the following 
one-parameter family of partition functions at $\chi=1$,
\begin{equation}
\begin{split}
&\widetilde{\mathcal{Z}}_{d,c}^{({\rm CUE})}\left(\chi=1;\vec h\right)= \lim_{\vec x,\vec {y'}\to1}\left(\prod_{j=1}^k(x_j\partial_{x_j})^{n_j}(-y'_j\partial_{y'_j})^{n_j}\right)\left(\prod_{j=1}^k\frac{x_j^{Nc}}{{y'_j}^{Nc}}\right)\mathcal{Z}_{0}^{({\rm CUE})}\left(\left[\begin{array}{c}\vec x\\ \vec y\end{array}\right]\right)\\
&\quad=\lim_{\vec x,\vec {y'}\to1}\left(\prod_{j=1}^k(x_j\partial_{x_j}+Nc)^{n_j}(-y'_j\partial_{y'_j}+N+Nc)^{n_j}\right)\frac{\det\left[\begin{array}{c|c} x_a^{b-1}&x_a^{N+k+b-1}\\ \hline {y'}_a^{b-1}&{y'}_a^{N+k+b-1} \end{array}\right]_{a,b=1,\ldots,k}}{\Delta_{2k}(\vec x,\vec{y'} )}.
\end{split}
\end{equation}
The case $c=0$ corresponds to~\eqref{eq:CUEoriginal.ref} and the choice $c=-1/2$ yields the partition function~\eqref{eq:CUEoriginal.ref.b}.

The proper scaling is $x_j=1+w_j/N$ and $y_j=1+v_j/N$ which gives
\begin{equation}
\begin{split}
\widetilde{\mathcal{Z}}_{d,c}^{({\rm CUE})}\left(\chi=1;\vec h\right)=&\lim_{\vec w,\vec {v}\to0}\left(\prod_{j=1}^k(N(c+\partial_{w_j})+w_j]\partial_{w_j})^{n_j}(N(1+c-\partial_{v_j})-v_j\partial_{v_j})^{n_j}\right)\\
&\times\frac{N^{k(2k-1)}}{\Delta_{2k}(\vec w,\vec{v} )}\det\left[\begin{array}{c|c}  (1+w_a/N)^{b-1}&(1+w_a/N)^{N+k+b-1}\\ \hline (1+v_a/N)^{b-1}&(1+v_a/N)^{N+k+b-1} \end{array}\right]_{a,b=1,\ldots,k}\\
=&\lim_{\vec w,\vec {v}\to0}\left(\prod_{j=1}^k(N(c+\partial_{w_j})+w_j]\partial_{w_j})^{n_j}(N(1+c-\partial_{v_j})-v_j\partial_{v_j})^{n_j}\right)\\
&\times\frac{N^{k^2}}{\Delta_{2k}(\vec w,\vec{v} )}\det\left[\begin{array}{c|c}  w_a^{b-1}&(1+w_a/N)^{N+k}w_a^{b-1}\\ \hline v_a^{b-1}&(1+v_a/N)^{N+k}v_a^{b-1} \end{array}\right]_{a,b=1,\ldots,k}.
\end{split}
\end{equation}
In the second line we have recombined the columns.
When rescaling the whole expression with $N^{-k^2-2\sum_{j=1}^djh_j}$, we can perform the limit by Weierstrass' Theorem, due to the uniform convergence of entire functions on any compact set of the complex plane. Therefore, it is
\begin{equation}
\begin{split}
\lim_{N\to\infty}\frac{\widetilde{\mathcal{Z}}_{d,c}^{({\rm CUE})}\left(\chi=1;\vec h\right)}{N^{k^2+2\sum_{j=1}^djh_j}}=&\lim_{\vec w,\vec {v}\to0}\left(\prod_{j=1}^k\sum_{l_1,l_2=0}^{n_j}\binom{n_j}{l_1}\binom{n_j}{l_2}c^{n_j-l_1}(1+c)^{n_j-l_2}\partial_{w_j}^{l_1}(-\partial_{v_j})^{l_2}\right)\\
&\times\frac{\det\left[\begin{array}{c|c}  w_a^{b-1}&e^{w_a}w_a^{b-1}\\ \hline v_a^{b-1}&e^{v_a}v_a^{b-1} \end{array}\right]_{a,b=1,\ldots,k}}{\Delta_{2k}(\vec v,\vec w )}.
\end{split}
\end{equation}
When carrying out the limit, it is a good thing that the result of Corollary~\ref{cor: factoring derivatives.higher} can be understood as a multi-linear map, where we replace each $\partial_{x_j}^{n_j}$ by $D_{\vec u,n_j}$. In the present situation, we replace $\partial_{w_j}^{l_1}(-\partial_{v_j})^{l_2}$ by $(-1)^{l_2}D_{\vec u,l_1}D_{\vec u,l_2}$, leading to
\begin{equation}\label{CUE.circle.1}
\begin{split}
\lim_{N\to\infty}\frac{\widetilde{\mathcal{Z}}_{d,c}^{({\rm CUE})}\left(\chi=1;\vec h\right)}{N^{k^2+2\sum_{j=1}^djh_j}}=&\lim_{\vec u\to0}\left(\prod_{n=1}^d\left[\sum_{l_1,l_2=0}^{n}\binom{n}{l_1}\binom{n}{l_2}c^{n-l_1}(1+c)^{n-l_2}(-1)^{l_2}D_{\vec u,l_1}D_{\vec u,l_2}\right]^{h_j}\right)\\
&\times\det\left[\partial_{u_1}^{a-1}\mathfrak{F}_{b-1}^{(1)}(\vec u),\ \partial_{u_1}^{a-1}\mathfrak{F}_{b-1}^{(2)}(\vec u)\right]_{\substack{a=1,\ldots,2k\\b=1,\ldots,k}}\ ,
\end{split}
\end{equation}
with the two sets of functions
\begin{equation}
\begin{split}
\partial_{u_1}^{a-1}\mathfrak{F}_{b-1}^{(1)}(\vec u)=&\oint_{|z|=1}\frac{\dif z}{2\pi i z^{a-b+1}}\exp\left[\sum_{j=1}^d(j-1)!u_jz^{-j}\right],\\
\partial_{u_1}^{a-1}\mathfrak{F}_{b-1}^{(2)}(\vec u)=&\oint_{|z|=1}\frac{\dif z}{2\pi i z^{a-b+1}}\exp\left[z+\sum_{j=1}^d(j-1)!u_jz^{-j}\right].
\end{split}
\end{equation}
It happens that $\partial_{u_1}^{a-1}\mathfrak{F}_{b-1}^{(1)}(\vec u)$ vanishes when $a>b$, as then the isolated essential singularity at the origin is the only singularity on the Riemann sphere. For $a=b$ it is $\partial_{u_1}^{a-1}\mathfrak{F}_{a-1}^{(1)}(\vec u)=1$ as the integrand can be evaluated at its residue at infinity. Both insights imply that the first $k$-columns of the determinant in~\eqref{CUE.circle.1} are a strictly upper triangular, with unity on the main diagonal ,so that we can readily Laplace expand the determinant in those columns, leading to the simplification
\begin{equation}\label{CUE.circle.2}
\begin{split}
\lim_{N\to\infty}\frac{\widetilde{\mathcal{Z}}_{d,c}^{({\rm CUE})}\left(\chi=1;\vec h\right)}{N^{k^2+2\sum_{j=1}^djh_j}}=&\lim_{\vec u\to0}\left(\prod_{n=1}^d\left[\sum_{l_1,l_2=0}^{n}\binom{n}{l_1}\binom{n}{l_2}c^{n-l_1}(1+c)^{n-l_2}(-1)^{l_2}D_{\vec u,l_1}D_{\vec u,l_2}\right]^{h_j}\right)\\
&\times\det\left[\partial_{u_1}^{k+a-1}\mathfrak{F}_{b-1}^{(2)}(\vec u)\right]_{a,b=1,\ldots,k}.
\end{split}
\end{equation}

In the case $d=1$, we find
\begin{equation}\label{CUE.circle.d1}
\begin{split}
\lim_{N\to\infty}\frac{\widetilde{\mathcal{Z}}_{1,c}^{({\rm CUE})}\left(\chi=1;k-h_1,h_1\right)}{N^{k^2+2h_1}}=&\lim_{u\to0}\left(c(c+1)+\partial_u-\partial_u^2\right)^{h_1}\det\left[u^{(b-a-k)/2}I_{k+a-b}(2\sqrt{u})\right]_{a,b=1,\ldots,k},
\end{split}
\end{equation}
where $I_j$ is the modified Bessel function of the first kind. This result equivalent to a result derived in~\cite[Theorem 1]{CRS} for $h_1=k$ and $c=0$ while it reproduces~\cite[Corollary~1.5 in combination with Theorem~1.2]{Keating-etal} for $c=-1/2$.

Thence, let us go over to the case $d=2$ where it is
\begin{equation}\label{CUE.circle.d2}
\begin{split}
&\lim_{N\to\infty}\frac{\widetilde{\mathcal{Z}}_{2,c}^{({\rm CUE})}\left(\chi=1;\vec h\right)}{N^{k^2+2h_1+4h_2}}
=\lim_{\vec u\to0}\left(c(c+1)+\partial_{u_1}-\partial_{u_1}^2\right)^{h_1}\left(\left[(c+\partial_{u_1})^2+\partial_{u_2}\right]\left[(c+1-\partial_{u_1})^2+\partial_{u_2}\right]\right)^{h_2}\\
&\quad\quad\quad\quad\quad\quad\quad\quad\quad\quad\quad\quad\times\det\left[\partial_{u_1}^{k+a-1}\mathfrak{F}_{b-1}^{(2)}(\vec u)\right]_{a,b=1,\ldots,k}.
\end{split}
\end{equation}
 Then, we need to compute
\begin{equation}
\begin{split}
\partial_{u_1}^{a-1}\mathfrak{F}_{b-1}^{(2)}(\vec u)=\oint_{|z|=1}\frac{\dif z}{2\pi i z^{a-b+1}}\exp\left[z+u_1z^{-1}+u_2z^{-2}\right].
\end{split}
\end{equation}
When employing a Hubbard--Stratonovich transformation~\cite{Hubbard1959} to linearise
\begin{equation}
e^{u_2z^{-2}}=\int_{-\infty}^\infty\frac{\dif t}{\sqrt{\pi}}e^{-t^2+2\sqrt{u_2}tz^{-1}},
\end{equation}
it becomes clear that this function is equal to
\begin{equation}
\begin{split}
\partial_{u_1}^{k+a-1}\mathfrak{F}_{b-1}^{(2)}(\vec u)=\int_{-\infty}^\infty\frac{\dif t}{\sqrt{\pi}}e^{-t^2}(u_1+2\sqrt{u_2}t)^{(b-a-k)/2}I_{k+a-b}\left(2\sqrt{u_1+2\sqrt{u_2}t}\right)
\end{split}
\end{equation}
which is a convolution of a Bessel function with a Gaussian.

\end{example}

\section{Proofs}\label{sec:proofs}

\subsection{Proof of Theorem~\ref{thm:main.result}}\label{sec:proof.main}

Let $V=\{x_a^{\gamma-1}\}_{a,\gamma=1,\ldots,P}$ be the Vandermonde matrix where $a$ is the row index and $\gamma$ the column index. Its inverse $W=V^{-1}$ will be at the heart of the proof. Especially the matrix entries of its inverse are important to get the Vandermonde determinant absorbed into the Pfaffian. Those matrix entries have the contour integral representation
\begin{equation}
W_{\alpha b}=\oint_{|z|=R}\frac{\dif z}{2\pi i z^\alpha} \prod_{l\neq b}\frac{z-x_l}{x_b-x_l}\ ,
\end{equation}
which can be derived via the cofactor matrix. The contour runs counter-clockwise and the radius $R>0$ can be chosen arbitrarily. We will, however, choose $R$ large enough such that the contour $|z|=R$ is much larger than the maximum $\max_{j=1,\ldots,P}\{x_j\}$, out of convenience for the ensuing discussion.

In the first step, we make use of the inverse $W$ and multiply it into the Pfaffian, along the identity
\begin{equation}
\det W\ {\rm Pf}\left[\begin{array}{cc} A & B\\-B^T &C\end{array}\right]= {\rm Pf}\left[\begin{array}{cc} WAW^T & WB\\-B^TW^T &C\end{array}\right].
\end{equation}
In this way we, absorb the reciprocal of the Vandermonde determinant, i.e.,
\begin{equation}
\frac{1}{\Delta_P(\vec x)}{\rm Pf}\left[\begin{array}{cc} A(x_a,x_c) & B_d(x_a) \\ -B_b(x_c) & C_{bd} \end{array}\right]_{\substack{a,c=1,\ldots,P\\b,d=1,\ldots,Q}}={\rm Pf}\left[\begin{array}{cc} \mathfrak{A}_{\alpha\gamma} & \mathfrak{B}_{\alpha d} \\ -\mathfrak{B}_{\gamma b} & C_{bd} \end{array}\right]_{\substack{\alpha,\gamma=1,\ldots,P\\b,d=1,\ldots,Q}}
\end{equation}
with
\begin{equation}
\begin{split}
\mathfrak{A}_{\alpha\gamma}=&\oint_{|z_1|=R}\frac{\dif z_1}{2\pi i z_1^\alpha}\oint_{|z_2|=R}\frac{\dif z_2}{2\pi i z_2^\gamma} \sum_{a,c=1}^P A(x_a,x_c) \left(\prod_{l\neq a}\frac{z_1-x_l}{x_a-x_l}\right)\left(\prod_{l\neq c}\frac{z_2-x_l}{x_c-x_l}\right)\\
=&\oint_{|z_1|=R}\frac{\dif z_1}{2\pi i z_1^\alpha}\oint_{|z_2|=R}\frac{\dif z_2}{2\pi i z_2^\gamma} \oint_{|\zeta_1|=\mathcal{C}}\frac{\dif \zeta_1}{2\pi i (z_1-\zeta_1)}\oint_{|\zeta_2|=\mathcal{C}}\frac{\dif \zeta_2}{2\pi i (z_2-\zeta_2)} \\
&\times A(\zeta_1,\zeta_2) \prod_{l=1}^P\frac{(z_1-x_l)(z_2-x_l)}{(\zeta_1-x_l)(\zeta_2-x_l)}
\end{split}
\end{equation}
and
\begin{equation}
\begin{split}
\mathfrak{B}_{\alpha d}=&\oint_{|z|=R}\frac{\dif z}{2\pi i z^\alpha}\sum_{a=1}^P B_d(x_a) \left(\prod_{l\neq a}\frac{z-x_l}{x_a-x_l}\right)=\oint_{|z|=R}\frac{\dif z}{2\pi i z^\alpha}\oint_{|\zeta|=\mathcal{C}}\frac{\dif \zeta}{2\pi i (z-\zeta)}B_d(\zeta) \prod_{l=1}^P\frac{z-x_l}{\zeta-x_l}.
\end{split}
\end{equation}
The second identity follows from the residue theorem where the contour $\mathcal{C}$ is chosen in such a way, that every $x_j$ is run around counter-clockwise only once and does not encircle or cross the contour $|z|=R$. Interestingly, while we started with pairwise distinct variables $\vec x=(x_1,\ldots,x_P)$, the new representation allows us to analytically continue to degenerate parameters. Actually, this is the reason why this representation is advantageous.

The contour integrals can be pulled out of the Pfaffian by reversing the generalised de Bruijn identity~\cite{DeBruijn1955}, particularly using the multilinearity of the Pfaffian, which  leads to
\begin{equation}
\begin{split}
\frac{1}{\Delta_P(\vec x)}{\rm Pf}\left[\begin{array}{cc} A(x_a,x_c) & B_d(x_a) \\ -B_b(x_c) & C_{bd} \end{array}\right]_{\substack{a,c=1,\ldots,P\\b,d=1,\ldots,Q}}
=&\left(\prod_{l=1}^P \oint_{|z_j|=R}\frac{\dif z_j}{2\pi i z_j^{j}}\oint_{|\zeta_j|=\mathcal{C}}\frac{\dif \zeta_j}{2\pi i (z_j-\zeta_j)}\right)\\
&\times\left(\prod_{l,k=1}^P\frac{z_k-x_l}{\zeta_k-x_l}\right){\rm Pf}\left[\begin{array}{cc} A(\zeta_a,\zeta_c) & B_d(\zeta_a) \\ -B_b(\zeta_c) & C_{bd} \end{array}\right]_{\substack{a,c=1,\ldots,P\\b,d=1,\ldots,Q}}.
\end{split}
\end{equation}
This expression is ideal to perform the derivatives and the limit $\vec x\to\vec \chi$. For this task, we change to the notation~\eqref{eq: paramPartition1} of the partitions and employ Lemma~\ref{lemma:der.rel} to get
\begin{equation}
\begin{split}
&\lim_{\vec x\to \vec\chi}\left(\prod_{l=1}^L\prod_{j=1}^{P_l}\partial_{x_{l,j}}^{n_{l,j}}\right)\prod_{j,o=1}^N\frac{z_j-x_o}{\zeta_j-x_o}=\lim_{\vec x\to \vec\chi}\left(\prod_{l=1}^L\prod_{j=1}^{P_l}\partial_{x_{l,j}}^{n_{l,j}}\right)\prod_{l=1}^L\prod_{o=1}^{P_l}\prod_{j=1}^P\frac{z_j-x_{l,o}}{\zeta_j-x_{l,o}}\\
=&\lim_{\mathbf{ u}\to 0}\left(\prod_{l=1}^L\prod_{k=1}^{d}D_{\vec u_l,k}^{m_{l,k}}\right)\exp\left[\sum_{l=1}^L\sum_{j=1}^P \bigl(F_d(\vec u_l,\chi_l,\zeta_j)-F_d(\vec u_l,\chi_l,z_j)\bigl)\right]\prod_{l=1}^L\prod_{j=1}^{P}\frac{(z_j-\chi_l)^{P_l}}{(\zeta_j-\chi_l)^{P_l}}.
\end{split}
\end{equation}
The differential operators in $\vec u_l$ can be pulled out of the contour integral, while the exponential and the ratio of the polynomials in $z_j$ and $\zeta_j$ can be put into the Pfaffian, again with the help of the generalised de Bruijn identity. This yields
\begin{equation}\label{main.result.a}
\begin{split}
&\lim_{\vec x\to \vec\chi}\left(\prod_{l=1}^L\prod_{j=1}^{P_l}\partial_{x_{l,j}}^{n_{l,j}}\right) \frac{1}{\Delta_P(\vec x)}{\rm Pf}\left[\begin{array}{cc} A(x_a,x_c) & B_d(x_a) \\ -B_b(x_c) & C_{bd} \end{array}\right]_{\substack{a,c=1,\ldots,P\\b,d=1,\ldots,Q}}\\
=&\lim_{\mathbf{u}\to 0}\left(\prod_{l=1}^L\prod_{k=1}^{d}D_{\vec u_l,k}^{m_{l,k}}\right){\rm Pf}\left[\begin{array}{cc} [\widehat{\mathcal{K}}_{\vec\chi,\alpha}\otimes\widehat{\mathcal{K}}_{\vec\chi,\gamma} A](\boldsymbol{u},\boldsymbol{u}) & [\widehat{\mathcal{K}}_{\vec\chi,\alpha}B_d](\boldsymbol{u}) \\ -[\widehat{\mathcal{K}}_{\vec\chi,\gamma}B_b](\boldsymbol{u}) & C_{bd} \end{array}\right]_{\substack{\alpha,\gamma=1,\ldots,P\\b,d=1,\ldots,Q}}
\end{split}
\end{equation}
with  the transformation
\begin{equation}\label{transform.a}
\widehat{\mathcal{K}}_{\vec\chi,\alpha} f(\boldsymbol u)=\oint_{|z|=R}\frac{\dif z}{2\pi i z^\alpha}\oint_{|\zeta|=\mathcal{C}}\frac{\dif \zeta\,f(\zeta)}{2\pi i (z-\zeta)}\exp\left[\sum_{l=1}^L\bigl(F_d(\vec u_l,\chi_l,\zeta)-F_d(\vec u_l,\chi_l,z)\bigl)\right]\prod_{l=1}^L\frac{(z-\chi_l)^{P_l}}{(\zeta-\chi_l)^{P_l}}.
\end{equation}
As $R$ has been chosen large enough, we know that the terms $\exp\left[-\sum_{l=1}^LF_d(\vec u_l,\chi_l,z)\right]/(z-\zeta)$
have an absolutely convergent Taylor series in $1/z$ on the contour $|z|=R$. Therefore, the integral over $z$ yields a polynomial in $\zeta$ of degree $P-\alpha$ due to the polynomial $\prod_{l=1}^L(z-\chi_l)^{P_l}$, i.e.,
\begin{equation}
\oint_{|z|=R}\frac{\dif z}{2\pi i z^\alpha(z-\zeta)}\exp\left[-\sum_{l=1}^LF_d(\vec u_l,\chi_l,z)\right]\prod_{l=1}^L(z-\chi_l)^{P_l}=\sum_{j=0}^{P-\alpha}c_j(\boldsymbol u,\chi)\zeta^j.
\end{equation}
The coefficients are explicitly given by 
\begin{equation}
c_j(\boldsymbol u,\chi)=\oint_{|z|=R}\frac{\dif z}{2\pi i z^{\alpha+j+1}}\exp\left[-\sum_{l=1}^LF_d(\vec u_l,\chi_l,z)\right]\prod_{l=1}^L(z-\chi_l)^{P_l}\ ,
\end{equation}
resulting from the geometric series of the term $1/(z-\zeta)$ in $\zeta/z$. These coefficients are evidently independent of the other contour integration variable $\zeta$, and the leading coefficient can be evaluated by taking the residue at infinity given by
\begin{equation}
\begin{split}
c_{P-\alpha}(\boldsymbol u,\chi)&=\oint_{|z|=R}\frac{\dif z}{2\pi i z^{P+1}}\exp\left[-\sum_{l=1}^LF_d(\vec u_l,\chi_l,z)\right]\prod_{l=1}^L(z-\chi_l)^{P_l}\\
&=\lim_{z\to\infty}\exp\left[-\sum_{l=1}^LF_d(\vec u_l,\chi_l,z)\right]\prod_{l=1}^L\left(1-\frac{\chi_l}{z}\right)^{P_l}=1.
\end{split}
\end{equation}
Thus, we can use the property of the Pfaffian by simultaneously adding rows and columns without changing the result, and replace these cumbersome polynomials by the leading order powers $\zeta^{P-\alpha}$. This leads to the claimed result with the transform~\eqref{transform} concluding the proof.

\subsection{Proof of Theorems~\ref{Thm-gen-det} and~\ref{Thm-gen-pfaffian}}
\label{proof:Thm-gen-unitary}

We start with proving Theorem~\ref{Thm-gen-det} and note that 
\begin{equation}
    \label{eq:FrakADefinition}
  D(\vec u,\vec v)= \frac{\det[B(u_a,{v}_b)]_{a,b=1,\ldots,k}}{\Delta_k(\vec u) \Delta_k(\vec { v})} 
\end{equation}
is a symmetric function in the entries of $\vec u= (u_1,\dots,u_k)$ and $\vec v=(v_1,\dots,v_k)$, separately.
Combining this with the holomorphicity of $B$ at $(\chi,\xi)$, $D(\vec u,\vec v)$ can be expanded in a basis of symmetric polynomials at $(\chi,\ldots,\chi)\in\mathbb{C}^{k}$ in the first set of variables, and $(\xi,\ldots,\xi)\in\mathbb{C}^{k}$ in the second set.
The most natural basis are the Schur polynomials,
\begin{equation}
    \label{eq:SchurByDet}
    S_{\vec\lambda}(\vec u) = \frac{\det\left[u_a^{\widehat{\lambda}_b}\right]_{a,b=1,\ldots, k}}{\det[u_a^{b-1}]_{a,b=1,\ldots,k}} = \frac{\det\left[u_a^{\widehat{\lambda}_b}\right]_{a,b=1,\ldots,k}}{\Delta_k(\vec u)},
\end{equation}
with $\widehat\lambda_a=\lambda_{k-a+1}+a-1$.

Schur polynomials form an orthogonal basis with respect to the Hall inner product, which is related to the Haar measure on the unitary group $\U(k)$. It can be written as the following integral over powers of the unitary group $\U^k(1)$ (the maximal Abelian subgroup of $\U(k)$), where the orthonormality relation reads
\begin{equation}
    \label{eq:SchurOrthogonality}
    \langle S_{\vec \lambda}, S_{\vec\lambda'} \rangle = \frac{1}{k!} \oint_{\U(1)^k} \prod_{a=1}^k \frac{\dif u_a}{2 \pi i u_a} \overline{S_{\vec\lambda}(\vec u)} S_{\vec\lambda'}(\vec u) |\Delta_k(\vec u)|^2 = \delta_{\vec\lambda,\vec\lambda'}.
\end{equation}
The Kronecker delta for partitions is $1$ when the two partitions are equal, and otherwise vanishes.

We will also use the fact that monomial expansion of Schur polynomials involves as its coefficients the Kostka numbers~\cite[\S 4.3]{FultonHarris2004}, i.e.,
\begin{equation}
    \label{eq:SchurByKostka}
    S_{\vec\lambda}(\vec u) = \sum_{\substack{|
    \vec\alpha| =|\vec\lambda|}} K_{\vec\lambda,\vec\alpha} \prod_{a=1}^k u_a^{\alpha_a}.
\end{equation}
In particular, this implies that
\begin{equation}
    \label{eq:SchurDerivsKostka}
    \lim_{u_1,\ldots,u_k \to 0} \frac{\partial_{u_1}^{\alpha_1}}{\alpha_1!} \ldots \frac{\partial_{u_k}^{\alpha_k}}{\alpha_k!} S_{\vec\lambda}(\vec u) = K_{\vec \lambda,\vec\alpha}.
\end{equation}

To use the above identity for limits $\vec u \to \chi$ and $\vec v\to \xi$, we apply
\begin{equation}
    \label{eq:translationidentity}
    \lim_{\vec u\to\chi,\vec v \to\xi} D(\vec u,\vec v)= 
   \lim_{\vec u,\vec v\to0}\Big( \prod_{a=1}^k \partial_{u_a}^{\alpha_a} \partial_{{v}_a}^{\beta_a} \Big) \frac{\det[B(\chi + u_a,\widetilde{\chi}+v_b)]_{a,b=1}^k}{\Delta_k(\vec u) \Delta_k(\vec{v})},
\end{equation}
where we have used the fact that the Vandermonde determinants are invariant under translation.
Expanding the ratio of determinants in~\eqref{eq:FrakADefinition} in terms of Schur polynomials with coefficients $\mathfrak{B}_{\vec\mu, \vec\lambda}$ gives
\begin{equation}
    \label{eq:FrakASchurExpansion}
    \frac{\det[B(\chi + u_a,\xi+v_b)]_{a,b=1}^k}{\Delta_k(\vec u) \Delta_k(\vec{v})}= 
    \sum_{n,n'=0}^{\infty} \sum_{\substack{\vec\mu \vdash n\\l(\vec\mu)\leq k}} \sum_{\substack{\vec\lambda \vdash n' \\ l(\vec\lambda) \leq k}} \mathfrak{B}_{\vec\mu,\vec\lambda} S_{\vec\mu}(\vec u) S_{\vec \lambda}(\vec{v}).
\end{equation}
In terms of the Hall inner product, the coefficients are given by
\begin{equation}
\begin{aligned}
    \label{eq:FrakAInnerProduct}
    \mathfrak{B}_{\vec\mu,\vec\lambda} 
    =& \frac{1}{k!^2} \left(\prod_{a=1}^k \oint_{|u_a|=\epsilon}\frac{\dif u_a}{2 \pi i u_a} \oint_{|v_a|=\epsilon}\frac{\dif {v}_a}{2 \pi i {v}_a} \right)\\
    &\times\det[{u}_a^{-\widehat{\mu}_b}]_{a,b=1,\ldots,k} \det[{v}_a^{-\widehat{\lambda}_b}]_{a,b=1,\ldots,k} \det[B(\chi+u_a,\xi+v_b)]_{a,b=1,\ldots,k}\ ,
\end{aligned}
\end{equation}
where $\epsilon>0$ is chosen small enough, so that the function $B$ is holomorphic on and inside the circular contours that are integrated counter-clockwise.
Applying Andr\'eief's integral formula~\cite[ch. 11]{LivanNovaesVivo} twice, once for each set of integrations, gives 
\begin{equation}
\begin{aligned}
    \mathfrak{B}_{\vec\mu,\vec\lambda} 
                               & = \det\left[\oint_{|u|=\epsilon}\frac{\dif u}{2 \pi i u^{\widehat{\mu}_a+1}} \oint_{|v|=\epsilon}\frac{\dif {v}}{2 \pi i {v}^{\widehat{\lambda}_b+1}} B(\chi+u,\xi+v)\right]_{a,b=1}^{k} = \det\left[\frac{\partial_\chi^{\widehat{\mu}_a}\partial_\xi^{\widehat{\lambda}_b}B(\chi,\xi)}{\widehat{\mu}_a!\widehat{\lambda}_b!}\right]_{a,b=1}^{k},
\end{aligned}
\label{proof: andreif step}
\end{equation}
where we used in the second step the residue theorem twice. Pulling the factorials out of the determinant and applying the derivatives on the Schur polynomials by using~\eqref{eq:SchurDerivsKostka} yields the claim~\eqref{del-gen-det}. We can pull the derivatives into the series as it is uniformly converging in a neighbourhood of the origin $\vec u=\vec v=0$.

The proof strategy of Theorem~\ref{Thm-gen-pfaffian}  is essentially the same only generalised to Pfaffians of the form \eqref{del-gen-pfaffian}.
Here, we study (in analogy with~\eqref{eq:FrakADefinition}) the translated functions and expand the Pfaffian in terms of Schur polynomials in $\vec x=(\chi,\ldots,\chi)+\vec u$,
\begin{equation}
    \label{eq:FrakBDefinition}
    \frac{\Pf[A(\chi+u_a,\chi+u_b)]_{a,b=1,\ldots,2k}}{\Delta_{2k}(\vec u)} = \sum_{n=0}^{\infty} \sum_{\substack{\vec\lambda \vdash n \\ l(\vec\lambda) \leq 2k}} \mathfrak{A}_{\lambda} S_{\vec\lambda}(\vec u).
\end{equation}
The coefficients $\mathfrak{A}_{\vec\lambda}$ can be anew calculated with the help of Hall's inner product and de Bruijn's integral formula~\cite[(4.7)]{DeBruijn1955},
\begin{equation}
\begin{aligned}
    \label{eq:FrakBInnerProduct}
    \mathfrak{A}_{\vec\lambda} & = \frac{1}{(2k)!} \left( \prod_{j=1}^{2k} \oint_{|u_j|=\epsilon}\frac{\dif u_j}{2 \pi i u_j}\right) \det[{u}_a^{-\widehat{\lambda}_b}]_{a,b=1,\ldots,2k}  \Pf[A(\chi+u_a,\chi+u_b)]_{a,b=1,\ldots,2k}\\
     &= \Pf\left[ \oint_{|u|=\epsilon}\frac{\dif u}{2 \pi i u^{\widehat{\lambda}_a+1}} \oint_{|v|=\epsilon}\frac{\dif {v}}{2 \pi i {v}^{\widehat{\lambda}_b+1}} A(\chi+u,\chi+v)\right]_{a,b=1,\ldots,2k}\\
     &= \Pf\left[ \frac{\partial_u^{\widehat{\lambda}_a}\partial_v^{\widehat{\lambda}_b}A(u,v)|_{u=v=\chi}}{\widehat{\lambda}_a! \widehat{\lambda}_b!}\right]_{a,b=1,\ldots,2k}.
 \end{aligned}
\end{equation}
Like before, we have employed the residue theorem in the last step.

Finally, we can pull the factorials out the Pfaffian and swap the derivatives in $\vec u$ with the series due to uniform convergence of the series in some neighbourhood of $\vec u=0$. After applying~\eqref{eq:SchurDerivsKostka}, we arrive at~\eqref{del-gen-pfaffian}, finishing the second proof.

\subsection{Proof of Theorem \ref{Thm-gen-symplectic-abs}}
\label{proof:Thm-gen-symplectic-abs}

We prove equation~\eqref{del-gen-symplectic-abs}. For this purpose, we define the abbreviations
\begin{eqnarray}
    A_{\chi\chi}(u,v) := A(\chi + u,\chi + v),\ A_{\xi \xi}(u,v) := A(\xi + u, \xi + v),\ A_{\chi \xi}(u,v) := A(\chi + u, \xi + v) \label{eq:def-f1},
\end{eqnarray}
of which $A_{\xi\xi}$ and $A_{\chi\chi}$ inherit the antisymmetry from $A$.
Next, we split the Vandermonde determinant into the terms
\begin{equation}
    \Delta_{2k}(\vec x) = \Delta_k(\vec u) \Delta_k(\vec v) \prod_{j,l=1}^k (\xi - \chi + v_l - u_j).
\end{equation}
As before, we have a function which is symmetric in $u_i$ and $v_i$, separately.
We expand it in Schur polynomials with coefficients $\mathfrak{D}_{\vec\lambda, \vec\mu}$
\begin{equation}
\begin{aligned}
    \label{eq:symplectic-abs-SchurExpan}
    \frac{1}{\Delta_k(\vec u)\Delta_k(\vec v) \prod_{j,l=1}^k (\xi-\chi + v_l - u_j)}
    &\Pf
    \begin{bmatrix}
        A_{\chi\chi}(u_a,{u}_b) & A_{\chi\xi}(u_a,v_b)\\
        -A_{\chi\xi}(u_b,v_a) & A_{\xi\xi}(v_a,v_b)\\
    \end{bmatrix}_{a,b=1,\ldots,k} \\
    &= \sum_{n,n'=0}^{\infty} \sum_{\substack{\vec\lambda \vdash n \\ l(\vec\lambda) \leq k}} \sum_{\substack{\vec\mu \vdash n' \\ l(\vec\mu) \leq k}} \mathfrak{D}_{\vec\lambda,\vec\mu} S_{\vec\lambda}(\vec u) S_{\vec\mu}(\vec v)\ ,
\end{aligned}
\end{equation}
and compute the coefficients using the Hall inner product
\begin{equation}
    \label{eq:symplectic-abs-SchurCoeff}
    \begin{aligned}
        \mathfrak{D}_{\vec\lambda,\vec\mu} =& \frac{1}{k!^2} \left( \prod_{j=1}^{k} \oint_{|u_j|=\epsilon}\frac{\dif u_j}{2 \pi i u_j }\oint_{|v_j|=\epsilon}\frac{ \dif v_j}{2 \pi i v_j} \right)
     \frac{\det[{u}_a^{-\widehat{\lambda}_b}]_{a,b=1,\ldots,k} \det[{v}_a^{-\widehat{\mu}_b}]_{a,b=1,\ldots,k}}{\prod_{j,l=1}^k (\xi - \chi + v_l - u_j)} \\
        &\times \Pf
    \begin{bmatrix}
        A_{\chi\chi}(u_a,{u}_b) & A_{\chi\xi}(u_a,v_b)\\
        -A_{\chi\xi}(u_b,v_a) & A_{\xi\xi}(v_a,v_b)\\
    \end{bmatrix}_{a,b=1,\ldots,k} .
    \end{aligned}
\end{equation}
We first concentrate on computing these coefficients, as the Kostka numbers and the restrictions of the double series in $n$ and $n'$ to the summand $n,n'=|\vec\alpha|$ follows from~\eqref{eq:SchurDerivsKostka}, i.e.,
\begin{equation}
    \label{eq:symplectic-abs-SchurCoeff.a.b}
\begin{aligned}
    \lim_{\vec x \to (\chi,\xi)} &\Big( \prod_{j=1}^k (\partial_{x_{1,j}} \partial_{x_{2,j}})^{\alpha_j} \Big) \frac{\Pf\begin{bmatrix}
            A(x_{1,a},x_{1,b}) & A(x_{1,a},x_{2,b}) \\
            A(x_{2,a},x_{1,b}) & A(x_{2,a},x_{2,a})
        \end{bmatrix}_{a,b=1,\ldots,k}}{\Delta_{2k}(\vec x)}
         =\sum_{\substack{\vec\lambda,\vec\mu \vdash |\vec\alpha| \\l(\vec\lambda), l(\vec\mu) \leq k}}(\vec\alpha!)^2 K_{\vec\lambda,\vec\alpha}K_{\vec\mu,\vec\alpha}\mathfrak{D}_{\vec\lambda,\vec\mu}\,.
\end{aligned}
\end{equation}
In the first step of evaluating the coefficients, we expand the determinants in ${u}_a^{-\widehat{\lambda}_b}$ and ${v}_a^{-\widehat{\mu}_b}$ via the Leibniz formula. For this purpose, we can exploit the skew-symmetry in the remaining terms of the integral so that
\begin{equation}
    \label{eq:symplectic-abs-SchurCoeff.b}
    \begin{aligned}
        \mathfrak{D}_{\vec\lambda,\vec\mu} =& \left( \prod_{j=1}^{k} \oint_{|u_j|=\epsilon}\frac{\dif u_j}{2 \pi i u_j^{\widehat\lambda_j+1} }\oint_{|v_j|=\epsilon}\frac{ \dif v_j}{2 \pi i v_j^{\widehat\mu_j+1}} \right)
     \frac{\Pf
    \begin{bmatrix}
        A_{\chi\chi}(u_a,{u}_b) & A_{\chi\xi}(u_a,v_b)\\
        -A_{\chi\xi}(u_b,v_a) & A_{\xi\xi}(v_a,v_b)\\
    \end{bmatrix}_{a,b=1,\ldots,k}}{\prod_{j,l=1}^k (\xi - \chi + v_l - u_j)}.
    \end{aligned}
\end{equation}
The contour integrals can be carried out with the help of the residue theorem at the poles which lie all at the origin. Yet, the tricky part is to carry out the derivative of the terms in the denominator in a systematic way. We first do it in the integration over $u_j$,  where we employ the identity
\begin{equation}
    \begin{aligned}
        &\oint_{|u_j|=\epsilon} \frac{\dif u_j}{2 \pi i u_j^{\widehat{\lambda}_{j}+1}} 
        \frac{f(u_j)}{\prod_{l=1}^k (\xi - \chi + v_l - u_j)}
      =\sum_{n_j=0}^{\widehat\lambda_j}\sum_{ r_1^{(j)} +\cdots+r_k^{(j)}=\widehat{\lambda}_{j}-n_j} 
        \frac{ \partial_{u_j}^{n_j}f(u_j)|_{u_j=0} }{n_j!\prod_{l=1}^k (\xi-\chi+v_l)^{r_l^{(j)}+1}} \ .
    \end{aligned}
\end{equation}
Here, $n_j$ is the number of times the Pfaffian is differentiated and $r_l^{(j)}$ for $1 \leq l \leq k$ is the number of times the factor $\frac{1}{\xi - \chi + v_l - u_j}$ is differentiated. We can do this computation for any $u_1,\ldots, u_k$ so that we arrive at
\begin{equation}
    \label{eq:symplectic-abs-SchurCoeff.c}
    \begin{aligned}
        \mathfrak{D}_{\vec\lambda,\vec\mu} &= \sum_{\substack{0\leq n_j\leq\widehat\lambda_j\\j=1,\ldots,k}}\sum_{\substack{r_1^{(j)} +\cdots+r_k^{(j)}=\widehat{\lambda}_{j}-n_j\\j=1,\ldots,k}} \left( \prod_{j=1}^{k} \frac{1}{n_j!}\oint_{|v_j|=\epsilon}\frac{ \dif v_j}{2 \pi i v_j^{\widehat\mu_j+1}} \right)\\
        &\times
     \frac{\Pf
    \begin{bmatrix}
        \partial_{u}^{n_a}\partial_{u'}^{n_b}A_{\chi\chi}(u,u')|_{u=u'=0} & \partial_{u}^{n_a}A_{\chi\xi}(u,v_b)|_{u=0}\\
        -\partial_{u}^{n_b}A_{\chi\xi}(u,v_a)|_{u=0} & A_{\xi\xi}(v_a,v_b)\\
    \end{bmatrix}_{a,b=1,\ldots,k}}{\prod_{j,l=1}^k (\xi - \chi + v_l )^{r_l^{(j)}+1}}.
    \end{aligned}
\end{equation}
When carrying out the integrals in $v_j$ we need to adjust the relevant Cauchy formula as follows
\begin{equation}
    \begin{aligned}
        &\oint_{|v_l|=\epsilon} \frac{\dif v_l}{2 \pi i v_l^{\widehat{\mu}_{l}+1}} 
        \frac{f(v_l)}{\prod_{j=1}^k (\xi - \chi + v_l)^{r_l^{(j)}+1}}\\
      =&\sum_{e_l=0}^{\widehat\mu_l}\sum_{s_1^{(l)} +\cdots+s_k^{(l)}=\widehat{\mu}_{l}-e_l}\left[\prod_{j=1}^k(-1)^{s_j^{(l)}}\left(\begin{array}{c} r_l^{(j)}+s_j^{(l)} \\ r_l^{(j)}\end{array}\right) \right]
       \frac{ \partial_{v_l}^{e_l}f(v_l)|_{v_l=0} }{e_l!\prod_{j=1}^k (\xi-\chi)^{r_l^{(j)}+s_j^{(l)}+1}}.
    \end{aligned}
\end{equation}
When using the fact that $\sum_{j,l=1}^ks_j^{(l)}=|\widehat{\mu}|-|\vec{e}|$ as well as $\sum_{j,l=1}^kr_l^{(j)}=|{\widehat\lambda}|-|{\vec{n}}|$, we find
\begin{equation}
    \label{eq:symplectic-abs-SchurCoeff.d}
    \begin{aligned}
        \mathfrak{D}_{\vec\lambda,\vec\mu} =& \sum_{\substack{0\leq n_j\leq\widehat\lambda_j\\j=1,\ldots,k}}\sum_{\substack{r_1^{(j)} +\cdots+r_k^{(j)}=\widehat{\lambda}_{j}-n_j\\j=1,\ldots,k}}\sum_{\substack{0\leq e_l\leq\widehat\mu_l\\l=1,\ldots,k}}\sum_{\substack{s_1^{(l)} +\cdots+s_k^{(l)}=\widehat{\mu}_{l}-e_l\\l=1,\ldots,k}}\frac{(-1)^{|{\vec\mu}|-|{\vec{e}}|}}{\vec{n}!\vec{e}!(\xi-\chi)^{|{\widehat\lambda}|-|{\vec{n}}|+|{\widehat\mu}|-|{\vec{e}}|+k^2}}\\
        &\times\left[\prod_{j,l=1}^k\binom{r_l^{(j)}+s_j^{(l)}}{r_l^{(j)}}\right]\Pf
    \begin{bmatrix}
        \partial_{u}^{n_a}\partial_{u'}^{n_b}A_{\chi\chi}(u,u')|_{u=u'=0} & \partial_{u}^{n_a}\partial_v^{e_b}A_{\chi\xi}(u,v)|_{u=v=0}\\
        -\partial_{u}^{n_b}\partial_v^{e_a}A_{\chi\xi}(u,v)|_{u=v=0} & \partial_v^{e_a}\partial_{v'}^{e_b}A_{\xi\xi}(v,v')|_{v=v'=0}
    \end{bmatrix}_{a,b=1,\ldots,k}\\
    =& \sum_{\substack{0\leq n_j\leq\widehat\lambda_j\\j=1,\ldots,k}}\sum_{\substack{0\leq e_l\leq\widehat\mu_l\\l=1,\ldots,k}}\frac{(-1)^{|{\vec\alpha}|-q'}}{\vec{n}!\vec{e}!(\xi-\chi)^{2|{\vec\alpha}|+k^2-q-q'}}\mathcal{A}\left(\begin{array}{c} \widehat{\lambda}-\vec{n}\\ \widehat{\mu}-\vec{e} \end{array}\right)\\
        &\times\Pf
    \begin{bmatrix}
        \partial_{u}^{n_a}\partial_{u'}^{n_b}A_{\chi\chi}(u,u')|_{u=u'=0} & \partial_{u}^{n_a}\partial_v^{e_b}A_{\chi\xi}(u,v)|_{u=v=0}\\
        -\partial_{u}^{n_b}\partial_v^{e_a}A_{\chi\xi}(u,v)|_{u=v=0} & \partial_v^{e_a}\partial_{v'}^{e_b}A_{\xi\xi}(v,v')|_{v=v'=0}
    \end{bmatrix}_{a,b=1,\ldots,k}.
    \end{aligned}
\end{equation}
We set here $q=|\vec{n}|-k(k-1)/2 \geq 0$ and $q'=|\vec{e}|-k(k-1)/2 \geq 0 $, and used the fact that $|\vec\lambda|=|\vec\mu|=|\vec\alpha|$. 
If $\vec{n}$ contains two equal entries, the Pfaffian vanishes, so we consider only $\vec{n}$ with distinct entries, and the same applies to $\vec{e}$.
This ensures $q,q' \geq 0$.
We can extend the sums over $\vec{n}$ and $\vec{e}$ to $\mathbb{N}_0^k$ by noticing that the coefficients $\mathcal{A}\left(\begin{array}{c} \widehat{\lambda}-\vec{n}\\ \widehat{\mu}-\vec{e} \end{array}\right)$ vanish
as the sums run over empty sets, when $n_j>\widehat\lambda_j$ or $e_l>\widehat\mu_l$ for some $j$ or $l$. In this way, the sums become independent of $\widehat{\lambda}$ and $\widehat{\mu}$.
Especially, the sets are invariant under permuting each of the entries of $\vec{n}$ and $\vec{e}$. While the Pfaffian is skew-symmetric under such permutations and the factors $\vec{n}!\vec{e}!$ are symmetric,
we can anti-symmetrise the remaining factor $\mathcal{A}\left(\begin{array}{c} \widehat{\lambda}-\vec{n}\\ \widehat{\mu}-\vec{e} \end{array}\right)$.
This leads to
\begin{equation}
    \label{eq:symplectic-abs-SchurCoeff.e}
    \begin{aligned}
        \mathfrak{D}_{\vec\lambda,\vec\mu} =& \sum_{\vec{n},\vec{e} \in\mathbb{N}_0^k}\frac{(-1)^{|{\vec\alpha}|-q'}}{(k!)^2\vec{n}!\vec{e}!(\xi-\chi)^{2|{\vec\alpha}|+k^2-q-q'}}\widetilde{\mathcal{A}}\left(\begin{array}{c} \widehat\lambda,\vec{n}\\ \widehat\mu,\vec{e} \end{array}\right)\\
        &\times\Pf
    \begin{bmatrix}
        \partial_{u}^{n_a}\partial_{u'}^{n_b}A_{\chi\chi}(u,u')|_{u=u'=0} & \partial_{u}^{n_a}\partial_v^{e_b}A_{\chi\xi}(u,v)|_{u=v=0}\\
        -\partial_{u}^{n_b}\partial_v^{e_a}A_{\chi\xi}(u,v)|_{u=v=0} & \partial_v^{e_a}\partial_{v'}^{e_b}A_{\xi\xi}(v,v')|_{v=v'=0}
    \end{bmatrix}_{a,b=1,\ldots,k},
    \end{aligned}
\end{equation}
with the notation~\eqref{eq:def-mathcal-A-tilde}.

In the last step, we interchange the sums over $\vec\lambda$ and $\vec\mu$ in~\eqref{eq:symplectic-abs-SchurCoeff.a.b} with all the other sums.
Also, as the expression is symmetric under permutations of the components of $\vec{n}$, and symmetric under permutations of the components of $\vec{e}$,
we replace $\vec{n}$ by the shifted partition $\widehat{\nu}$, $\vec{e}$ by the shifted partition $\widehat{\eta}$, and rewrite
\begin{equation}
 \frac{1}{(k!)^2}\sum_{\vec{n},\vec{e}\in\mathbb{N}_0^k}=
    \sum_{0 \leq q,q' \leq |\vec\alpha|} \sum_{\substack{\vec\nu \vdash q \\ l(\vec\nu) \leq k}} \sum_{\substack{\vec\eta \vdash q' \\ l(\vec\eta) \leq k}}
\end{equation}
by noting that the series stop beyond a point since it must be $q=|\vec\nu|\leq|\vec\lambda|=|\vec\alpha|$ and $q'=|\vec\eta|\leq|\vec\mu|=|\vec\alpha|$.
Anew, we used the fact that summands with $n_a=n_b$ or $e_a=e_b$ vanish due to the Pfaffian as well as the ordering~\eqref{ordering}. This completes the proof.


\section{Summary and open questions}\label{conclusio}

In the present work, we have provided explicit combinatorial and differential operator expressions for limits of ratios of a determinant
or Pfaffian determinant and a Vandermonde determinant. The study of such expressions was motivated by expectation values of characteristic
polynomials in random matrix theory, which are known to take such a form at finite and infinite matrix size, valid for general classes of
ensembles with unitary, orthogonal or symplectic symmetry. As examples, these expectation values in turn find applications in the low energy limit of the Dirac operator spectrum in quantum field theory, or the non-trivial zeros of the Riemann-$\zeta$ function. 

Perhaps not surprisingly, the expressions involving Pfaffians were more general, as they give the derivatives of ratios of determinants as special
case. We started in stating the most general expression for mixed derivatives of higher order at various points of a Pfaffian divided by a
Vandermonde determinant. We circumvented the difficulty to differentiate the inverse powers coming from the latter, that eventually have to cancel to
yield a multinomial, by replacing the inverse Vandermonde determinant by an integral representation. 

When specialising to first order derivatives at a single point, we uncovered closed form determinantal expressions. They involve derivatives of the Borel
transform of the original matrix inside the determinant that is differentiated, times some combinatorial factors. In an application to the CUE this
allowed us to recover and explain known results~\cite{CRS,SimmWei,Keating-etal,KW1,KW2} in terms of the modified Bessel function of first kind. For higher order derivatives in a single
point, we provided more involved expressions of representation theoretic origin, including so-called Kostka numbers.

The second example from random matrix theory we gave was the infinite dimensional complex Ginibre ensemble, where relatively simple and explicit
expressions were provided up to mixed moments with two derivatives. Interestingly, most of these results can be analytically continued in the total power of the moments, as it was already observed~\cite{SimmWei,Keating-etal,KW1,KW2} in the case of the CUE.

Several interesting open problems remain to be studied. For instance, our results have opened possibilities to study  the low energy Dirac operator spectrum from a new perspective and to provide analytical expression for susceptibilities and other quantities. Moreover, we know that expectation values of ratios rather than products of characteristic polynomials lead to ratios of determinantal or Pfaffian structures and mixtures of Vandermonde and Cauchy determinants, see~\cite{BS,KG09a,KG09,AP,Bergere}. Limits of derivatives of such combinations are also of interest, as a recent study in topological properties of random matrix fields~\cite{BHWGG,HKGG1,HKGG2,HKGG3} has shown. Similar techniques may apply, though other challenges may occur as well.
Also here, analytical expressions at finite matrix size exist for several classes of ensembles as a starting point, before taking derivatives and limits
thereof. These contain for instance the kernels of Cauchy transforms of orthogonal polynomials, in addition to their ordinary kernels. It remains to be seen if
closed form expressions of combinatorial nature also prevail in this setup.

\section*{Acknowledgments}
This work was partly funded by the
German Research Foundation DFG with  grant SFB 1283/2 2021 -- 317210226 (Akemann, Padellaro),  by the Australian Research Council via the
Discovery Project grant DP250102552 (Kieburg), by the Alexander-von-Humboldt Foundation (Kieburg), and by the Department of Atomic Energy, Government of India, under
project no. RTI4019 (Adrian). Kieburg also thanks the ZiF at Bielefeld University for their hospitality during his stay in October 2024 when this project has started. We thank Jon Keating, Nick Simm, and Fei Wei for illuminating discussions and comments on the manuscript. In particular Fei Wei as well as Alexander Grover, Francesco Mezzadri and Nick Simm are thanked for sharing unpublished work with us.
We also thank Jeanne Scott for pointing out the connection to the Bell polynomials in a preprint version of this work.

\begin{appendix}
\section{Expectation values of characteristic polynomials in unitary, symplectic and orthogonal ensembles}\label{App:results-OUSE}\label{sec:AppA}

In this appendix we provide a brief summary of existing results for expectation values of characteristic polynomials in ensembles of non-Hermitian random matrices with unitary, symplectic and orthogonal symmetry. These are based on~\cite{AV03,AB06,KG09,APS} to where we refer for details. 
Let us emphasise that these identities hold {\it without} specifying details of the ensemble, apart from the form of the jpdf that results from the underlying symmetry. 
In particular, they also hold for chiral and elliptic ensembles, and thus the Hermitian limit can be taken to obtain result for the corresponding Hermitian ensembles, too. While in the unitary symmetry class this is straightforward, in the symplectic class a Taylor expansion of the kernel inside the Pfaffian has to be made, prior to taking derivatives and letting the arguments of the characteristic polynomial become degenerate. Details of this limit in the symplectic class have been given in~\cite{AB06}. In the orthogonal class the Hermitian limit is more subtle~\cite{APS}.

We use the notation introduced in Subsection \ref{subsec:setup} and begin with the unitary symmetry class. 
Under the condition~\eqref{mom} on the weight function $w(z)$ we can construct monic planar orthogonal polynomials via the Heine formula~\cite{Mehta2004},
\begin{equation}
\label{OPN}
P_N(u)=u^N+\ldots=\Big\langle  D_N(u) \Big\rangle_{\rm U} 
\end{equation}
 that satisfy
\begin{equation}
\label{OPdef}
\int_\mathbb{C}d^2z_j\ w(z) P_j(z)\overline{P_k(z)}=h_j\delta_{j,k}\quad \mbox{for}\ j,k=0,1,\ldots
\end{equation}
Here, the $h_j$ denote the squared norms of the polynomials. 
The expression in terms of an expected characteristic polynomial follows from~\cite{AV03}.
We also give the polynomial part of the kernel of orthonormalised polynomials as
\begin{equation}
\label{KNdef}
\mathfrak{K}_{N+1}(u,\bar{v})=\sum_{j=0}^{N}h_j^{-1}P_j(u)\overline{P_j(v)}\ =\ h_N^{-1}\left\langle  D_N(u) \overline{D_N(v)}
\right\rangle_{\rm U}.
\end{equation} 
The second relation expressing the kernel as an expectation value of one characteristic polynomial and one conjugated one again follows from~\cite{AV03}.
The unitary ensembles with jpdf \eqref{PNU} are a determinantal point process and 
all complex eigenvalue correlation functions can be expressed in terms of this kernel.

Furthermore, the polynomials \eqref{OPN} and kernel \eqref{KNdef} form the building blocks to express the most general mixed expectation values of products of characteristic polynomials. 
The following theorem was proven in~\cite[Theorem 1]{AV03}. 
\begin{theorem}[Unitary class]
Let $\{u_i;\, i=1,\ldots,k\}$ and $\{v_i;\, i=1,\ldots,\ell\}$ be two
sets of complex numbers which are pairwise distinct within each set.
Without loss of generality we assume $k\geq \ell$, where the empty set
with $\ell=0$ is allowed.  Then, the following
statement holds
\begin{equation}
\left\langle  \prod_{i=1}^kD_N(u_i) \prod_{j=1}^\ell \overline{D_N(v_j)}
\right\rangle_{\rm U}
\ =\  \frac{\prod_{i=N}^{N+k-1}h_i^{\frac12}\ 
\prod_{j=N}^{N+\ell-1} h_j^{\frac12}}{\Delta_k(u)\ \Delta_\ell(\bar{v})}
\det\left[
\mathfrak{K}_{N+\ell}(u_a, \bar{v}_b),\  P_{N+d-1}(u_a)
\right]_{\substack{a = 1,\dots,k \\ b=1,\dots,l\\\ d=l+1,\dots,k}} ,
\label{ThmOP}
\end{equation}
\end{theorem}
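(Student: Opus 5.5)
This is the Akemann--Vanlessen statement, Theorem 1 of~\cite{AV03}, so I will reprove it by the standard Heine/Andr\'eief route together with a Vandermonde-extension trick.

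First I treat the holomorphic factors. With $\mathcal{P}^{\rm (U)}_N$ from~\eqref{PNU}, the product $\Delta_N(\vec z)\prod_{i=1}^k D_N(u_i)$ equals, up to sign, $\Delta_{N+k}(\vec z,\vec u)/\Delta_k(\vec u)$. Since $\Delta_{N+k}$ is invariant under adding lower-degree multiples, its columns can be replaced by the monic orthogonal polynomials $P_0,\ldots,P_{N+k-1}$. Symmetrically, $\overline{\Delta_N(\vec z)}\prod_{j=1}^\ell\overline{D_N(v_j)}$ becomes $\overline{\Delta_{N+\ell}(\vec z,\vec v)}/\Delta_\ell(\bar v)$, and its columns are replaced by $\overline{P_0},\ldots,\overline{P_{N+\ell-1}}$.

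Next I integrate out $z_1,\ldots,z_N$. The integrand is the product of an $(N+k)\times(N+k)$ determinant and an $(N+\ell)\times(N+\ell)$ determinant, each having $N$ rows in the integration variables and the remaining rows in the fixed $u_a$ or $\bar v_b$. The generalised Andr\'eief identity~\cite[Appendix C]{KG09a} handles such unequal sizes. Equivalently, I can Laplace-expand both determinants along the $u$- and $v$-rows and apply ordinary Andr\'eief to the $N\times N$ minors. Orthogonality~\eqref{OPdef} collapses each Gram matrix to a diagonal. What remains is a sum over $(N+\ell)$-subsets $J$ of the $\ell$-rows' columns and $(N+k)$-subsets of the $u$-rows' columns, paired through the common index set of the $z$-rows.

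This sum is the Laplace expansion of one bordered determinant of size $k$. Its $(a,b)$ entries for $b\le\ell$ are $\sum_{j=0}^{N+\ell-1}h_j^{-1}P_j(u_a)\overline{P_j(v_b)}$, which is $\mathfrak{K}_{N+\ell}(u_a,\bar v_b)$ after the normalisations are absorbed. The remaining $k-\ell$ columns are $P_{N+\ell},\ldots,P_{N+k-1}$ evaluated at $u_a$. Concretely, I will write this as a single $(N+k)\times(N+k)$ block determinant. Its top block has entries $\int w\,P_i\overline{P_j}$, which equals $\mathrm{diag}(h_j)$ bordered by zeros, and its lower block has the $u$-rows. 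I then reduce it by Schur complement against the invertible diagonal block $\mathrm{diag}(h_0,\ldots,h_{N+\ell-1})$, which is exactly where the kernel appears.

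Finally I fix the constants. I divide by $Z_N=N!\prod_{j<N}h_j$ and collect the leftover $h_j$ with $N\le j<N+k$ or $N\le j<N+\ell$. Comparing with the normalisation in~\eqref{KNdef} gives the prefactor $\prod h_i^{1/2}\prod h_j^{1/2}$ stated in~\eqref{ThmOP}. The main obstacle I expect is this bookkeeping step together with the Schur-complement reduction: the overall sign, and the precise scaling convention relating the kernel to the $h$'s, need to be matched carefully. I would verify them first on $k=1,\ell=0$, where the claim reduces to the Heine formula~\eqref{OPN}, and on $k=\ell=1$, where it reduces to~\eqref{KNdef}.
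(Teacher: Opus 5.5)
The paper gives no proof of this statement; it quotes it from \cite{AV03}. Your route is sound and standard: extend the Vandermonde determinants, replace monomials by the monic $P_j$, Laplace-expand along the $u$- and $v$-rows, let Andr\'eief plus orthogonality force the two $N$-element complementary column sets to coincide, and recombine by Cauchy--Binet or a Schur complement. The signs cause no trouble. With the paper's convention, $\Delta_N(\vec z)\prod_a D_N(u_a)=\Delta_{N+k}(\vec z,\vec u)/\Delta_k(\vec u)$ holds exactly, and the two Laplace signs multiply to $+1$. There are two small slips. First, the subsets in the Laplace picture are $k$- and $\ell$-subsets of column indices with a common $N$-element complement. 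Second, the bordered determinant from the generalised Andr\'eief identity has size $N+k+\ell$: it needs $\ell$ extra rows $\overline{P_j(v_b)}$ and an $\ell\times k$ zero block. It is $(N+k)\times(N+k)$ only when $\ell=0$.

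The real problem is your last step. The $z$-integration only produces $h_t$ for $t$ in the common set $I\subset\{0,\ldots,N+\ell-1\}$. Write $\prod_{t\in I}h_t=\prod_{t=0}^{N+\ell-1}h_t\prod_{t\in T}h_t^{-1}$, where $T$ is the $\ell$-set of indices attached to the $v$-rows, and absorb $\prod_{t\in T}h_t^{-1}$ into the kernel. Your argument then proves
\[
\left\langle \prod_{i=1}^kD_N(u_i)\prod_{j=1}^\ell\overline{D_N(v_j)}\right\rangle_{\rm U}
=\frac{\prod_{j=N}^{N+\ell-1}h_j}{\Delta_k(u)\,\Delta_\ell(\bar v)}\,
\det\left[\mathfrak{K}_{N+\ell}(u_a,\bar v_b),\ P_{N+d-1}(u_a)\right]_{\substack{a=1,\ldots,k\\ b=1,\ldots,\ell\\ d=\ell+1,\ldots,k}}
\]
with the monic $P_j$ of \eqref{OPN}. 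Nothing in the computation generates factors $h_j^{1/2}$ for $N+\ell\le j\le N+k-1$, because the columns $P_{N+\ell},\ldots,P_{N+k-1}$ are never integrated. So your claim that the bookkeeping lands on the prefactor of \eqref{ThmOP} fails whenever $k>\ell$. The test you propose would show this: for $k=1$, $\ell=0$ the Heine formula gives $P_N(u)$, while \eqref{ThmOP} read with monic $P_N$ gives $h_N^{1/2}P_N(u)$. The displayed constant is correct only in two cases. One is when the last $k-\ell$ columns are read as orthonormal polynomials $h_{N+d-1}^{-1/2}P_{N+d-1}$. The other is $k=\ell$, where both versions reduce to $C^{\rm (U)}_{N,k}=\prod_{j=N}^{N+k-1}h_j$ of \eqref{productUk}. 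You should state the result you actually obtain and make this normalisation change explicit, rather than assert agreement with the displayed prefactor.
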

For $k=\ell$ this leads to \eqref{productUk} with $C_{N,k}^{\rm (U)}=\prod_{i=N}^{N+k-1}h_i$.
Examples where the polynomials and their norms are explicitly known include planar Hermite, Laguerre and Gegenbauer polynomials, see ~\cite{PeterSungsoo,AEP} and references therein. Note that for the latter ensemble to date no matrix representation is known.

Let us turn to
random matrix ensembles with symplectic (S) and orthogonal (O) symmetry. 
We begin with the symplectic class, as the jpdf \eqref{PNS} is much simpler, containing only complex conjugate eigenvalues pairs. 
In~\cite{AEP}
a Gram--Schmidt construction of monic planar skew-orthogonal polynomials $q_j^{\rm (S)} (z)= z^j + \ldots$ was made explicit, see~\cite{Kanzieper01} for earlier work.
They satisfy the following skew-orthogonality conditions 
\begin{eqnarray}
	\langle q_{2 k}^{\rm (S)} ,q_{2 l}^{\rm (S)} \rangle_{\rm S} = \langle q_{2 k + 1}^{\rm (S)} ,q_{2 l + 1}^{\rm (S)} \rangle_{\rm S}\  =\ 0,\qquad\langle q_{2 k}^{\rm (S)} ,q_{2 l + 1}^{\rm (S)} \rangle_{\rm S} = -\langle q_{2 l + 1}^{\rm (S)} ,q_{2 k}^{\rm (S)} \rangle_{\rm S} \ \, = \ r_k^{\rm (S)} \delta_{k, l}, \label{eq:SOPdef2}
\end{eqnarray}
with $r_k^{\rm (S)}  > 0$ being their skew-norms. Here, the anti-symmetric skew-product with respect to the real positive weight $w(z)$ under the condition \eqref{mom} is defined as
\begin{equation}\label{eq:skewproddef}
	\langle f,g\rangle_{\rm S}:=
		\int_{\mathbb{C}}d^2z\ w(z)(z-\bar{z})
		(f(z) \overline{g(z)} - g(z) \overline{f(z)}).
\end{equation}
For such weights it is sufficient to consider $f,g$ to be polynomials with real coefficients. 
The monic skew-orthogonal polynomials can be made unique by fixing the second order coefficient of the odd polynomials to zero, and we refer to~\cite{AEP} for details.
Similarly to the unitary class, they can be represented in term of expected characteristic polynomials via Heine-like formulas, 
\begin{equation}
\label{SOP2N}
q_{2N}^{\rm (S)} (u)=\Big\langle  D_{N}(u) \Big\rangle_{\rm S}\ , 
\end{equation}
compare~\eqref{OPN},  and a similar relation holds for the odd polynomials including an extra trace, see~\cite{Kanzieper01,AB06,AKP}. Likewise, the corresponding anti-symmetric skew-kernel given in terms of the skew-orthogonal polynomials reads
\begin{equation}\label{prek-def}
	\kappa_N^{\rm (S)} (u, v)
=\sum_{k = 0}^{N - 1} (r_k^{\rm (S)})^{-1}(
			q_{2 k + 1}^{\rm (S)} (u) {q_{2 k}^{\rm (S)} ({v})} - q_{2 k}(u) {q_{2 k + 1}^{\rm (S)} ({v})})\ =\ 
						(r_N^{\rm (S)})^{-1}(u-v)\Big\langle  D_N(u) {D_N(v)}
\Big\rangle_{\rm S}.
\end{equation}
It also enjoys a representation as an expectation value of two characteristic polynomials~\cite{AB06}. 
The symplectic ensembles form a Pfaffian point process and all complex eigenvalue correlation functions can be expressed in terms of this skew-kernel. 
The known examples of Hermite, Laguerre and Gegenbauer polynomials from the unitary ensembles then translate into the corresponding examples for skew-orthogonal polynomials, see \cite{PeterSungsoo} for more details.

As in the unitary ensembles also all expectation values of characteristic polynomials in the symplectic class can be expressed in term of the corresponding skew-kernel and skew-orthogonal polynomials. Perhaps surprisingly, the very same expression holds for the orthogonal class. 
This is despite the fact that there, the jpdf is much more complicated, because a real non-symmetric random matrix can have real and complex conjugated eigenvalues pairs. 
For the corresponding jpdf distinguishing real and complex eigenvalues pairs, or the factorised form, we refer to~\cite{Peter-Taro} and~\cite{HJS07} respectively.

For the computation of products of expectation values of characteristic polynomials in orthogonal ensembles (O) we do not need to construct the skew-orthogonal polynomials with respect to a skew-product on the real line, but only those in the complex plane. It turns out, that these polynomials and their corresponding kernel enjoy the very same relations as \eqref{SOP2N} and \eqref{prek-def} (right equation), replacing (S) by (O), cf.~\cite{AKP,KG09,APS}.

The following theorem was derived in~\cite{AB06} for the symplectic class (S)\footnote{Note the missing prefactor in the summary of results section 2 in~\cite{AB06}. For the correct prefactors see the statement of Theorem 1 in section 3.} and in~\cite{AKP} (see section 5.3 treating the chiral class). For a discussion of $N$ even or odd in the orthogonal symmetry class (O) we refer to~\cite{KG09}.
\begin{theorem}[Orthogonal and Symplectic class]
Let $\{u_i;\, i=1,\ldots,k\}$ be pairwise distinct complex numbers.
Then, it holds for the orthogonal respectively symplectic class ${\rm C=O,S}$:
\begin{equation}
    \left\langle \prod_{j=1}^kD_N(u_j)\right\rangle_{\rm C} = \frac{(-1)^{[k/2]}\prod_{i=N}^{N+[k/2]-1}r_i^{\rm (C)} }{\Delta_k(u)}
\left\{    
\begin{array}{ll}
{\rm Pf}\left[\kappa_{N+[k/2]}^{\rm (C)}(u_i,u_j) \right]_{i,j = 1,\dots, k}  &{\rm  even}\ k\\
{\rm Pf}
\begin{bmatrix}
\kappa_{N+[k/2]}^{\rm (C)}(u_i,u_j) & q_{2N+k-1}^{\rm (C)} (u_i)\\
-q_{2N+k-1}^{\rm (C)} (u_j)& 0\\
\end{bmatrix}
_{i,j = 1,\dots, k}  &{\rm odd}\ k\\
\end{array}
\right. .
\end{equation}
\end{theorem}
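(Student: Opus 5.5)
The statement is the orthogonal/symplectic theorem from the appendix: for pairwise distinct $u_1,\ldots,u_k$, the expectation $\langle\prod_{j=1}^k D_N(u_j)\rangle_{\rm C}$ equals a Pfaffian of the skew-kernel divided by $\Delta_k(u)$. For odd $k$ the Pfaffian is bordered by $q_{2N+k-1}^{\rm (C)}$.

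I will do the symplectic class first. In the jpdf \eqref{PNS} the characteristic polynomials absorb into a larger Vandermonde:
\[
\prod_{j=1}^k D_N(u_j)\,\Delta_{2N}(\vec z,\vec{\bar z})=\frac{\Delta_{2N+k}(\vec z,\vec{\bar z},\vec u)}{\Delta_k(\vec u)},
\]
up to a sign fixed by ordering. The other Vandermonde factor $\overline{\Delta_{2N}}$ stays untouched.

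Take $k$ even. Write each Vandermonde as a determinant in the monic skew-orthogonal polynomials $q_0^{\rm (S)},\ldots,q^{\rm (S)}_{2N+k-1}$; this is allowed by column operations, since the polynomials are monic. Then apply the generalised de Bruijn identity to integrate out the $N$ pairs $(z_j,\bar z_j)$. The result is a Pfaffian of size $2N+2k$. It contains a $(2N+k)\times(2N+k)$ block of skew-products $\langle q_a,q_b\rangle_{\rm S}$, which is block diagonal in $2\times2$ blocks $\bigl(\begin{smallmatrix}0&r_j\\-r_j&0\end{smallmatrix}\bigr)$ by \eqref{eq:SOPdef2}. This block is bordered by the $k$ rows $[q_b(u_a)]$ and completed by a zero $k\times k$ block. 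Normalising with $Z_N$ removes the factors $r_0,\ldots,r_{N-1}$.

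Next, a Schur-complement formula for Pfaffians, $\Pf\bigl[\begin{smallmatrix}M&B\\-B^T&0\end{smallmatrix}\bigr]=\Pf M\,\Pf[B^T M^{-1}B]$ up to sign, reduces this to $\prod_{i=N}^{N+k/2-1} r_i$ times $\Pf\bigl[\sum_{j<N+k/2} r_j^{-1}(q_{2j+1}(u_a)q_{2j}(u_b)-q_{2j}(u_a)q_{2j+1}(u_b))\bigr]$. By \eqref{prek-def}, this last Pfaffian is $\Pf[\kappa^{\rm (S)}_{N+k/2}(u_a,u_b)]$.

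For odd $k$, the row count $2N+k$ is odd. I will pad the matrix by one extra row and column. After the Schur complement, the unpaired polynomial $q_{2N+k-1}$ survives as the bordering column. Equivalently, I can adjoin an auxiliary variable $u_{k+1}$, divide by its leading power $u_{k+1}^{2N+k}$, and send it to infinity in the even-$k$ formula.

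The main obstacle is the bookkeeping of signs and prefactors. The sign $(-1)^{[k/2]}$ and the exact product of skew-norms come from the ordering of the variables in the extended Vandermonde, the orientation $(z-\bar z)$ in \eqref{eq:skewproddef}, and the Schur-complement sign. I will check these first on $k=2$ against the second relation in \eqref{prek-def}, and then argue that the general $k$ follows by the same bookkeeping.

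For the orthogonal class, the jpdf is a sum over sectors with different numbers of real eigenvalues. Here I will use the factorised form of \cite{HJS07}, or the partially symmetric de Bruijn integral of \cite{KG09}. Either one yields the same Pfaffian structure with the corresponding skew-product. With that in hand, the identical Schur-complement step applies verbatim.
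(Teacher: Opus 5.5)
\textbf{Gap in the starting structure.} As written, your starting point fails. You read \eqref{PNS} literally, absorb $\prod_j D_N(u_j)$ into one factor $\Delta_{2N}$, and keep a second Vandermonde $\overline{\Delta_{2N}}$ ``untouched''. The integrand is then a product of two determinants, with every $z_j$ occupying two columns in each. The de Bruijn identity does not apply to such a product. Nothing in it would produce the factor $(z-\bar z)$ of the skew-product \eqref{eq:skewproddef} either.

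In fact the last equality in \eqref{PNS} is a misprint for $N\geq 2$, since $|\Delta_{2N}(\vec z,\vec{\bar z})|^2$ contains $|z_j-z_k|^4|z_j-\bar z_k|^4$. The correct identity is
\[
\prod_{1\leq k<j\leq N}|z_j-z_k|^2|z_j-\bar z_k|^2\prod_{j=1}^N|z_j-\bar z_j|^2=\Delta_{2N}(z_1,\bar z_1,\ldots,z_N,\bar z_N)\prod_{j=1}^N(z_j-\bar z_j).
\]
So the symplectic jpdf is a \emph{single} Vandermonde determinant times the one-body factor $(z_j-\bar z_j)w(z_j)$, and that factor is exactly the measure in \eqref{eq:skewproddef}.

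Starting from this, the rest of your argument goes through. With $w=(z_1,\bar z_1,\ldots,z_N,\bar z_N)$ one has $\prod_j D_N(u_j)\,\Delta_{2N}(w)=\Delta_{2N+k}(w,\vec u)/\Delta_k(\vec u)$ with no sign. The bordered de Bruijn identity then gives your $(2N+2k)$-Pfaffian. The Schur complement with the block-diagonal skew-moment matrix, together with $Z_N=N!\prod_{i<N}r_i^{\rm (S)}$, yields $\prod_{i=N}^{N+[k/2]-1}r_i^{\rm (S)}$ and $\kappa^{\rm (S)}_{N+[k/2]}$. The paper gives no proof of this theorem and only cites the literature; your repaired route is the standard one there.

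\textbf{Sign, odd $k$, and the $k=2$ check.} The sign need not be left to a check. The bordered de Bruijn identity, with the $k$ fixed columns placed last, carries the factor $N!\,(-1)^{k(k-1)/2}$; at $N=0$ this is just $\Pf\bigl[\begin{smallmatrix}0&B\\-B^T&0\end{smallmatrix}\bigr]=(-1)^{k(k-1)/2}\det B$. The $2\times2$ blocks contribute $+r_j$, and $(-1)^{k(k-1)/2}=(-1)^{[k/2]}$ for every $k$.

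For odd $k$ no padding is needed, because the Pfaffian already has even size $2N+2k$. What is singular is the skew-moment block of size $2N+k$: its row for $q_{2N+k-1}$ vanishes, since the skew partner $q_{2N+k}$ is absent. Split that index off before the Schur complement. Moving it behind the $u$-indices gives exactly the bordered Pfaffian of the statement, with no extra sign.

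Your planned $k=2$ test against \eqref{prek-def} will appear to fail, because that formula also has an index misprint. By degree counting ($D_N$ has degree $2N$ here) it must read $\kappa^{\rm (S)}_{N+1}(u,v)=(r^{\rm (S)}_N)^{-1}(u-v)\langle D_N(u)D_N(v)\rangle_{\rm S}$, which is precisely the $k=2$ case of the theorem.

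\textbf{Orthogonal class.} Here ``verbatim'' hides the analogue of the first issue. The jpdf contains $|\Delta|$, and the single-determinant structure appears only after the ordering signs are absorbed into the sign-function part of the skew-product. The statement's indexing also presupposes even matrix dimension; the odd case needs an extra border.
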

From this the constant in \eqref{productSk1} follows, $C_{N,k}^{\rm (S)}=(-1)^{[k/2]}\prod_{i=N}^{N+[k/2]-1}r_i^{\rm (S)}$.
When choosing $k=2m$ to be even and $u_{j+m}=\bar{u}_j$ for $j=1,\ldots,m$, we obtain as a corollary an expression for the product of the modulus square of characteristic polynomials, valid for any $m$ even or odd. 
\begin{corollary} 
For $m\in\mathbb{N}$ and $u_1,\ldots,u_m\in \mathbb{C}$ pairwise distinct it holds for the orthogonal respectively symplectic class ${\rm C=O,S}$:
\begin{equation}
\label{productSk2Apx}
\left\langle \prod_{j=1}^{m}|D_N(u_j)|^2\right\rangle_{\rm C}=
\frac{(-1)^{m}\prod_{i=N}^{N-1+m}r_i^{\rm (C)} 
}{\Delta_{2m}(u,\bar{u})}
\Pf
\begin{bmatrix}
\kappa_{N+m}^{\rm (C)} (u_i,{u}_j)&\kappa_{N+m}^{\rm (C)} (u_i,\bar{u}_j)\\
\kappa_{N+m}^{\rm (C)} (\bar{u}_i,{u}_j) & \kappa_{N+m}^{\rm (C)} (\bar{u}_i,\bar{u}_j)\\
\end{bmatrix}
_{i,j=1}^{m}\ .
\end{equation}
\end{corollary}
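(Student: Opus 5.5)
The corollary should follow by specialising the preceding theorem for the orthogonal and symplectic classes to $k=2m$ points $(u_1,\ldots,u_m,\bar u_1,\ldots,\bar u_m)$, with two auxiliary observations: the characteristic polynomial has real coefficients, and a continuity argument handles coinciding points.

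\emph{Left-hand side.} In both classes ${\rm C=O,S}$ the eigenvalues are real or come in complex conjugate pairs. Hence $D_N(u)=\prod_l(u-z_l)$ is a polynomial in $u$ with real coefficients, so $\overline{D_N(u)}=D_N(\bar u)$. In the symplectic case this is explicit from \eqref{DN-SE}. Therefore
\begin{equation}
\prod_{j=1}^m|D_N(u_j)|^2=\prod_{j=1}^{2m}D_N(w_j),\qquad (w_1,\ldots,w_{2m}):=(u_1,\ldots,u_m,\bar u_1,\ldots,\bar u_m),
\end{equation}
pointwise on the eigenvalues. Taking expectations, the left-hand side of \eqref{productSk2Apx} equals $\langle\prod_{j=1}^{2m}D_N(w_j)\rangle_{\rm C}$.

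\emph{Applying the theorem.} Here $k=2m$ is even, so $[k/2]=m$ and the Pfaffian branch without the extra row and column applies. The prefactor becomes $(-1)^m\prod_{i=N}^{N+m-1}r_i^{\rm (C)}$, the kernel index becomes $N+m$, and the denominator is $\Delta_{2m}(w)=\Delta_{2m}(u,\bar u)$. Ordering the indices of the $2m\times 2m$ antisymmetric matrix $[\kappa^{\rm (C)}_{N+m}(w_i,w_j)]$ as first the $u$'s and then the $\bar u$'s gives exactly the $2\times2$ block form in \eqref{productSk2Apx}. No reordering sign arises, because the same ordering $w$ is used in the Vandermonde determinant and in the Pfaffian. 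This proves the identity whenever the $2m$ points $w_j$ are pairwise distinct, i.e.\ when no $u_j$ is real and $u_j\neq\bar u_l$ for all $j,l$.

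\emph{Coinciding points.} The only real obstacle is that the hypothesis only asks the $u_j$ to be pairwise distinct, so $w$ may have coincidences. I would handle this by continuity in two steps.
\begin{enumerate}
\item Treat $v_j$ as independent variables in place of $\bar u_j$. The function
\begin{equation}
\frac{\Pf[\kappa^{\rm (C)}_{N+m}(w_i,w_j)]}{\Delta_{2m}(w)},\qquad w=(u,v),
\end{equation}
equals a polynomial, namely $\langle\prod_j D_N(w_j)\rangle_{\rm C}$ up to the constant, on the dense open set of distinct points. It therefore extends to a polynomial in $(u,v)$.
\item Setting $v=\bar u$ then yields the corollary on all of $\mathbb{C}^m$, with the right-hand side understood as this continuous extension, i.e.\ as a limit.
\end{enumerate}
This mirrors how \eqref{productSk2} is stated.
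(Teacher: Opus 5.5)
Your proposal is correct and follows the paper's argument. The paper likewise obtains the corollary by specialising the preceding theorem to even $k=2m$ with $u_{j+m}=\bar u_j$, using $\overline{D_N(u)}=D_N(\bar u)$ in both classes. Your continuity argument for the degenerate cases (some $u_j$ real, or $u_j=\bar u_l$ with $j\neq l$, so that $\Delta_{2m}(u,\bar u)=0$) adds a detail the paper leaves implicit: there the right-hand side must be read as the polynomial extension.
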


There are generalisations of these results for mixtures of orthogonal and skew-orthogonal polynomials as discussed in~\cite{Mixing}. Such generalisations play an important role in random matrices modelling lattice Quantum Chromodynamics~\cite{AN,KVZ}.


\section{Bell Polynomial Derivative Operator}
\label{apx: bell poly}
In this appendix we prove that the derivative operator in \eqref{der.rel} is given by complete Bell polynomials \cite{Bell34} in derivatives
\begin{equation}
    D_{\vec u, k} = B_{k}(\partial_{u_1}, \dots,\partial_{u_k}).
\end{equation}
For this, define the generating function
\begin{equation}
    \begin{aligned}
        G(t) &\coloneqq \sum_{k=0}^\infty \frac{t^k}{k!}\partial^k_x \prod_{j=1}^N \frac{z_j-x}{\zeta_j -x} \\ &=\lim_{\vec u \to 0}\sum_{k=0}^\infty \frac{t^k}{k!}
    D_{\vec u,k} \exp\left[\sum_{j=1}^N \left(F_d(\vec u, x, \zeta_j) - F_d(\vec u, x, z_j)\right)\right] \prod_{j=1}^N \frac{z_j-x}{\zeta_j -x}.
    \end{aligned}
\end{equation}
The sum of derivatives on the l.h.s. is a translation operator and we have
\begin{equation}
    G(t) = \prod_{j=1}^N \frac{z_j-x - t }{\zeta_j -x - t}.
\end{equation}
We now re-write this in a form closer to the known generating function for the complete Bell polynomials.
For this, take the logarithm of the r.h.s.
\begin{equation}
    \log G(t)=
    \log \prod_{j=1}^N \frac{z_j - x}{\zeta_j -x} \prod_{j=1}^N \frac{1 - \frac{t}{z_j-x} }{1 - \frac{t}{\zeta_j-x}} = \sum_{j=1}^N \left[\log \frac{z_j-x}{\zeta_j-x} + \log \left(1 - \frac{t}{z_j-x}\right) - \log\left(1-\frac{t}{\zeta_j-x}\right)\right] \, ,
\end{equation}
and expand the last two terms to get
\begin{equation}
    \log G(t) = \sum_{j=1}^N \left(\log \frac{z_j-x}{\zeta_j-x} + \sum_{i=1}^\infty \frac{t^i}{i}\left[\frac{1}{(\zeta_j-x)^i}-\frac{1}{(z_j-x)^i}\right]\right)\, .
\end{equation}
We recover $G(t)$ by exponentiating
\begin{equation}
    G(t) = \prod_{j=1}^N \frac{z_j-x}{\zeta_j-x}\exp\left(\sum_{i=1}^\infty \frac{t^i}{i}\sum_{j=1}^N \left[\frac{1}{(\zeta_j-x)^i}-\frac{1}{(z_j-x)^i} \right]\right) \, .
\end{equation}

Secondly, we evaluate 
\begin{equation}
    \lim_{\vec u \to 0}\sum_{k=0}^\infty \frac{t^k}{k!}B_{k}(\partial_{u_1},\dots,\partial_{u_k}) 
\exp \left[\sum_{j=1}^N \left(F_d(\vec u, x, \zeta_j) - F_d(\vec u, x, z_j)\right)\right]
\end{equation}
by replacing the generating function for complete Bell polynomials with its known closed-form expression \cite[eq. (4.3)]{Bell34}, which follows from a special case of Fa\`a di Bruno's formula \cite{Faa}
\begin{equation}
    \lim_{\vec u \to 0}\exp\left(\sum_{i=1}^\infty \frac{t^i}{i!}\partial_{u_i}\right) \exp \left[\sum_{j=1}^N \left( F_d(\vec u, x, \zeta_j) - F_d(\vec u, x, z_j) \right)\right] \, .
\end{equation}
Since
\begin{equation}
    \partial_{u_i} \sum_{j=1}^N \left(F_d(\vec u, x, \zeta_j) - F_d(\vec u, x, z_j)\right) = \sum_{j=1}^N \left[\frac{(i-1)!}{(\zeta_j-x)^i} - \frac{(i-1)!}{(z_j-x)^i}\right]
\end{equation}
we have
\begin{equation}
    \lim_{\vec u \to 0}\exp\left(\sum_{i=1}^\infty \frac{t^i}{i!}\partial_{u_i} \right) \exp\left[\sum_{j=1}^N \left( F_d(\vec u, x, \zeta_j) - F_d(\vec u, x, z_j)\right)\right] = 
    \exp\left(\sum_{i=1}^\infty \frac{t^i}{i}\sum_{j=1}^N \left[\frac{1}{(\zeta_j-x)^i} - \frac{1}{(z_j-x)^i}\right]\right) .
\end{equation}

To summarise we have the following equality of generating functions
\begin{equation}
\begin{aligned}
    G(t) &= \sum_{k=0}^\infty \frac{t^k}{k!}\partial^k_x \prod_{j=1}^N \frac{z_j-x}{\zeta_j -x} \\
    &= \lim_{\vec u \to 0}\sum_{k=0}^\infty \frac{t^k}{k!}B_{k}(\partial_{u_1},\dots,\partial_{u_k})
     \exp\left[\sum_{j=1}^N \left(F_d(\vec u, x, \zeta_j) - F_d(\vec u, x, z_j)\right)\right]\prod_{j=1}^N \frac{z_j-x}{\zeta_j -x}\, .
\end{aligned}
\end{equation}
Comparing the coefficients of $t^{k}$ and the definition of $D_{\vec u, k}$ in \eqref{der.rel} gives
\begin{equation}
    D_{\vec u, k} = B_{k}(\partial_{u_1}, \dots, \partial_{u_k}).
\end{equation}

\section{The complex Ginibre ensemble for $N \to \infty$} \label{apx:GinUE}

In this appendix, we want to prove the limit~\eqref{eq:thmGinUE} of moments of derivatives of characteristic polynomials of the  complex Ginibre ensembe (Gin) which we denote by
\begin{equation}
    \label{eq:GinUEDefG}
   \widehat{\mathcal{Z}}_{\vec\alpha}^{\rm (Gin)}(\chi):=\lim_{N\to\infty}\frac{\mathcal{Z}_{d,2}^{\rm (Gin)}\left(\chi,\overline{\chi};\begin{array}{c}\vec{m}\\\vec m\end{array}\right)}{\prod_{j=N}^{N+k-1} \pi j!}= \lim_{N\to\infty}\frac{\left\langle \prod_{j=1}^k |\partial_{\chi}^{\alpha_j}D_N(\chi)|^2 \right\rangle_{\rm Gin}}{\prod_{j=N}^{N+k-1} \pi j!}\ ,
\end{equation}
with $d=\max_{j}\{\alpha_j\}$ and $m_{l}=\sharp\{\alpha_j=l|j=1,\ldots,k\}$ the number of $l$th derivatives.

We start from the mixed derivatives of the kernel $\mathfrak{K}_{\infty}^{\rm (Gin)}(u, \bar{v})$, see~\eqref{eq:KGinUELim}, which have the form, equivalent to~\eqref{eq:KGinUELimD},
\begin{equation}
    \label{eq:KGinUELimD.b}
    \partial_u^a\partial_{v}^b\mathfrak{K}_{\infty}^{\rm (Gin)}(u, v) = \frac{a! b! }{\pi}\ e^{u v}\sum_{r=0}^{b} \frac{u^{r} v^{r+a-b}}{(a-b+r)! (b-r)! r!}= \frac{a! b! }{\pi}\ e^{u v}\sum_{r=0}^{b} \frac{u^{b-r} v^{a-r}}{(a-r)! (b-r)! r!},
\end{equation}
which can be derived by the general Leibniz product rule for derivatives. In the second equality we reflected $r\to b-r$.
The final summation index can be sent to $\infty$ for convenience, though only terms with $0\leq r \leq \min\{a,b\}$ are in fact non-zero due to the factorials in the denominator.

The computation of~\eqref{eq:GinUEDefG} is similar to the one in Example~\ref{exmaple.1}. We substitute the derivatives into~\eqref{del-gen-det} to get
\begin{equation}
\begin{split}
        \widehat{\mathcal{Z}}_{\vec\alpha}^{\rm (Gin)}(\chi)=&\vec{\alpha}!^2 \sum_{\substack{\vec{\lambda},\vec{\mu} \vdash |\vec\alpha| \\ l(\vec{\lambda}),l(\vec{\mu}) \leq k}} 
        \frac{K_{\vec{\mu},\vec{\alpha}} K_{\vec{\lambda},\vec{\alpha}} }{\widehat{\mu}!\ \widehat{\lambda}!} \det\left[
            \frac{\widehat{\lambda}_a! \widehat{\mu}_b! }{\pi}\ e^{|\chi|^2}\sum_{r=0}^{\infty} \frac{\chi^{\widehat{\mu}_b-r} \bar{\chi}^{\widehat{\lambda}_a-r}}{(\widehat{\lambda}_a-r)! (\widehat{\mu}_b-r)! r!}
        \right]_{a,b=1,\ldots,k}\ .\\
\end{split}
\end{equation}
The common factors $e^{|\chi|^2}/\pi$ will be pulled out. Furthermore, we apply the reverse of the Andr\'eief identity~\cite{Andreief} for sums and arrive at
\begin{equation}
\begin{split}
    \label{eq:GinUEDetToSimplfy}
    \widehat{\mathcal{Z}}_{\vec\alpha}^{\rm (Gin)}(\chi)=&\frac{\vec{\alpha}!^2}{k!\pi^k}e^{k|\chi|^2} \sum_{\substack{\vec{\lambda},\vec{\mu} \vdash |\vec\alpha| \\ l(\vec{\lambda}),l(\vec{\mu}) \leq k}} 
        \frac{K_{\vec{\mu},\vec{\alpha}} K_{\vec{\lambda},\vec{\alpha}} }{\widehat{\mu}!\ \widehat{\lambda}!}\sum_{r_1,\ldots,r_k=0}^\infty\frac{1}{\prod_{j=1}^kr_j!} \\
        &\times\det\left[\frac{\widehat{\mu}_b!\chi^{\widehat{\mu}_b-r_a} }{(\widehat{\mu}_b-r_a)!}\right]_{a,b=1,\ldots,k}\det\left[\frac{\widehat{\lambda}_b!\bar\chi^{\widehat{\lambda}_b-r_a} }{(\widehat{\lambda}_b-r_a)!}\right]_{a,b=1,\ldots,k}.
\end{split}
\end{equation}
When pulling the factors of $\chi$ and $\bar\chi$ out of the determinant we can readily identify the terms $ \Delta_k(\widehat\mu)t_{\vec r}(\widehat\mu)$ and $\Delta_k(\widehat\lambda)t_{\vec r}(\widehat\lambda)$ with the corresponding Vandermonde determinants and $\vec r=(r_1,\ldots,r_k)$, see~\eqref{eq:DefineFactorialSchur}. Moreover, we can use $\sum_{a=1}^k(\widehat{\mu}_k-r_a)=|\vec\alpha|+\sum_{j=1}^k(j-1-r_j)$ and similar for $\widehat{\lambda}$ where we recall that $|\vec\mu|=|\vec\lambda|=|\vec\alpha|$, and the relation between $\vec\mu$ and $\widehat{\mu}$ as well as $\vec\lambda$ and $\widehat{\lambda}$ is given by~\eqref{lambda.hat.rel}. Thence, it is
\begin{equation}
\begin{split}
    \widehat{\mathcal{Z}}_{\vec\alpha}^{\rm (Gin)}(\chi)=&\frac{\vec{\alpha}!^2}{k!\pi^k}e^{k|\chi|^2} \sum_{\substack{\vec{\lambda},\vec{\mu} \vdash |\vec\alpha| \\ l(\vec{\lambda}),l(\vec{\mu}) \leq k}} 
        \frac{K_{\vec{\mu},\vec{\alpha}} K_{\vec{\lambda},\vec{\alpha}} }{\widehat{\mu}!\ \widehat{\lambda}!}\sum_{r_1,\ldots,r_k=0}^\infty\frac{|\chi|^{2|\vec\alpha|+2\sum_{j=1}^k(j-1-r_j)}}{\prod_{j=1}^kr_j!}\Delta_k(\widehat\mu)t_{\vec r}(\widehat\mu)\Delta_k(\widehat\lambda)t_{\vec r}(\widehat\lambda).
\end{split}
\end{equation}
The sum over $\vec\mu$ and $\vec\lambda$ can then be pulled into the sum over $\vec r$, yielding the same contribution as the two sums factorise. 

Furthermore, the summands are symmetric under permuting $r_1,\ldots, r_k$ and vanish whenever two summing indices are equal. Therefore, we may order $r_1<r_2<\ldots<r_k$ giving a factor $k!$, which cancels with the prefactor. The gives the partition $\vec{\nu}$ with the identification $\widehat\nu_j=r_j$, see~\eqref{lambda.hat.rel} for the relation between $\vec\nu$ and $\widehat\nu$. When denoting the weight of $\vec\nu$ by $m=|\vec\nu|$ we arrive at
\begin{equation}
\begin{split}
    \widehat{\mathcal{Z}}_{\vec\alpha}^{\rm (Gin)}(\chi)=&\frac{\vec{\alpha}!^2}{\pi^k}e^{k|\chi|^2} \sum_{m=0}^\infty\sum_{\vec\nu\vdash m,\ l(\vec\nu)\leq k}\frac{|\chi|^{2|\vec\alpha|-2m}}{\widehat\nu!} \left(\sum_{\vec{\lambda} \vdash |\vec\alpha| ,\ l(\vec{\lambda}) \leq k}
        \frac{K_{\vec{\lambda},\vec{\alpha}}  }{\widehat{\lambda}!}\Delta_k(\widehat\lambda)t_{\vec \nu}(\widehat\lambda)\right)^2.
\end{split}
\end{equation}
Actually, the sum in $m$ must truncate at $|\vec\alpha|$, because then $t_{\vec \nu}(\widehat\lambda)$ has at least one vanishing row. This is clear as the sum must  be a polynomial in $\chi$ as well as $\overline\chi$, which is only satisfied for $m\leq|\vec\alpha|$. Once we reflect $m\to|\alpha|-m$ we find the desired result~\eqref{eq:thmGinUE}.

\subsection{Restriction to first derivatives} \label{subsec:GinUEFirstDeriv}

For the case of $l$ first derivatives and $h$ non-differentiated factors, we have 
\begin{equation*}
    \vec{\alpha} = (1,\dots,1,0,\dots,0),
\end{equation*}
an integer vector with $l + h = k$ entries.
In this case, the Kostka numbers compute the dimensions of irreducible representations of the symmetric group $S_l$ which have a determinantal expression in terms of the Vandermonde determinant~\cite[\S 4.1]{FultonHarris2004},
\begin{equation}
    K_{\vec{\lambda},\vec{\alpha}} =  \frac{l!\Delta_k(\widehat{\lambda})}{\widehat{\lambda}!} \eqqcolon f_{\vec{\lambda}}\ .
\end{equation}
When we plug this into~\eqref{eq:thmGinUE}  we find
\begin{equation}
    \label{eq:corGinUEZerothFirstPrelim}
    \widehat{\mathcal{Z}}_{\vec\alpha}^{\rm (Gin)}(\chi)
     = \frac{e^{k|\chi|^2}}{\pi^k l!^2}  \sum_{m=0}^l |\chi|^{2m} \sum_{\vec{\nu} \vdash (l-m),\ l(\vec\nu)\leq k} \frac{1}{\widehat{\nu}!}
\left[ \sum_{\vec{\lambda} \vdash l,\ l(\vec\lambda)\leq k} f_\lambda^2 \, t_{\vec{\nu}}(\widehat{\lambda}) \right]^2\ .
\end{equation}
With this expression, it is not obvious how to calculate an anlytic continuation in the parameters $k$ and $l$ or $h$.
Our aim is to allow at least analytic continuation in $k$ at fixed $l$.
As $l(\vec\nu) \leq l$ in \eqref{eq:corGinUEZerothFirstPrelim}, the length limit on these partitions is not a real $k$ dependence,
but the lengths of shifted sequences like $\widehat{\nu}_j=\nu_{k-j+1} + j - 1$ depend on $k$.
The lower $h$ entries of the shifted parts have the simple form $\widehat{\nu}_j = j-1$ for $1\leq j \leq h$.
Writing $\widehat{\nu}!$ explicitly, we see that it is convenient to separate the lower and the upper entries,
\begin{equation}
    \label{eq:LowerFactorialsToG}
    \widehat{\nu}! = \prod_{j=1}^k (\nu_{k-j+1} + j - 1)! = \Big( \prod_{j=0}^{h-1} j! \Big) \prod_{j=1}^l (\nu_{l-j+1} + h + j - 1)!
        = G(h+1) \prod_{j=1}^l (\nu_j + k - j)!\ ,
\end{equation}
where $G$ is the Barnes $G$ function, and in the last equality we also replaced $j$ by $l-j+1$.
For the factorial Schur polynomial, we can use that also $l(\vec\lambda) \leq l$, so $\widehat{\nu}_j = \widehat{\lambda}_j = j-1$ for $1\leq j \leq h$.
Therefore in the definition of the factorial Schur polynomials~\eqref{eq:DefineFactorialSchur}, the numerator determinant expression is lower-triangular
in its upper $h$ rows. On the corresponding $h$ diagonal entries, we find $0!,1!,\ldots,(h-1)!$, so we simplify as follows,
\begin{equation}
\begin{aligned}
    \label{eq:GinUEFirstFacSchSimplPart}
    t_{\nu}(\widehat{\lambda}) 
        &= \frac{1}{\Delta_k(\widehat{\lambda})} \det\left[\frac{(\lambda_{k-a+1} + a - 1)!}{(\lambda_{k-a+1} + a - \nu_{k-b+1} - b)!}\right]_{a,b=1,\dots,k} \\
        &= \frac{1}{\Delta_k(\widehat{\lambda})} \det\left[\frac{(\lambda_{l-a+1} + h + a - 1)!}{(\lambda_{l-a+1} + a - \nu_{l-b+1} - b)!}\right]_{a,b=1,\dots,l} \prod_{j=0}^{h-1} j!\ .
\end{aligned}
\end{equation}
We split the occurring Vandermonde determinant into three parts, a Vandermonde determinant of the final $l$ arguments, a two-index product of differences between the final $l$
and the initial $h$ arguments, and a Vandermonde determinant of the initial $h$ arguments. The last of these parts cancels the product extracted in~\eqref{eq:GinUEFirstFacSchSimplPart},
leaving us with
\begin{equation}
\begin{aligned}
    \label{eq:GinUEFirstFacSchSimpl}
    t_{\nu}(\widehat{\lambda})
        &= \frac{\det\left[\frac{(\lambda_{l-a+1} + h + a - 1)!}{(\lambda_{l-a+1} + a - \nu_{l-b+1} - b)!}\right]_{a,b=1,\dots,l}}{\Delta_l(\lambda_{k-h}+h,\ldots,\lambda_{1}+k-1) \prod_{j=1}^l \prod_{i=0}^{h-1} (\lambda_{l-j+1} + h+j-1 - i)}\\
        &= \frac{\det\left[\frac{(\lambda_{l-a+1} + h + a - 1)!}{(\lambda_{l-a+1} + a - \nu_{l-b+1} - b)!}\right]_{a,b=1,\dots,l}}{\Delta_l(\lambda_{k-h}+h,\ldots,\lambda_{1}+k-1)}
            \prod_{j=1}^l \frac{(\lambda_{l-j+1} + j-1)!}{(\lambda_{l-j+1} + h+j-1)!}\\
        &= \frac{\det\left[\frac{(\lambda_{l-a+1} + a - 1)!}{(\lambda_{l-a+1} + a - \nu_{l-b+1} - b)!}\right]_{a,b=1,\dots,l}}{\Delta_l(\lambda_l,\ldots,\lambda_{1}+l-1)}\ .
\end{aligned}
\end{equation}
In the last equality we inserted the factors of the product into the rows of the numerator determinant, and subtracted $h$ from each argument of the Vandermonde determinant.
We notice that the last line of~\eqref{eq:GinUEFirstFacSchSimpl} is the same as the definition of a factorial Schur polynomial of the $l$-shifted sequence (as opposed to $k$-shifted) of the partition $\vec\lambda$.
Therefore, for the remainder of this subsection, we change the semantics of $\widehat{\lambda}$, using $l$ in place of $k$ in~\eqref{lambda.hat.rel}, i.e.\ $\widehat{\lambda}_j = \lambda_{l-j+1}+j-1$, and likewise $\widehat{\nu}_j = \nu_{l-j+1}+j-1$.
Alongside comes the explicit change in the length restrictions on the partitions to $l$,
as we substitute~\eqref{eq:GinUEFirstFacSchSimpl} and~\eqref{eq:LowerFactorialsToG} in~\eqref{eq:corGinUEZerothFirstPrelim},

\begin{equation}
    \label{eq:corGinUEZerothFirst}
    \widehat{\mathcal{Z}}_{\vec\alpha}^{\rm (Gin)}(\chi)
     = \frac{e^{k|\chi|^2}}{\pi^k l!^2 G(h+1)} \sum_{m=0}^l |\chi|^{2m} \sum_{\vec{\nu} \vdash (l-m),\ l(\vec\nu)\leq l} \frac{1}{\prod_{j=1}^l (\nu_j+k-j)!}
\left[ \sum_{\vec{\lambda} \vdash l,\ l(\vec\lambda)\leq l} f_\lambda^2 \, t_{\vec{\nu}}(\widehat{\lambda}) \right]^2\ ,
\end{equation}
leading to~\eqref{eq:corGinUEZerothFirstExplicit} except for the simplification of the highest and next-to-highest coefficients.

For the highest coefficient, it is $\vec{\nu} = \vec{0}$ and $t_{(0)}(\widehat{\lambda}) = 1$.
This follows from the fact that the numerator in definition~\eqref{eq:DefineFactorialSchur} of factorial Schur polynomials becomes a Vandermonde determinant for $\widehat{\nu}= (0,1,\dots,l-1)$.
So the sum over $\vec{\lambda} \vdash l$ in~\eqref{eq:corGinUEZerothFirst} becomes a sum over the squares of the dimensions of irreducible representations of $S_l$ which is equal to $l!$ by standard representation theory.
Therefore, the $m=l$ term is equal to $e^{k|\chi|^2}/(\pi^k\prod_{j=1}^{k-1}j!)$, where the denominator product results from multiplying $G(h+1)$ by the
$k$-dependent product in \eqref{eq:corGinUEZerothFirst}.

For the next-to-highest coefficient, we will show that $t_{(1)}(\widehat{\lambda}) = l$.
From the definition~\eqref{eq:DefineFactorialSchur} we have
\begin{equation}
    t_{(1)}(\widehat{\lambda})  = \frac{1}{\Delta_k(\widehat{\lambda})}
        \det\begin{bmatrix}
            1 & \widehat{\lambda}_1 & \widehat{\lambda}_1 (\widehat{\lambda}_1 - 1) & \dots & \frac{\widehat{\lambda}_1!}{(\widehat{\lambda}_1 - (l-2))!} & \frac{\widehat{\lambda}_1!}{(\widehat{\lambda}_1 - l)!} \\
            1 & \vdots & \vdots & \ddots & \vdots & \vdots \\
            1 & \widehat{\lambda}_l & \widehat{\lambda}_l (\widehat{\lambda}_l - 1) & \dots & \frac{\widehat{\lambda}_l!}{(\widehat{\lambda}_l - (l-2))!} & \frac{\widehat{\lambda}_l!}{(\widehat{\lambda}_l - l)!}
        \end{bmatrix} 
\end{equation}
By linear combination of the first $l-1$ columns we can replace the matrix entries to the monomials $\widehat{\lambda}_a^{b-1}$ for $b=1,\dots,l-1$ by column operations.
However, the last column can not be reduced to a simple power but contains $\widehat{\lambda}_i^l$ and $\widehat{\lambda}_i^{l-1}$ so we have
\begin{equation}
\begin{aligned}
    t_{(1)}(\widehat{\lambda}) & = \frac{1}{\Delta_l(\widehat{\lambda})}
        \det\begin{bmatrix}
            1 & \widehat{\lambda}_1 & \widehat{\lambda}_1^2 & \dots & \widehat{\lambda}_1^{l-2} & \widehat{\lambda}_1^l - \frac{1}{2}l(l-1) \widehat{\lambda}_1^{l-1} \\
            1 & \vdots & \vdots & \ddots & \vdots & \vdots \\
            1 & \widehat{\lambda}_l & \widehat{\lambda}_l^2 & \dots & \widehat{\lambda}_l^{l-2} & \widehat{\lambda}_l^l - \frac{1}{2}l(l-1) \widehat{\lambda}_l^{l-1}
        \end{bmatrix} \\
        & = \frac{1}{\Delta_l(\widehat{\lambda})} \left(\det
        \begin{bmatrix}
            1 & \widehat{\lambda}_1 & \widehat{\lambda}_1^2 & \dots & \widehat{\lambda}_1^{l-2} & \widehat{\lambda}_1^l \\
            1 & \vdots & \vdots & \ddots & \vdots & \vdots \\
            1 & \widehat{\lambda}_l & \widehat{\lambda}_l^2 & \dots & \widehat{\lambda}_l^{l-2} & \widehat{\lambda}_l^l
        \end{bmatrix}
        - \frac{1}{2}l(l-1) \Delta_l(\widehat{\lambda}) \right) \\
        & = \frac{1}{\Delta_l(\widehat{\lambda})} \left(\Delta_l(\widehat{\lambda}) \sum_{j=1}^l \widehat{\lambda}_j
        - \frac{1}{2}l(l-1) \Delta_l(\widehat{\lambda}) \right)  = l.
\end{aligned}
\end{equation}
The second equality follows from the multilinearity of the determinant.
The first term in the next-to-last equality can be understood from the fact that the determinant is an antisymmetric polynomial of $\widehat{\lambda}_i$ and therefore factors into $\Delta_l(\widehat{\lambda})$ times a symmetric polynomial.
The symmetric polynomial must be a monic linear function which is unique and given by the trace $\sum_{j=1}^l \widehat{\lambda}_j=|\vec\lambda|+l(l-1)/2$ with $|\vec\lambda|=l$ for $\vec\lambda\vdash l$, and $l(l-1)/2$ the sum of the shifts which distinguish $\widehat{\lambda}$ and $\vec{\lambda}$.
In total, the $m=l-1$ term is $ l^2|\chi|^{2(l-1)}e^{k|\chi|^2} / [\pi^k\,k!\prod_{j=0}^{k-2} j! ]$ which is shown in terms of the Barnes $G$ function in~\eqref{eq:corGinUEZerothFirstExplicit}.

\subsection{One higher-order derivative}
\label{apx: ex one deriv GinUE}

In the special case of a single $n$th derivative, we have $\vec{\alpha} = (n,0,0,\dots)$ and the Kostka number $K_{\vec{\lambda},\vec{\alpha}}$ vanishes unless $\vec{\lambda}=(n)$, in which case $\widehat{\lambda}=(0,1,2,3,\dots,k-2,n+k-1)$.
Therefore, equation~\eqref{eq:thmGinUE} collapses to
\begin{equation}
    \begin{aligned}
    \widehat{\mathcal{Z}}_{\vec\alpha}^{\rm (Gin)}(\chi)
     = \frac{n!^2 e^{k|\chi|^2}}{\pi^k} \sum_{m=0}^{n} |\chi|^{2m} \sum_{\substack{\vec{\nu} \vdash n-m,\ l(\vec{\nu}) \leq k}} 
    \frac{1}{\widehat{\nu}!}\left(\det\left[\frac{1}{(\widehat{\lambda}_a - \widehat{\nu}_b)!}\right]_{a,b=1,\ldots,k} \right)^2\ .
    \end{aligned}
\end{equation}
The determinant vanishes unless $\vec{\nu}$ is the one-part partition $\vec{\nu} = (n-m)$ with $\widehat{\nu}=(0,1,2,3,\dots,k-2,n-m+k-1)$ because of $\widehat{\lambda}_j=j-1$ for all $j=1,\ldots,k-1$. Therefore,
\begin{equation}
    \begin{aligned}
    \widehat{\mathcal{Z}}_{\vec\alpha}^{\rm (Gin)}(\chi)
     = \frac{n!^2 e^{k|\chi|^2}}{\pi^k} \sum_{m=0}^{n} |\chi|^{2m}  
    \frac{1}{(n-m+k-1)!\prod_{j=0}^{k-2}j!} \left(\det\left[\begin{array}{c|c}\frac{1}{(a-b)!} & \frac{1}{(a-n+m-k)!}\\\hline  \frac{1}{(n+k-b)!} & \frac{1}{m!} \end{array}\right]_{a,b=1}^{k-1} \right)^2 .
    \end{aligned}
\end{equation}
The matrix inside the determinant is a lower-triangular matrix with diagonal entries $(0!,\dots,0!,\frac{1}{m!})$, and therefore it is
\begin{equation}
    \label{eq:GinUEOneHigherSum}
    \widehat{\mathcal{Z}}_{\vec\alpha}^{\rm (Gin)}(\chi)
     = \frac{n!^2 e^{k|\chi|^2}}{\pi^k\prod_{j=0}^{k-2}j!} \sum_{m=0}^{n}\frac{|\chi|^{2m}  
}{(n-m+k-1)!m!^2}\ .
\end{equation}
For $k=1$, this result is proportional to a Laguerre polynomial. More generally,
we employ the truncated Laguerre polynomial~\eqref{Lag.def}. Then, \eqref{eq:GinUEOneHigherSum} can be written as in~\eqref{high.der.Gin}.

\subsection{Two higher-order derivatives}
\label{apx: ex two derivs GinUE}

In the case of a single $n_1$st derivative and $n_2$nd derivative, we have $\vec{\alpha}=(n_1,n_2,0,\dots)$. Then, $K_{\vec{\lambda},\vec{\alpha}}$ vanishes unless $l(\vec{\lambda}) \leq 2$ and $\lambda_2 \leq n_2$.
In other words, the partition lies in a one parameter family of partitions $\vec{\lambda}(s) = (n_1+n_2-s, s)$ with $s\in\{0,\dots,n_2\}$ and the shifted partition becomes $\widehat{\lambda}(s) = (0,1,2,\dots,s+k-2, n_1+n_2-s+k-1)$, meaning~\eqref{eq:thmGinUE} becomes
\begin{equation}
\begin{aligned}
    \widehat{\mathcal{Z}}_{\vec\alpha}^{\rm (Gin)}(\chi)= \frac{n_1!^2 n_2!^2 e^{k|\chi|^2}}{\pi^k} \sum_{m=0}^{n_1+n_2} |\chi|^{2m}
    \sum_{\substack{\vec{\nu} \vdash n_1+n_2-m \\l(\vec{\nu}) \leq k}}\!\! \widehat{\nu}!\left( \sum_{s=0}^{n_2}\frac{
    \det\left[\binom{\widehat{\lambda}_a}{\widehat{\nu}_b}\right]_{a,b=1,\ldots,k} }{(n_1+n_2-s+k-1)!(s+k-2)!\prod_{i=0}^{k-3}i!}\right)^2\!\! .
\end{aligned}
\end{equation}
As before, the determinant vanishes unless $l(\vec{\nu})\leq 2$. Therefore, also the partition $\vec{\nu}$ can be parametrised as $\vec{\nu}(r) = (n_1+n_2-m-r,r)$ for $r\in \{0,\dots,\floor{\frac{1}{2}(n_1+n_2-m)}\}$, where we use the floor function $\lfloor.\rfloor$.
This results in the double sum
\begin{equation}
\begin{aligned}
    \widehat{\mathcal{Z}}_{\vec\alpha}^{\rm (Gin)}(\chi)=& n_1!^2 n_2!^2 \Big(\frac{e^{|\chi|^2}}{\pi}\Big)^k \sum_{m=0}^{n_1+n_2} |\chi|^{2m}
    \sum_{r=0}^{\floor{\frac{1}{2}(n_1+n_2-m)}} \frac{(n_1+n_2-m-r+k-1)!(r+k-2)!} {\prod_{j=0}^{k-3}j!} \\
    &\times \left(
    \sum_{s=0}^{n_2} \frac{\mathfrak{X}_m^{(k,n_1,n_2)}(r,s)}{(n_1+n_2-s+k-1)! (s+k-2)!}
        \right)^2\ ,
\end{aligned}
\end{equation}
with the $k\times k$ determinant
\begin{equation}
    \mathfrak{X}_m^{(k,n_1,n_2)}(r,s) = \det
    \left[\begin{array}{c|c|c}
        \binom{a-1}{b-1} & 0 & 0 \\ \hline
        \binom{s+k-2}{b-1}  & \binom{s+k-2}{r+k-2} & \binom{s+k-2}{n_1+n_2-m-r+k-1} \\\hline
        \binom{n_1+n_2-s+k-1}{b-1}  & \binom{n_1+n_2-s+k-1}{r+k-2} & \binom{n_1+n_2-s+k-1}{n_1+n_2-m-r+k-1}
    \end{array}\right]_{a,b=1,\ldots k-2} .
\end{equation}
The upper left $(k-2) \times (k-2)$ subdeterminant is $1$.
The lower right $2 \times 2$ subdeterminant can be computed explicitly to get
\begin{equation}
\begin{aligned}
     \widehat{\mathcal{Z}}_{\vec\alpha}^{\rm (Gin)}(\chi)=&  \frac{n_1!^2 n_2!^2 e^{k|\chi|^2}}{\pi^k} \sum_{m=0}^{n_1+n_2} |\chi|^{2m}
    \sum_{r=0}^{\floor{\frac{1}{2}(n_1+n_2-m)}} \frac{(n_1+n_2-m-r+k-1)!(r+k-2)!} {\prod_{i=0}^{k-3}i!} \\
    &\times \Big(
    \sum_{s=0}^{n_2} \frac{
\binom{s+k-2}{r+k-2}\binom{n_1+n_2-s+k-1}{n_1+n_2-m-r+k-1} - \binom{s+k-2}{n_1+n_2-m-r+k-1} \binom{n_1+n_2-s+k-1}{r+k-2} 
}{(n_1+n_2-s+k-1)! (s+k-2)!}
        \Big)^2 ,
\end{aligned}
\end{equation}
which is equal to~\eqref{2nd.high.der.Gin}.

\end{appendix}

\end{document}